\journal{Theoretical Computer Science}
\newcommand{\FV}[1]{{\sf FV}(#1)}
\newcommand{\FTV}[1]{{\sf FTV}(#1)}
\newcommand{\lb}{\lambda}
\newcommand{\Lb}{\Lambda}
\newcommand{\llb}{\lb\hspace{-5.5pt}\lb}
\newcommand{\pair}[2]{\langle#1,#2\rangle}
\newcommand{\ipair}[2]{\langle #2\rangle_{#1=1,2}} 
\newcommand{\ppair}[2]{\langle\hspace{-3pt}\langle#1,#2\rangle\hspace{-3pt}\rangle}
\newcommand{\proj}[2]{#2#1}
\newcommand{\pproj}[2]{#2@#1} 
\newcommand{\injn}[4]{\mathsf{in}_{#1}(#2,#3,#4)} 
\newcommand{\ainjn}[4]{\underline{\mathtt{in}}_{#1}(#2,#3,#4)}
\newcommand{\aiinjn}[4]{\underline{\mathtt{IN}}_{#1}(#2,#3,#4)} 
\newcommand{\ainjsymb}{\underline{\mathsf{in}}} 
\newcommand{\aiinjsymb}{\underline{\mathsf{IN}}} 
\newcommand{\case}[6]{\mathsf{case}(#1,#2.#3,#4.#5,#6)}
\newcommand{\casesymb}{\mathsf{case}} 
\newcommand{\cse}[5]{\mathsf{case}(#1,#2.#3,#4.#5)} 
\newcommand{\acasesymb}{\underline{\mathtt{case}}} 
\newcommand{\acase}[6]{\underline{\mathtt{case}}(#1,#2.#3,#4.#5,#6)} 
\newcommand{\accase}[6]{\underline{\mathtt{CASE}}(#1,#2.#3,#4.#5,#6)} 
\newcommand{\accasesymb}{\underline{\mathtt{CASE}}} 
\newcommand{\abort}[2]{\mathsf{abort}(#1,#2)}
\newcommand{\aabortsymb}{\underline{\mathtt{abort}}} 
\newcommand{\aabort}[2]{\underline{\mathtt{abort}}(#1,#2)} 
\newcommand{\aabbortsymb}{\underline{\mathtt{ABORT}}} 
\newcommand{\aabbort}[2]{\underline{\mathtt{ABORT}}(#1,#2)} 
\newcommand{\connective}{\bigcirc} 
\newcommand{\dvee}{\underline{\vee}} 
\newcommand{\dperp}{\underline{\perp}} 
\newcommand{\betai}{\beta_{\supset}}
\newcommand{\betac}{\beta_{\wedge}}
\newcommand{\betad}{\beta_{\vee}} 
\newcommand{\betaall}{\beta_{\forall}} 
\newcommand{\betas}{\beta_{\connective}} 
\newcommand{\pii}{\pi_{\supset}}
\newcommand{\pic}{\pi_{\wedge}}
\newcommand{\pid}{\pi_{\vee}}
\newcommand{\pia}{\pi_{\perp}}
\newcommand{\pis}{\pi_{\connective}}
\newcommand{\abi}{\varpi_{\supset}}
\newcommand{\abc}{\varpi_{\wedge}}
\newcommand{\abd}{\varpi_{\vee}}
\newcommand{\aba}{\varpi_{\perp}}
\newcommand{\abs}{\varpi_{\connective}}
\newcommand{\etas}{\eta_{\connective}}
\newcommand{\etai}{\eta_{\supset}}
\newcommand{\etac}{\eta_{\wedge}}
\newcommand{\etad}{\eta_{\vee}}
\newcommand{\etaall}{\eta_{\forall}}
\newcommand{\ipc}{\mathbf{IPC}}
\newcommand{\fat}{{\mathbf{F}}_{\mathbf{at}}}
\newcommand{\f}{\mathbf{F}}
\newcommand{\am}[1]{#1^{\circ}} 
\newcommand{\om}[1]{#1^{\lozenge}} 
\newcommand{\cm}[1]{#1^{\star}} 
\newcommand{\rpm}[1]{#1^{\bullet}} 
\newcommand{\too}{\twoheadrightarrow}
\newtheorem{defn}{Definition}
\newtheorem{lem}{Lemma}
\newtheorem{cor}{Corollary}
\newtheorem{prop}{Proposition}
\newtheorem{thm}{Theorem}
\newtheorem{con}{Convention}
\begin{document}

\begin{frontmatter}



\title{How to avoid the commuting conversions of $\ipc$}


\author[label1]{Jos\'e Esp\'\i{}rito Santo\fnref{label2}\corref{cor1}}
\ead{jes@math.uminho.pt}
\affiliation[label1]{
       organization={Centro de Matemática},
           addressline={Universidade do Minho}, 
            country={Portugal}
}
\fntext[label2]{The author was partially financed by Portuguese Funds through FCT (Fundação para a Ciência e a Tecnologia) within the Projects UIDB/00013/2020 and UIDP/00013/2020}
        

 \cortext[cor1]{Corresponding author}
 
\author[label3,label4]{Gilda Ferreira\fnref{label5}}
\ead{gmferreira@fc.ul.pt}
\fntext[label5]{The author acknowledges the support of Fundação para a Ciência e a Tecnologia under the projects [UIDB/04561/2020, UIDB/00408/2020 and UIDP/00408/2020] and is also grateful to Centro de Matem\'{a}tica, Aplica\c{c}\~{o}es Fundamentais e Investiga\c{c}\~{a}o Operacional and to LASIGE - Computer Science and Engineering Research Centre (Universidade de Lisboa). }
\affiliation[label3]{organization={DCeT},
	addressline={Universidade Aberta}, 
    city={Lisboa},
	postcode={1269-001}, 
	state={Lisboa},
	country={Portugal}}

\affiliation[label4]{organization={CMAFcIO},
	addressline={Faculdade de Ciências, Universidade de Lisboa}, 
	postcode={1749-016}, 
	state={Lisboa},
	country={Portugal}}

\begin{abstract}
Since the observation in 2006 that it is possible to embed IPC into the atomic polymorphic $\lb$-calculus (a predicative fragment of system $\f$ with universal instantiations restricted to atomic formulas) different such embeddings appeared in the literature. All of them comprise the Russell-Prawitz translation of formulas, but have different strategies for the translation of proofs. Although these embeddings preserve proof identity, all fail in delivering preservation of reduction steps. In fact, 
they translate the commuting conversions of IPC to $\beta$-equality, or to other kinds of reduction or equality generated by new principles added to system $\f$. The cause for this is the generation of redexes by the translation itself.
In this paper, 
we present an embedding of $\ipc$ into atomic system $\f$, still based on the same translation of formulas, but 
which maps commuting conversions to syntactic identity, while simulating the other kinds of reduction steps present in $\ipc$ by 
$\beta\eta$-reduction. 
In this sense the translation achieves a truly commuting-conversion-free image of $\ipc$ in atomic system $\f$.

\end{abstract}



\begin{keyword}
	 Atomic polymorphism \sep Commuting conversions \sep Intuitionistic propositional calculus \sep System $\f$ \sep Russell-Prawitz translation 

 \MSC 03F07 \sep 03F25 \sep 03B16 \sep 03B20 \sep 03B40

\end{keyword}

\end{frontmatter}



\section{Introduction}\label{sec:intro}

The renewed interest in the interpretation of Intuitionistic Propositional Calculus ($\ipc$) into second-order logic (or system $\f$) has several motivations: one is to make use of small, predicative fragments of the interpreting logic, like the subsystem $\fat$ with universal instantiation restricted to atomic formulas \cite{Ferreira2006,FerreiraFerreira2013}; another is to obtain an image of $\ipc$ with better ``structural'' properties, like the elimination of commuting conversions \cite{FerreiraFerreira2009}. In fact, the latter goal has been pursued through several different embeddings of $\ipc$ into $\fat$, just making use of atomic polymorphism: in addition to the one in \cite{FerreiraFerreira2009,FerreiraFerreira2013}, based on \emph{instantiation overflow} \cite{DinisFerreira2016}, we are aware of an alternative in \cite{PistoneTranchiniPetrolo2021}, based on a more efficient procedure for instantiation overflow, besides our own proposal in \cite{EspiritoSantoFerreira2020}, based on the admissibility of the elimination rules for disjunction and absurdity. 

System $\fat$ is, in a way, a radically satisfying answer to the quest for a predicative interpreting fragment of $\f$ (but see the discussion in \cite{EspiritoSantoFerreira2021} about the ``timing'' of atomization, or the limitations pointed out in \cite{PistoneTranchiniPetrolo2021} to interpretations based on atomic polymorphism); on the other hand, the second goal of finding an image of $\ipc$ free from commuting conversions has still not been attained by the three referred embeddings into $\fat$. While all of them agree on the (Russell-Prawitz) translation of formulas, they differ on how to translate derivations. While all of them give a satisfying simulation of the $\beta\eta$-reductions of $\ipc$, they are still not totally satisfying regarding the simulation of commuting conversions: in some cases, the translation of redex and \emph{contractum} are related only by $\beta$-equality (as in \cite{EspiritoSantoFerreira2020}), or even by an equality that requires additional axioms (as in \cite{PistoneTranchiniPetrolo2021}).

Lack of preservation of reduction is related to the problem of generation of \emph{administrative} redexes, i.e., redexes that do not result from the translation of redexes already present in the source derivation, but rather are generated by the translation itself. The translations in \cite{PistoneTranchiniPetrolo2021} and \cite{EspiritoSantoFerreira2020} made substantial progress considering the previous problem, in comparison with the pioneer translation in \cite{Ferreira2006,FerreiraFerreira2013}, with the translation by the present authors \cite{EspiritoSantoFerreira2020} being the more economic one in terms of the creation of administrative redexes, which means that it generates smaller proofs, which require lesser reduction steps to normalize.

Even so, our translation in \cite{EspiritoSantoFerreira2020} 
does not completely avoid administrative redexes, and for this reason it suffers from a defect felt even more acutely with the other two translations: given a commuting conversion $M\to N$ in $\ipc$, sometimes we have to reduce some administrative redexes in the images, $\am{M}\to_{\beta}^* M'$ and $\am{N}\to_{\beta}^* N'$, before we can see the reduction $M'\to_{\beta}^+ N'$ that corresponds to the source conversion. The problem is that, in the overall ``simulation'', bridging $\am{M}$ and $\am{N}$, the reduction steps in $\am{N}\to^* N'$ go in the ``wrong direction'' and only a $\beta$-equality results: $\am{M}=_{\beta}\am{N}$.


In this paper we will optimize our translation $\am{(\cdot)}$ from \cite{EspiritoSantoFerreira2020}, at the level of generation of administrative redexes, to obtain a translation $\om{(\cdot)}$ of $\ipc$ into $\fat$ that, while preserving $\beta\eta$-reduction, has the following property: if $M\to N$ is a commuting conversion in $\ipc$, then $\om{M}$ and $\om{N}$ are the same proof in $\fat$. If translation $\am{(\cdot)}$ achieved a quantitative improvement over the pioneer translation in \cite{Ferreira2006,FerreiraFerreira2013}, through a more parsimonious generation of administrative redexes, this time the even more parsimonious translation $\om{(\cdot)}$ reaches a qualitative jump, since it gives a representation of $\ipc$ truly free from commuting conversions. See Fig.~\ref{fig:summary} for a visualization of our main result in the context of other translations of $\ipc$.

\begin{figure}[t]\caption{The main result of the present paper in context. Simulation of a commuting conversion $M\to_R N$ in $\ipc$. System $\f$ to the right of dashed line. System $\fat$ between the dashed and dotted lines. Comparison of map $\cm{(\cdot)}$ (proposed in \cite{FerreiraFerreira2009,FerreiraFerreira2013}), map $\am{(\cdot)}$ (proposed in \cite{EspiritoSantoFerreira2020}), map $\rpm{(\cdot)}$ (proposed in \cite{EspiritoSantoFerreira2021}), and map $\om{(\cdot)}$ (proposed here). Some reductions are ``administrative'', as argued in \cite{EspiritoSantoFerreira2020}. The reductions $\cm{N}\to^*Q$ and $\am{N}\to^* Q$ go in the wrong direction, causing $M\to N$ to be mapped to the $\beta$-equalities $\cm{M}=_{\beta}\cm{N}$ and $\am{M}=_{\beta}\am{N}$. The map $\rpm{(\cdot)}$ simulates $M\to N$ with the help of atomization reductions $\rho\varrho$ added to $\f$ in \cite{EspiritoSantoFerreira2021}. The map proposed here makes the commuting conversion $M\to N$ disappear: $\om{M}=\om{N}$.} \label{fig:summary}
\[
\resizebox{12cm}{!}{\xymatrix{
		& {} \ar@{--}[ddddddd]&&&&&&&&{}\ar@{..}[dddddd]&	\\
		M  \ar[dddd]_R && \cm{M}\ar@{->>}[rr]_{\beta(admin)} && \am{M} \ar@{->>}[rrd]_{\beta(admin)} \ar@/^2pc/@{<<->>}[rrrrdd]^{\beta\eta}&&&&&&\rpm{M}\ar[dddd]^{\beta\rho\varrho}\ar@/_2pc/[llllll]_{\rho\varrho}\\ 
		&&&&&& P \ar@{->>}[dd]^{\beta}\\
		&&&&&&&&*+[F]{\om{M}=\om{N}}\\
		&&&&&& Q\\
		N  && \cm{N}\ar@{->>}[rr]_{\beta(admin)} && \am{N} \ar@{->>}[rru]^{\beta(admin)} \ar@/_2pc/@{<<->>}[rrrruu]_{\beta\eta}&&&&&&\rpm{N}\ar@/^2pc/[llllll]^{\rho\varrho}
		\\
		\ipc && \fat &&&&&&&{}&\\
		&{}&&&&&&&&\f&
	}
}
\]
\end{figure}
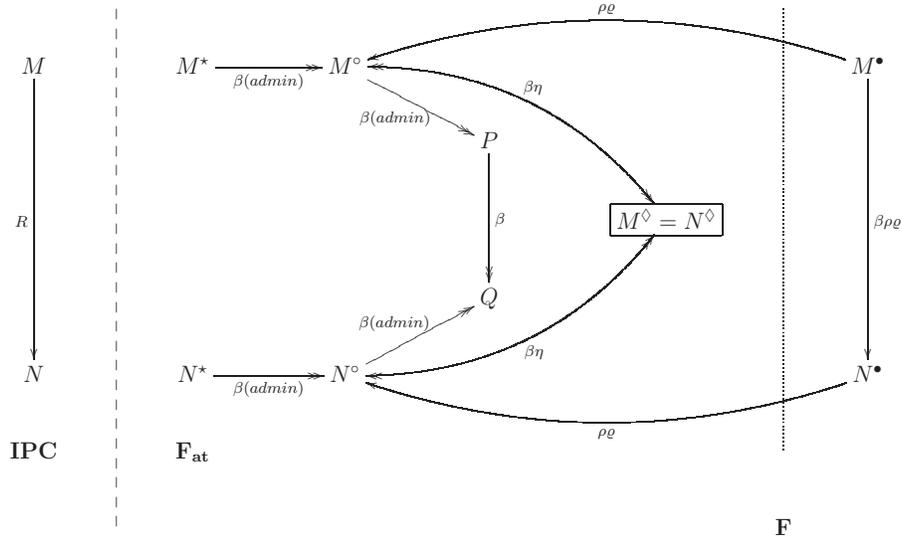

%

Technically, the optimization is possible because we can identify in the definition of the previous translation $\am{(\cdot)}$ the places responsible for the generation of administrative redexes, and program the translations slightly differently, so that it reduces ``at compile time'' any possible redex thus generated. In fact, this is just the initial idea: if applied naively, the resulting translation would reduce too much ``on the fly'' and identify much more than just the commuting conversions. A complementary idea is needed to protect source redexes from being unduly reduced - but this second idea and such technical details can be better explained and appreciated later on.

The paper is organized as follows. In Section \ref{sec:systems} we recall systems $\ipc$ and $\fat$. The new translation is introduced and discussed in Section \ref{sec:optimized}. In Section \ref{sec:simulation} we prove our main result, the simulation theorem. A part of the proof, concerning one case of commuting conversion, is given separately, in Section \ref{sec:special}. Section \ref{sec:discussion} discusses the simulation theorem, in particular identifies a subset of full $\beta\eta$-reduction of $\ipc$ that is strictly preserved by the translations proposed here. Section \ref{sec:final} concludes the paper.
\section{Systems}\label{sec:systems}

In this section the two systems used in the paper are recalled: the full intuitionistic proposicional calculus ($\ipc$) and the atomic polymorphic system ($\fat$), both presented in the (operational) $\lambda$-calculus style.   
\subsection{System $\ipc$}

Types/formulas are given by
$$
A,B,C\,::=\,X\,|\,\perp\,|\,A\supset B\,|\,A\wedge B\,|\,A\vee B
$$
where $X$ ranges over a denumerable set of \emph{type variables}. We define $\neg A:=A\supset\perp$.

Proof terms:
$$
{\small{
\begin{array}{rcll}
M,N,P,Q&::=&x&\textrm{(assumption)}\\
&|&\lb x^{A}.M\,|\,MN&\textrm{(implication)}\\
&|&\pair MN\,|\,\proj 1M\,|\,\proj 2M&\textrm{(conjunction)}\\
&|&\injn 1MAB\,|\,\injn 2NAB\,|\,\case M{x^A}P{y^B}QC&\textrm{(disjunction)}\\
&|&\abort MA&\textrm{(absurdity)}
\end{array}}}
$$
where $x$ ranges over a denumerable set of term (or assumption) variables. The type annotations in the proof terms $\injn iMAB$, $\case M{x^A}P{y^B}QC$ and $\abort MA$ are often omitted when no confusion arises. Also the pair $\pair{P_1}{P_2}$ is sometimes written as $\ipair i{P_i}$.

In types, all occurrences of type variables are free. In proof terms, there are occurrences of type variables, all of them free, and occurrences of term variables, which are free unless they belong to the scope of a binder for the occurring variable - in which case the occurrence is said to be bound. In $\lb x^A.M$, $\lb x^A$ is a binder of $x$ with scope $M$; and in $\case M{x^A}P{y^B}QC$ there is the binder $x^A$ of variable $x$ with scope $P$ and the binder $y^B$ of variable $y$ with scope $Q$. From these prescriptions, it is routine to define, by recursion on $P$, $\FV P$ (resp.~$\FTV P$), the set of term (resp.~type) variables with free occurrences in $P$. Often we write $x\in M$ (resp. $X\in M$) to mean $x\in \FV M$ (resp.~$X\in \FTV M$). We work modulo $\alpha$-equivalence, in particular we assume we can rename the bound term variables when necessary. 

The typing/inference rules are in Fig.~\ref{fig:typing}. $\Gamma$
denotes a set of \emph{declarations} $x:A$ such that a variable is
declared at most one time in $\Gamma$.
\begin{figure}[t]\caption{Typing/inference rules}\label{fig:typing}
$$
\begin{array}{c}
\infer[Ass]{\Gamma,x:A\vdash x:A}{}\\ \\
\infer[\supset I]{\Gamma\vdash\lb x^{A}.M:A\supset B}{\Gamma,x:A\vdash M:B}\qquad
\infer[\supset E]{\Gamma\vdash MN:B}{\Gamma\vdash M:A\supset B&\Gamma\vdash N:A}\\ \\
\infer[\wedge I]{\Gamma\vdash \pair MN:A\wedge B}{\Gamma\vdash
M:A&\Gamma\vdash N:B}\qquad
\infer[\wedge E1]{\Gamma\vdash \proj 1M:A}{\Gamma\vdash M:A\wedge B}\qquad\infer[\wedge E2]{\Gamma\vdash \proj 2M:B}{\Gamma\vdash M:A\wedge B}\\ \\
\infer[\vee I1]{\Gamma\vdash\injn 1MAB:A\vee B}{\Gamma\vdash M:A}\qquad
\infer[\vee I2]{\Gamma\vdash\injn 2NAB:A\vee B}{\Gamma\vdash N:B}\\ \\
\infer[\vee E]{\Gamma\vdash\case M{x^A}P{y^B}QC:C}{\Gamma\vdash
M:A\vee B&\Gamma,x:A\vdash P:C&\Gamma,y:B\vdash Q:C}\\ \\
\infer[\perp E]{\Gamma\vdash\abort MA:A}{\Gamma\vdash M:\perp}
\end{array}
$$
\end{figure}

The reduction rules of $\ipc$ are given in Fig.~\ref{fig:reduction-rules}. See \cite{EspiritoSantoFerreira2021} for more details on the reduction rules, including the ``subject reduction'' property, which connects reduction of proof terms with normalization of typing/inference derivations.


As usual, given a reduction rule $R$ on the proof terms, we denote by $\rightarrow_R$ the compatible closure of $R$, and then $\rightarrow_R^+$, $\rightarrow_R^*$ and $=_R$ denote respectively the transitive closure, the reflexive-transitive closure, and the reflexive-symmetric-transitive closure of $\rightarrow_R$. 
\begin{figure}\caption{Reduction rules of $\ipc$}\label{fig:reduction-rules}
	Detour conversion rules:
	$$
	\begin{array}{rrcll}
		(\betai)&(\lb x.M)N&\to&[N/x]M&\\
		(\betac)&\proj i{\pair{M_1}{M_2}}&\to&M_i&\textrm{ ($i=1,2$)}\\
		(\betad)&\cse{\injn iM{A_1}{A_2}}{x_1^{A_1}}{P_1}{x_2^{A_2}}{P_2}&\to&[M/x_i]P_i&\textrm{ ($i=1,2$)}
	\end{array}
	$$
	Commuting conversion rules for disjunction (in the 2nd rule, $i\in\{1,2\}$):
	$$
	{\small{\begin{array}{rrcl}
		(\pii) & (\case{M}{x}{P}{y}{Q}{C\supset D})N & \to & \case{M}{x}{PN}{y}{QN}{D}\\
		(\pic) & \proj i{(\case{M}{x}{P}{y}{Q}{C_1\wedge C_2})} & \to & \case{M}{x}{\proj iP}{y}{\proj iQ}{C_i}\\ 
		(\pid) & \\
		\multicolumn{2}{c}{\case{\case{M}{x'}{P'}{y'}{Q'}{C\vee D}}{x^C}{P}{y^D}{Q}{E}}  & \to & \\ 
		\multicolumn{4}{c}{\case M{x'}{\case{P'}{x^C}{P}{y^D}{Q}{E}}{y'}{\case{Q'}{x^C}{P}{y^D}{Q}{E}}{E}}\\
		(\pia) & \abort{\case{M}{x}{P}{y}{Q}{\perp}}{C} & \to &\\
		\multicolumn{4}{c} {\case{M}{x}{\abort P C}{y}{\abort Q C}{C}}
	\end{array}}}
	$$
	
	Commuting conversion rules for absurdity (in the 2nd rule, $i\in\{1,2\}$):
	$$
	\begin{array}{rrcl}
		(\abi) & (\abort M{C\supset D})N & \to & \abort MD\\
		(\abc) & (\abort M{C_1\wedge C_2})i & \to & \abort M{C_i}\\ 
		(\abd) & \case{\abort M{C \vee D}}{x^C}P{y^D}Q{E} & \to & \abort ME \\
		(\aba) & \abort{\abort M{\perp}}C & \to & \abort MC
	\end{array}
	$$
	$\eta$-rules:
	$$
	\begin{array}{rrcll}
		(\etai)&\lb x.Mx&\to&M&(x\notin M)\\
		(\etac)&\pair{\proj 1M}{\proj 2M}&\to&M&\\
		(\etad)&\case M{x^A}{\injn 1x{A}{B}}{y^B}{\injn 2yAB}{A\vee B}&\to&M&
	\end{array}
	$$
\end{figure}

We let
$$
\begin{array}{rcl}
\beta&:=&\betai\cup\betac\cup\betad \,\,\,\textrm{(and similarly for $\eta$)}\\
\pi&:=&\pii\cup\pic\cup\pid\cup\pia \,\,\,\textrm{(and similarly for $\varpi$)}\\
\end{array}
$$

\subsection{System $\fat$}

The atomic polymorphic system $\fat$ \cite{FerreiraFerreira2013} is a predicative fragment of system $\f$ in which all universal instantiations have an atomic witness. We define system $\fat$ by saying what changes relatively to $\ipc$.

Regarding types, $\perp$ and $A\vee B$ are dropped, and the new form $\forall X.A$ is adopted. In $\forall X.A$, $\forall X$ is a binder of variable $X$ with scope $A$. So type variables can now have bound occurrences in types.

Regarding proof-terms, the constructions relative to $\perp$ and $A\vee B$ are dropped, and the new forms $\Lambda X.M$ and $MX$ are adopted. In $\Lambda X.M$, $\Lambda X$ is a binder of variable $X$ with scope $M$. So type variables can now have bound occurrences in proof terms. 

Again, we will write $x\in \FV M$ and $X\in \FTV M$, often abbreviated $x\in M$, $X\in M$, respectively; and we can also define $\FTV A$, the set of type variables with free occurrences in $A$, and write $X\in \FTV A$, abbreviated $X\in A$. We always work modulo $\alpha$-equivalence, which encompasses the renaming of both type and term variables, both in terms and in types.


Regarding typing rules, those relative to $\perp$ and $A\vee B$ are dropped, and two rules relative to $\forall X.A$ are adopted:
$$
\infer[\forall I]{\Gamma\vdash \Lambda X.M:\forall X.A}{\Gamma\vdash M:A}\qquad\infer[\forall E]{\Gamma\vdash MY:[Y/X]A}{\Gamma\vdash M:\forall X.A}
$$
where the proviso for $\forall I$ is: $X$ occurs free in no type in $\Gamma$. 


Regarding reduction rules, we drop commuting conversions (since they are relative to $\vee$ and $\perp$). What remains are the $\beta$ and $\eta$-rules (but we drop those relative to disjunction). For $\forall$, these are:
$$
\begin{array}{rrcll}
(\betaall)&(\Lb X.M)Y&\to&[Y/X]M&\\
(\etaall)&\Lb X.MX&\to&M&(X\notin M)\\
\end{array}
$$

We let
$$
\begin{array}{rcl}
\beta&:=&\betai\cup\betac\cup\betaall \,\,\,\textrm{(and similarly for $\eta$)}\\
\end{array}
$$


\section{The optimized embedding}\label{sec:optimized}

In this paper we present a new translation $\om{(\cdot)}$ of $\ipc$ into $\fat$, a variant of the translation $\am{(\cdot)}$ introduced in \cite{EspiritoSantoFerreira2020}. Both maps employ the so-called Russell-Prawitz translation of formulas, so they do not differ at this level. At the proof-term level, they are both based on the idea of using admissibility in $\fat$ of the elimination of disjunction and absurdity (rather than the idea of instantiation overflow \cite{FerreiraFerreira2009,FerreiraFerreira2013}). They start to differ in the precise constructions that witness such admissibility. 

We now recognize that: (1) the constructions $\acasesymb$ and $\aabortsymb$ employed by $\am{(\cdot)}$ are the cause of some \emph{administrative redexes}, redexes which are not the translation of redexes already present in the proof being translated, but which are rather generated by the translation $\am{(\cdot)}$ itself; and (2) some of those administrative redexes are generated in the course of the translation of some commuting conversions of $\ipc$, and their reduction causes the failure of $\am{(\cdot)}$ in delivering a simulation of such conversions \cite{EspiritoSantoFerreira2020}. 

The starting point for the new translation $\om{(\cdot)}$ is to employ \emph{optimized} versions $\accasesymb$ and $\aabbortsymb$ where such defects are avoided, by forcing the reduction ``at compile time'' of such unwanted redexes. To this end, we introduce variants of the elimination constructions of $\fat$, denoted  $\pproj iM$, $M@N$ and $M@Y$. But the adoption of such elimination constructors cannot be restricted to the definition of the new constructors $\accasesymb$ and $\aabbortsymb$, it has to be effected in a more general way - to restore some coherence in the definition of $\om{(\cdot)}$ without which new defects would be introduced in the simulation. This will be discussed later, in the second subsection. Let us start with the technical development.
 

\subsection{The new translation}

The \emph{optimized translation} $\om{(\cdot)}$  of $\ipc$ into $\fat$ comprises the Russell-Prawitz translation of formulas and a translation of proof-terms (which induces a translation of derivations).

\begin{defn} In $\fat$:
	\begin{enumerate}
		\item $A\dvee B:=\forall X.((A\supset X)\wedge(B\supset X))\supset X$, with $X\notin A,B$.
		\item $\dperp:=\forall X.X$.
	\end{enumerate}
\end{defn}

We define the Russell-Prawitz translation of formulas. Using the abbreviations just introduced, the definition can be given in a homomorphic fashion:

$$
\begin{array}{rcl}
	\om X&=&X\\
	\om \perp&=&\dperp\\
	\om{(A\supset B)}&=&\om A\supset\om B\\
	\om{(A\wedge B)}&=&\om A\wedge\om B\\
	\om{(A\vee B)}&=&\om A\dvee \om B
\end{array}
$$

The translation of proof terms will rely on the following definitions:

\begin{defn} In $\fat$:
	\begin{enumerate}
		\item $\llb x.M :=\lambda w.(\lambda x M)w$, with $w\notin M$.  
		\item $\ppair{M}{N}:=\pair{\proj1{\pair{M}{N}}}{\proj2{\pair{M}{N}}}$.
		\item $\pproj iM$, $M@N$ and $M@Y$ are defined in Fig.~\ref{fig:optimized-elims}.
	\end{enumerate}
\end{defn}
Iterated applications of $@$ are bracketed to the right; e.g. $M@x@X$ denotes $(M@x)@X$.

\begin{figure}\caption{Optimized elimination constructions}\label{fig:optimized-elims}
	$$
	\begin{array}{c}
		\pproj iM:=\left\{\begin{array}{ll}N_i&\textrm{if $M=\pair{N_1}{N_2}$}\\\proj iM&\textrm{otherwise}\end{array}\right.
		\\ \\
		M@N:=\left\{\begin{array}{ll}[N/x]P&\textrm{if $M=\lb x.P$}\\MN&\textrm{otherwise}\end{array}\right.
		\\ \\
		M@Y:=\left\{\begin{array}{ll}[Y/X]P&\textrm{if $M=\Lb X.P$}\\MY&\textrm{otherwise}\end{array}\right.
	\end{array}
	$$
\end{figure}

So we not only introduced optimized variants of $\proj{i}{M}$, $MN$ and $MY$, but also defined expanded variants of $\lb$-abstraction and pair. Next we define the constructors which will witness the inference rules of disjunction and absurdity.

\begin{defn}\label{def:admissible-constructions} In $\fat$:
	\begin{enumerate}
		\item Given $M,A,B$, given $i\in\{1,2\}$, we define
		$$\aiinjn iMAB:=\Lb X.\lb w^{(A\supset X)\wedge(B\supset X)}.\proj i w M,$$
		where the bound variable $X$ (resp.~$w$) is chosen so that $X\notin M,A,B$ (resp.~$w\notin M$).
		\item Given $M,P,Q,A,B,C$, we define $\accase M{x^A}P{y^B}QC$ by recursion on $C$ as follows:
	
		$$
		\begin{array}{rcl}
			\accase MxPyQX&=&(M@X)@\pair{\lb x.P}{\lb y.Q}\\
			\accase M{x}{P}{y}{Q}{C_1\wedge C_2}&=&\ipair i{\accase M{x}{\pproj i{P}}{y}{\pproj i{Q}}{C_i}}\\
			\accase MxPyQ{C\supset D}&=&\lb z.\accase Mx{P@z}y{Q@z}{D}\\
			\accase M{x}P{y}Q{\forall X.C}&=&\Lb X.\accase Mx{P@X}y{Q@X}C
		\end{array}
		$$
		where, in the third clause, the bound variable $z$ is chosen so that $z\neq x$, $z\neq y$ and $z\notin M,P,Q$; and in the fourth clause, the bound variable $X$ is chosen so that $X\notin M,P,Q,A,B$.
		\item Given $M,A$, we define $\aabbort MA$ by recursion on $A$ as follows:
	$$
	\begin{array}{rcl}
		\aabbort MX &=& M@X\enspace\\
		\aabbort M{A_1\wedge A_2}&=&\pair{\aabbort M{A_1}}{\aabbort M{A_2}}\\
		\aabbort M{B\supset C}&=&\lb z^B.\aabbort M{C}\\
		\aabbort M{\forall X.A}&=&\Lb X.\aabbort MA
	\end{array}
	$$
		where, in the third clause, the bound variable $z$ is chosen so that $z\notin M$; and in the fourth clause, the bound variable $X$ is chosen so that $X\notin M$.
		
	\end{enumerate}
\end{defn}

The definition of $\aiinjsymb$ is exactly the same as the definition of $\ainjsymb$ employed in $\am{(\cdot)}$. The difference between the original $\acasesymb$ and $\aabortsymb$ employed in $\am{(\cdot)}$ and $\accasesymb$ and $\aabbortsymb$ to be employed in $\om{(\cdot)}$ is in the use of $@$: if $@$ is replaced by application and projection, we obtain the original constructors.

We are ready to give the translation of proof terms.
\begin{defn}[Optimized translation]
Given $M\in\ipc$, $\om M$ is defined by recursion on $M$ as in Fig.~\ref{fig:optimized-translation}.
\end{defn}
We also define $\om{\Gamma}:=\{(x:\om{A}) \mid (x:A)\in\Gamma\}$.

\begin{figure}[t]\caption{The optimized translation of proof terms}\label{fig:optimized-translation}
$$
\begin{array}{rcll}
\om x&=&x&\\
\om{(\lb x^A.M)}&=&\llb x^{\om A}.\om M&\\
\om{\pair MN}&=&\ppair{\om M}{\om N}&\\
\om{(\injn iMAB)}&=&\aiinjn i{\om M}{\om A}{\om B}&\textrm{($i=1,2$)}\\
\om{(MN)}&=&(\om M)@(\om N)&\\
\om{(\proj iM)}&=&\pproj i{(\om M)}&\textrm{($i=1,2$)}\\
\om{(\case M{x^A}P{y^B}QC)}&=&\accase{\om M}{x^{\om A}}{\om P}{y^{\om B}}{\om Q}{\om C}&\textrm{($P,Q:C$)}\\
\om{(\abort MA)}&=&\aabbort{\om M}{\om A}\\
\end{array}
$$
\end{figure}

\begin{lem}\label{lem:typing-optimized-constructions}In $\fat$:
\begin{enumerate}
\item The typing rules for $\lb x.M$, $\pair MN$, $MN$, $\proj iM$ and $MX$ 
also hold, for the expansion/optimized variants $\llb x.M$, $\ppair MN$,  $M@N$, $\pproj iM$ and $M@X$ respectively.
\item The typing rules in Fig.\ref{fig:admissible-typing-rules} are admissible in $\fat$. 
\end{enumerate}
\end{lem}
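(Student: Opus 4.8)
The plan is to prove both items by straightforward structural/inductive arguments, using the defining clauses of the optimized constructions in Figures~\ref{fig:optimized-elims}, the definitions in Definition~\ref{def:admissible-constructions}, and the typing rules of $\fat$.

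For item~1, I would treat each of the five expanded/optimized variants in turn. For $\llb x.M := \lambda w.(\lambda x M)w$ with $w \notin M$: assuming $\Gamma, x:A \vdash M:B$, one applies $\supset I$ (introducing $x:A$), then $\supset E$ against the assumption $w:A$, then $\supset I$ again (introducing $w:A$), obtaining $\Gamma \vdash \llb x.M : A\supset B$; here one uses $w\notin M$ only to keep the context clean, and $\alpha$-equivalence to avoid clashes. For $\ppair MN := \pair{\proj1{\pair MN}}{\proj2{\pair MN}}$: from $\Gamma\vdash M:A$ and $\Gamma\vdash N:B$ one forms $\pair MN : A\wedge B$, applies $\wedge E1$ and $\wedge E2$, and recombines with $\wedge I$. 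For $\pproj iM$, $M@N$, $M@X$ one argues by cases on the shape of $M$ following Figure~\ref{fig:optimized-elims}: in the ``otherwise'' case the ordinary typing rule ($\wedge E i$, $\supset E$, $\forall E$ respectively) applies verbatim; in the special case, e.g. $\pproj i{\pair{N_1}{N_2}} = N_i$, one observes that a derivation of $\pair{N_1}{N_2}:A_1\wedge A_2$ must end in $\wedge I$ with premises $N_i:A_i$ (using that $\fat$ has syntax-directed introduction rules, or simply inverting the last rule), so $N_i$ already has the required type; similarly for $M@N$ with $M=\lb x.P$ one uses a substitution lemma $\Gamma, x:A\vdash P:B$ and $\Gamma\vdash N:A$ imply $\Gamma\vdash [N/x]P:B$, and analogously for $M@X$ with the type-substitution lemma. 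The substitution lemmas are the only slightly nontrivial ingredient and are completely standard for $\f$-style calculi.

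For item~2, I would prove by induction on the structure of the relevant formula that each construction of Definition~\ref{def:admissible-constructions} inhabits the type asserted by the corresponding rule in Figure~\ref{fig:admissible-typing-rules} (the figure is not reproduced in the excerpt, but the rules are clearly the $\vee I_i$, $\vee E$, and $\perp E$ rules read through the Russell--Prawitz translation, i.e. $\aiinjsymb$ should have type $A\dvee B$ from a premise of type $A$ (resp.\ $B$); $\accasesymb$ should have type $C$ from premises of types $A\dvee B$, $A\supset C$-like and $B\supset C$-like; $\aabbortsymb$ should have type $A$ from a premise of type $\dperp$). For $\aiinjn iMAB = \Lb X.\lb w.\proj iwM$: assuming $\Gamma\vdash M:A$ (case $i=1$), one derives $w:(A\supset X)\wedge(B\supset X) \vdash \proj 1 w : A\supset X$ by $\wedge E1$, then $\proj 1 w M : X$ by $\supset E$ against $M$, then $\supset I$ to get $(A\supset X)\wedge(B\supset X)\supset X$, then $\forall I$ (legitimate since $X\notin A,B$ and $X\notin\Gamma$ by the variable condition on the bound $X$) to get $\forall X.((A\supset X)\wedge(B\supset X)\supset X) = A\dvee B$. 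For $\accasesymb$ and $\aabbortsymb$ the induction is on $C$ (resp.\ $A$): in each recursion clause one checks that applying $\pproj i{(\cdot)}$, $(\cdot)@z$, $(\cdot)@X$ to the inductive-hypothesis-typed subterms lands in the right type, using item~1 to type these optimized eliminations, and in the base case ($C=X$, resp.\ $A=X$) one uses $M@X$ together with the two hypotheses: $(M@X)@\pair{\lb x.P}{\lb y.Q}:X$ because $M@X:(A\supset X)\wedge(B\supset X)\supset X$ (an instance of $\forall E$ on $A\dvee B$) and $\pair{\lb x.P}{\lb y.Q}:(A\supset X)\wedge(B\supset X)$ by $\wedge I$ and $\supset I$; and $M@X:X$ is an instance of $\forall E$ on $\dperp=\forall X.X$. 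One must be careful with the variable provisos in the $\supset I$ / $\forall I$ / recursion clauses (the freshness conditions $z\notin M,P,Q$ and $X\notin M,P,Q,A,B$ stated in Definition~\ref{def:admissible-constructions}) so that $\forall I$ is applicable and no capture occurs; these are exactly what the definition arranges.

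The main obstacle, such as it is, is purely bookkeeping: getting the side conditions on bound variables to line up so that $\forall I$ is legitimately applicable at each step of the recursions defining $\accasesymb$ and $\aabbortsymb$, and making precise the inversion/generation lemma needed in item~1 for the special cases of $\pproj i$, $@$ on a $\lambda$, and $@$ on a $\Lambda$ (namely that the relevant typing derivations must end in the matching introduction rule). Neither is deep; both are routine for $\lambda$-calculus presentations of $\f$. No genuinely hard step is expected.
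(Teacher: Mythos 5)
Your proof is correct. For item~1 you follow essentially the paper's route: the expanded variants $\llb x.M$ and $\ppair MN$ are typed as derived rules, and the optimized variants $\pproj iM$, $M@N$, $M@X$ are handled by case analysis on the shape of $M$, using inversion of the introduction rules in the special cases together with the (term and type) substitution lemmas. Where you diverge is item~2: the paper does not redo any induction, but simply cites the earlier work \cite{EspiritoSantoFerreira2020}, where the typing rules of Fig.~\ref{fig:admissible-typing-rules} were shown admissible for the unoptimized constructors $\ainjsymb$, $\acasesymb$, $\aabortsymb$, and then observes that $\aiinjsymb$ coincides with $\ainjsymb$ and that $\accasesymb$, $\aabbortsymb$ arise from $\acasesymb$, $\aabortsymb$ by replacing elimination occurrences with their optimized variants, which by item~1 obey the same typing rules; admissibility transfers immediately. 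You instead reprove admissibility from scratch by induction on $C$ (resp.\ $A$), mirroring the recursive clauses of Definition~\ref{def:admissible-constructions} and invoking item~1 to type $\pproj i{(\cdot)}$, $(\cdot)@z$, $(\cdot)@X$ at each step, with the base case discharged via atomic $\forall E$ on $A\dvee B$ or $\dperp$. Both arguments are sound: the paper's version is shorter and makes the dependence on the prior construction explicit, while yours is self-contained and does not require the reader to consult \cite{EspiritoSantoFerreira2020}; the price is the variable-proviso bookkeeping for $\forall I$ (choosing the bound $X$ away from $\Gamma$, which $\alpha$-renaming handles), which you correctly flag as the only delicate point.
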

\begin{proof}
1. In the case of the expansion variants $\llb x.M$, $\ppair MN$ the result follows immediately, as derived rules. In the case of the optimized variants $M@N$, $\pproj iM$ and $M@X$ the result follows from the fact that the typing rules for $\pair{N_1}{N_2}$, $\lb x.P$, and $\Lb X.P$ are invertible and due to the typing rules for $[N/x]P$ and $[Y/X]P$.

2. From \cite{EspiritoSantoFerreira2020} we know that the typing rules of Fig.\ref{fig:admissible-typing-rules} for some constructions there introduced, called $\ainjn iM{A_1}{A_2}$, $\acase MxPyQC$ and $\aabort MC$ are admissible in $\fat$.  The result is then immediate since $\aiinjn iM{A_1}{A_2}$ coincide with $\ainjn iM{A_1}{A_2}$ and the variants $\accase MxPyQC$ and $\aabbort MC$ are defined from $\acase MxPyQC$ and $\aabort MC$ by replacing elimination occurrences by their optimized variants and the optimized variants satisfy the same typing rules (by 1.). \end{proof}
\begin{figure}[t]\caption{Admissible typing rules of $\fat$}\label{fig:admissible-typing-rules}
	$$
	\begin{array}{c}
	\infer[(i=1,2)]{\Gamma\vdash\aiinjn iM{A_1}{A_2}:A_1\dvee A_2}{\Gamma\vdash M:A_i}\\ \\
	\infer{\Gamma\vdash\accase M{x^A}P{y^B}QC:C}{\Gamma\vdash M:A\dvee B&\Gamma,x:A\vdash P:C&\Gamma,y:B\vdash Q:C}\\ \\
		\infer{\Gamma\vdash\aabbort MC:C}{\Gamma\vdash M:\perp}
	\end{array}
	$$
\end{figure}
The following proposition is the soundness of the translation. Although simple, its proof is important because implicitly defines the translation of derivations induced by the proof-term translation.
\begin{prop}
	If $\Gamma\vdash M:A$ in $\ipc$ then $\om{\Gamma}\vdash\om{M}:\om{A}$ in $\fat$.
\end{prop}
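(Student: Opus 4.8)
The plan is to prove the proposition by induction on the derivation of $\Gamma\vdash M:A$ in $\ipc$, equivalently by recursion on the structure of the proof term $M$, following exactly the clauses of Fig.~\ref{fig:optimized-translation}. For each case I would exhibit the typing derivation of $\om M$ in $\fat$, invoking Lemma~\ref{lem:typing-optimized-constructions} to justify the inferences involving the optimized constructions. The key preliminary observation, to be used silently throughout, is that the Russell-Prawitz translation commutes with substitution of types in the expected sense and that $\om{(\cdot)}$ on formulas sends $\supset,\wedge$ homomorphically to $\supset,\wedge$, sends $\vee$ to $\dvee$, and sends $\perp$ to $\dperp$; so the shape of $\om A$ matches the constructor used at the root of $\om M$.

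First I would treat the cases whose image uses only the expansion/optimized variants of the $\fat$ constructors: the variable case is immediate ($\om x = x$, $\om A$ is a declaration in $\om\Gamma$); for $\lb x^A.M$ we have $\om{(\lb x^A.M)} = \llb x^{\om A}.\om M$ and we apply part~1 of Lemma~\ref{lem:typing-optimized-constructions} for $\llb$ together with the induction hypothesis $\om\Gamma, x:\om A\vdash\om M:\om B$; similarly for $\pair MN$ using the admissible rule for $\ppair{\cdot}{\cdot}$, for $MN$ using the rule for $M@N$, and for $\proj iM$ using the rule for $\pproj i{(\cdot)}$. In each of these the type of the conclusion is read off from the homomorphic clauses of $\om{(\cdot)}$ on formulas, e.g.\ $\om{(A\supset B)}=\om A\supset\om B$.

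Next I would handle the three cases that use the admissible constructions of Fig.~\ref{fig:admissible-typing-rules}. For $\injn iMAB$ the image is $\aiinjn i{\om M}{\om A}{\om B}$, and by part~2 of Lemma~\ref{lem:typing-optimized-constructions} and the induction hypothesis $\om\Gamma\vdash\om M:\om{A_i}$ we get the conclusion $\om{A_1}\dvee\om{A_2}=\om{(A\vee B)}$. For $\case M{x^A}P{y^B}QC$ the image is $\accase{\om M}{x^{\om A}}{\om P}{y^{\om B}}{\om Q}{\om C}$; the three induction hypotheses give $\om\Gamma\vdash\om M:\om A\dvee\om B$, $\om\Gamma,x:\om A\vdash\om P:\om C$, $\om\Gamma,y:\om B\vdash\om Q:\om C$, and the admissible $\dvee E$-style rule yields $\om\Gamma\vdash\accase{\cdots}{}{}{}{}{\om C}:\om C$. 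The case $\abort MA$ is analogous, using the admissible rule for $\aabbort{\cdot}{\cdot}$ with the induction hypothesis $\om\Gamma\vdash\om M:\dperp$ (noting $\om\perp=\dperp$).

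I expect no serious obstacle: the proposition is essentially a bookkeeping argument and, as the excerpt notes, its real content is that it \emph{implicitly defines} the translation of derivations. The only point requiring a little care is the bound-variable freshness side conditions appearing in Definition~\ref{def:admissible-constructions} and in the definitions of $\llb$ and $@$ — one must check that working modulo $\alpha$-equivalence lets us always choose the bound type/term variables fresh for the relevant contexts, in particular so that the $\forall I$ proviso ($X$ free in no type of $\om\Gamma$) is met in the $\Lb X$-clauses of $\accasesymb$ and $\aabbortsymb$; but this is exactly what Lemma~\ref{lem:typing-optimized-constructions}(2), imported from \cite{EspiritoSantoFerreira2020}, already guarantees, so it can be invoked rather than reproved.
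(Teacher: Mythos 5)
Your proposal is correct and follows exactly the route the paper takes: induction on the derivation $\Gamma\vdash M:A$, discharging each case via Lemma~\ref{lem:typing-optimized-constructions} (part~1 for the expansion/optimized variants, part~2 for $\aiinjsymb$, $\accasesymb$, $\aabbortsymb$), which is all the paper's one-line proof invokes. Your spelled-out case analysis and the remark on freshness of bound variables are just explicit bookkeeping of what the paper leaves implicit.
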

\begin{proof}
	By induction on $\Gamma\vdash M:A$, using Lemma \ref{lem:typing-optimized-constructions}.
\end{proof}


We end this subsection with a very brief comparison between the old translation $\am{(\cdot)}$ and the new $\om{(\cdot)}$. At the level of proof-term translation, the novelties of $\om{(\cdot)}$ w.~r.~t.~$\am{(\cdot)}$ are as follows: not only we employ the new $\accasesymb$ and $\aabbortsymb$, but also we employ the expanded $\lb$-abstraction and pair; and $@$ is not restricted to definition of $\accasesymb$ and $\aabbortsymb$, it is used in the translation of application and projection.

The connection between the two translations follows from some immediate observations about the new constructors.

\begin{lem}\label{lem:optimization}In $\fat$:
\begin{enumerate}
\item $Mi\to_{\betac}^= M@i$.
\item $MN\to_{\betai}^= M@N$.
\item $MX\to_{\betaall}^= M@X$.
\end{enumerate}
\end{lem}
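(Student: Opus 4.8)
The plan is to prove each of the three items by a straightforward case analysis on the shape of the principal argument $M$, following the definitions of the optimized elimination constructions in Fig.~\ref{fig:optimized-elims}. The key observation is that $\to_R^=$ denotes the reflexive closure of $\to_R$, so in each clause we need only exhibit \emph{either} a single $R$-step \emph{or} syntactic identity, depending on whether the definition of the optimized constructor fires its ``if'' branch or its ``otherwise'' branch.

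For item~1, I would split on whether $M$ is a pair. If $M=\pair{N_1}{N_2}$, then by definition $M@i=\pproj i{\pair{N_1}{N_2}}=N_i$, while $Mi=\proj i{\pair{N_1}{N_2}}\to_{\betac}N_i$; so we are in the $\to_{\betac}$ case of $\to_{\betac}^=$. If $M$ is not a pair, then $M@i=\proj iM=Mi$ by definition, so the two sides are literally identical and we are in the reflexive case. Items~2 and~3 are handled in exactly the same way, using $\betai$ and the clause ``if $M=\lb x.P$ then $M@N=[N/x]P$'' for item~2 (noting $(\lb x.P)N\to_{\betai}[N/x]P$), and $\betaall$ and the clause ``if $M=\Lb X.P$ then $M@X=[X/X']P$'' for item~3 (noting $(\Lb X'.P)X\to_{\betaall}[X/X']P$, up to $\alpha$-renaming the bound type variable).

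There is essentially no obstacle here: the lemma is a direct unfolding of Fig.~\ref{fig:optimized-elims} together with the $\beta$-rules of $\fat$. The only point requiring any care is the handling of bound-variable names in items~2 and~3 — one must invoke the working convention (stated in Section~\ref{sec:systems}) that we reason modulo $\alpha$-equivalence, so that the substitution produced by the ``if'' branch of $@$ coincides with the result of the corresponding $\beta$-contraction regardless of the specific name chosen for the bound variable. I would therefore write the proof as: ``By case analysis on $M$, inspecting the definitions in Fig.~\ref{fig:optimized-elims}. In each item, if $M$ has the relevant introduction shape, the left-hand side contracts by the indicated $\beta$-rule to the right-hand side; otherwise the two sides are syntactically equal. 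In both cases the relation $\to_R^=$ holds.'' — possibly spelling out one representative clause in full for the reader's convenience.
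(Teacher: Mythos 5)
Your proposal is correct and is exactly the content behind the paper's own proof, which simply says ``Immediate'': each item follows by unfolding the definitions in Fig.~\ref{fig:optimized-elims}, giving a single $\beta$-step when $M$ has the corresponding introduction form and syntactic identity otherwise. Your extra care about $\alpha$-renaming in items~2 and~3 is fine but not needed beyond the paper's standing convention.
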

\begin{proof}
Immediate.
\end{proof}

The previous lemma says that, by employing $@$ instead of application or projection, some $\beta$-reduction may be forced.

\begin{lem}\label{lem:admissible-eta-reduction}In $\fat$:
\begin{enumerate}
\item $\llb x. M\to_{\etai}\lb x.M $.
\item $\ppair{M}{N}\to_{\etac} \pair{M}{N}$.
\item $(\llb x.M)@N=(\lb x.M)N\to_{\betai}(\lb x.M)@N$.
\item $\ppair{M_1}{M_2}@i= \pair{M_1}{M_2}i \to_{\betac}\pair{M_1}{M_2}@i$.
\item $\lb z.M@z\to_{\etai}^=M$, if $z\notin M$.
\item $\pair{\pproj 1M}{\pproj 2M}\to_{\etac}^=M$.
\item $\Lambda Y.M@Y\to_{\etaall}^=M$, if $Y\notin M$.
\end{enumerate}
\end{lem}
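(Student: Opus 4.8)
The plan is to verify each of the seven items of Lemma~\ref{lem:admissible-eta-reduction} by unfolding the definitions in Definition~\ref{def:admissible-constructions} (for $\llb$ and $\ppair$) and in Fig.~\ref{fig:optimized-elims} (for $@$ and $\pproj{i}{\cdot}$), and then matching the result against the reduction rules of Fig.~\ref{fig:reduction-rules}, restricted to the $\fat$-relevant rules $\betai$, $\betac$, $\betaall$, $\etai$, $\etac$, $\etaall$. Each item is a short calculation, so the proof will essentially be ``immediate'' as the excerpt itself announces in the companion Lemma~\ref{lem:optimization}; the work is just bookkeeping of side conditions on bound variables.

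For items 1 and 2, I would simply expand: $\llb x.M = \lb w.(\lb x.M)w$ with $w\notin M$, which is precisely the redex of $(\etai)$ (since $w\notin \lb x.M$), reducing to $\lb x.M$; symmetrically $\ppair MN = \pair{\proj1{\pair MN}}{\proj2{\pair MN}}$ is the redex of $(\etac)$, reducing to $\pair MN$. For items 3 and 4, the first equation is a definitional unfolding of $@$ applied to a term that is \emph{not} syntactically a $\lb$-abstraction resp.\ a pair --- here $\llb x.M$ is a $\lambda w$-abstraction so $(\llb x.M)@N$ falls in the ``otherwise'' case and equals $(\llb x.M)N = (\lb w.(\lb x.M)w)N$; I would then observe this is a $\betai$-redex contracting to $(\lb x.M)N$, and note that $(\lb x.M)N = (\lb x.M)@N$ because $\lb x.M$ is literally a $\lambda x$-abstraction, wait --- so actually $(\lb x.M)@N = [N/x]M$; here one must be slightly careful and read the claimed equalities and reductions exactly as written, reconciling ``$=$'' with definitional equality of $@$ on non-canonical heads and the genuine $\betai$-step on the unfolded form. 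Item 4 is the analogous statement with $\betac$ and $i\in\{1,2\}$ in place of an argument.

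For items 5, 6, 7 --- the ``$\to^{=}$'' items --- I would do a case split according to whether the subject term is in canonical form for the relevant constructor. In item 5: if $M=\lb x.P$ then $M@z=[z/x]P$ and, renaming, $\lb z.M@z=\lb z.[z/x]P=\lb x.P=M$ up to $\alpha$-equivalence, so the relation is identity (the ``$=$'' branch of $\to^{=}$); if $M$ is not a $\lb$-abstraction then $M@z=Mz$ and, since $z\notin M$, $\lb z.Mz\to_{\etai}M$ is a genuine $(\etai)$-step. Item 6 is the same dichotomy with $\pproj iM$, pairs, and $(\etac)$; item 7 with $M@Y$, $\Lb$-abstractions, and $(\etaall)$, using the side condition $Y\notin M$. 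The only thing to watch is that the freshness conventions baked into the definitions of $\llb$, $\ppair$, $@$ and $\pproj{i}{\cdot}$ supply exactly the side conditions ($w\notin M$, $z\notin M$, $Y\notin M$) demanded by the $\eta$-rules, so no extra hypotheses are needed. There is no real obstacle here: the potential pitfall is purely notational --- keeping straight, in each clause, which of the two branches of the $@$/$\pproj{i}{\cdot}$ definitions applies, and hence whether a given line is a definitional identity or an actual reduction step --- and once the head form of each subject term is identified the verification is mechanical.
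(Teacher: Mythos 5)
Your treatment of items 1, 2, 5, 6 and 7 is correct and is exactly the paper's (very terse) argument: immediate unfolding of the $\eta$-expanded constructors for the first two, and a case split on the head form of $M$ for the last three, with $\alpha$-renaming in the canonical branch and the freshness hypotheses ($z\notin M$, $Y\notin M$) feeding the $\eta$-rules in the non-canonical branch.

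Your handling of items 3 and 4, however, contains a genuine error. You assert that $(\llb x.M)@N$ falls in the ``otherwise'' branch of the definition of $@$ because $\llb x.M$ is a $\lambda w$-abstraction; being an abstraction puts it in the \emph{first} branch, so by definition $(\llb x.M)@N=[N/w]\bigl((\lb x.M)w\bigr)=(\lb x.M)N$, the simplification using $w\notin M$. This substitution ``on the fly'' is precisely why the first ``$=$'' of item 3 is a syntactic identity; under your reading one would instead obtain $(\lb w.(\lb x.M)w)N$, which is \emph{not} syntactically equal to $(\lb x.M)N$ (it is a $\betai$-redex with that contractum), so the equality stated in the lemma would be false as you read it. The genuine reduction step in item 3 is then $(\lb x.M)N\to_{\betai}[N/x]M$, and $[N/x]M$ is literally $(\lb x.M)@N$, again by the first branch of the definition of $@$. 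You noticed this last point yourself (``so actually $(\lb x.M)@N=[N/x]M$'') but left the repair at ``one must be slightly careful'', without carrying out the corrected computation, and the branch misidentification is never withdrawn. The same correction is needed for item 4: $\ppair{M_1}{M_2}$ \emph{is} a pair, so by the first branch of the definition in Fig.~\ref{fig:optimized-elims} its optimized $i$-th projection is its $i$-th component, namely $\proj i{\pair{M_1}{M_2}}$, which gives the stated equality, and then $\proj i{\pair{M_1}{M_2}}\to_{\betac}M_i=\pair{M_1}{M_2}@i$. With these two items redone along these lines your proof coincides with the paper's.
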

\begin{proof}
The proof of the first four items is immediate, the other three are proved by straightforward case analysis of $M$.
\end{proof}

Some comments about the previous lemma: The first two items confirm that $\llb$ and $\ppair{\_}{\_}$ are $\eta$-expanded versions of the original constructors. The third and fourth items say that, by employing $\llb$ and $\ppair{\_}{\_}$, we block the reduction that $@$ wants to force. The last three items say that $@$ may also force some $\eta$-reduction.


From these simple observations, it follows that $\om{M}$ is obtained from $\am{M}$ through some $\beta$-reduction steps (due to Lemma \ref{lem:optimization}) and some $\eta$-expansion steps (due to the first two items of Lemma \ref{lem:admissible-eta-reduction}). This justifies the links depicted in Fig.~\ref{fig:summary} between $\am{M}$ and $\om{M}$, and between $\am{N}$ and $\om{N}$.

\subsection{Discussion of the translation}

As hinted before, we want to change translation $\am{(\cdot)}$ because it does not simulate commuting conversions $\pid$ and $\pia$. As detailed in \cite{EspiritoSantoFerreira2020}, Lemmas 11 and 12, the failure is due to the fact that, if we define $\acasesymb$ and $\aabortsymb$ in the way employed by $\am{(\cdot)}$, the rules $\pid$ and $\pia$ are admissible in $\fat$ only in the form of $=_{\beta}$, requiring $\beta$-reduction from the \emph{contractum} of the rule. 

Inspecting the proofs of the mentioned lemmas, one sees that the redexes contracted in such unwanted reductions have to do with the clauses in the definition of $\acase{M}{x_1}{P_1}{x_2}{P_2}{C}$ relative to the cases where $C$ is not atomic. For instance, $\acase{M}{x_1}{P_1}{x_2}{P_2}{D\supset E}=\lb z^D.\acase{M}{x_1}{P_1 z}{x_2}{P_2 z}{E}$. The subterm $P_i z$ turns out to be an administrative redex, as soon as $P_i$ is a $\lb$-abstraction - for instance when $P_i$ is the translation of a $\casesymb$, which is exactly what happens in the proofs of the mentioned lemmas. By adopting $P_i@z$ in the definition of $\accasesymb$, such redex is immediately reduced.

As it happens, we cannot simply add $@$ in those clauses of $\accasesymb$. If we want to reprove the new lemmas corresponding to those mentioned lemmas of \cite{EspiritoSantoFerreira2020}, the case of $C$ atomic breaks, unless $@$ is also employed in the base cases of the definition of $\accasesymb$ and $\aabbortsymb$. By doing this further change, the new lemmas work unexpectedly well: $\pid$ and $\pia$ become admissible in $\fat$ as syntactic identities (see Lemmas \ref{lem:admissible-pi-disjunction-reduction} and \ref{lem:admissible-pi-absurdity-reduction} below). In fact, all commuting conversions are collapsed in this way, except for $\pii$, because we would have $(\accase MxPyQ{C\supset D})N\to_{\beta}^*\accase Mx{P@N}y{Q@N}D$, while the \emph{contractum} $\accase Mx{PN}y{QN}D$ would possibly require reduction (in the ``wrong direction'') to bring $PN$ and $QN$ to the forms $P@N$ and $Q@N$ respectively (recall item 2 of Lemma \ref{lem:optimization}). 

The solution was to use $@$ also in the translation of $MN$ (and $\proj{i}{M}$), so that the reduction $(\accase MxPyQ{C\supset D})@N\to_{\beta}^*\accase Mx{P@N}y{Q@N}D$ in $\fat$ is the obtained admissibility of $\pii$ (see Lemma \ref{lem:optimized-admissible-commutative-conversions} below) - \emph{h\'elas} not as syntactic identity, contrary to the other commuting conversions. It turns out that, now taking $M$, $P$, $Q$, $C$, $D$ in $\ipc$, one has
$$(\accase {\om M}x{\om P}y{\om Q}{\om C\supset \om D})@N=\accase {\om M}x{\om P @ \om N}y{\om Q@ \om N}{\om D}$$
but to establish this requires an entire section of the paper - see Section \ref{sec:special}.

Finally, the generalized adoption of $@$ in the translation implies that also the $\beta$-redexes present in the proof term to be translated are reduced ``on the fly'' - unless such reduction is blocked, as shown in items 3 and 4 of Lemma \ref{lem:admissible-eta-reduction}, by the trick of translating source $\lb$-abstractions and pair in a $\eta$-expanded way. In this way, a large portion of source $\beta$-reduction is strictly simulated, as made precise at the end of Section \ref{sec:simulation}, and this explains the last design decision incorporated in the definition of $\om{(\cdot)}$.

\section{The simulation theorem}\label{sec:simulation}

In this section we prove our main result, the simulation property of the optimized embedding. The proof only comes in the third subsection, after several preliminary results about $\fat$ are established in the first and second subsections.

\subsection{Preliminary results}

In this subsection we prove bureaucratic results about the optimized constructors $\llb$, $\ppair{\_}{\_}$, $@$, $\accasesymb$ and $\aabbortsymb$, concerning: free variables (Lemmas \ref{lem:FV_om_@}, \ref{lem:FV_case}, \ref{lem:miracle_abort_FV}), compatibility of reduction (Lemma \ref{lem:additional-compatibility-rules-1}, Corollary \ref{cor-lem:additional-compatibility-rules-1}, Lemma  \ref{lem:additional-compatibility-rules-2}), and substitution (Lemmas \ref{lem:term-subst-@}, \ref{lem:type-subst-@}, \ref{lem:term-subst-expansions}, \ref{lem:subst-admissible-optimized-constructions}). 

\begin{lem}\label{lem:FV_om_@}
	In $\fat$, let $M$ be a term and $U$ be a term or a type variable. Then $\FV M\subseteq \FV{M@U}$. 
	Moreover, in the case of $U$ being a type variable, $\FV{M}=\FV{M@U}$.
\end{lem}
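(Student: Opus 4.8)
The plan is to prove both statements by a straightforward case analysis on the shape of $M$, following the definition of $M@U$ in Fig.~\ref{fig:optimized-elims}. There are two kinds of $U$, a term $N$ and a type variable $Y$, and for each kind two sub-cases, according to whether the optimized application performs a substitution (when $M$ is the appropriate abstraction) or merely builds an ordinary application. The only external facts I would rely on are two routine properties of the calculus: for a term substitution, $\FV{[N/x]P}$ equals $(\FV P\setminus\{x\})\cup\FV N$ if $x\in\FV P$ and equals $\FV P$ otherwise, so in either case $\FV P\setminus\{x\}\subseteq\FV{[N/x]P}$; and a type substitution $[Y/X](\cdot)$ leaves the set of free \emph{term} variables unchanged. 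Both are immediate inductions on term structure and may simply be quoted.

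For the inclusion $\FV M\subseteq\FV{M@U}$ with $U$ a term $N$: if $M$ is not a $\lb$-abstraction, then $M@N=MN$ and $\FV{M@N}=\FV M\cup\FV N\supseteq\FV M$; if $M=\lb x.P$, then $M@N=[N/x]P$, and since $\FV M=\FV P\setminus\{x\}$, the substitution fact above gives $\FV{[N/x]P}\supseteq\FV P\setminus\{x\}=\FV M$. For the "moreover" part, take $U$ a type variable $Y$: if $M$ is not a $\Lb$-abstraction, then $M@Y=MY$ and, as $Y$ is a type variable, $\FV{MY}=\FV M$; if $M=\Lb X.P$, then $M@Y=[Y/X]P$ is obtained by a type substitution, which does not touch term variables, so $\FV{M@Y}=\FV{[Y/X]P}=\FV P=\FV{\Lb X.P}=\FV M$. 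In both sub-cases of $U=Y$ we obtain the stronger equality $\FV{M@Y}=\FV M$, as claimed.

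I expect no genuine obstacle here: the lemma is purely bureaucratic free-variable bookkeeping, and the worst that can happen is forgetting that a term substitution need not introduce $\FV N$ (when $x\notin\FV P$) or that a type substitution is inert on term variables — both handled by the case split above. The result is exactly the kind of fact needed later to discharge the side conditions on the choice of bound variables in the definitions of $\accasesymb$ and $\aabbortsymb$.
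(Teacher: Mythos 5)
Your proof is correct and follows essentially the same route as the paper: a case split on whether $M$ is the relevant kind of abstraction, using that capture-avoiding term substitution preserves the free variables other than the substituted one, and that type-variable application/substitution is inert on term variables. The extra bookkeeping you give (the explicit formula for $\FV{[N/x]P}$ in the two sub-cases $x\in\FV P$ and $x\notin\FV P$) is just a more detailed spelling-out of the paper's argument.
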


\begin{proof}
	In the case $M$ is not an abstraction, the result is clear since $M@U= MU.$ If $M$ is a lambda-abstraction (respectively a universal-abstraction) we only need to analyse the cases in which $U$ is a term (respectively a type variable). In each case the substitution ``on the fly'' promoted by $@$ maintain the free variables as free variables. The exact equality in the case of type variables is justified by the fact that application with a type variable or substitution of a type variable by another one does not add new 
	term variables.
\end{proof}

\noindent Obviously, $\FV{M@i}$ does not necessarily contain $\FV M$.

We now consider $\accasesymb$ and $\aabbortsymb$. Notice that, even if $w\in \FV{P}$ or $w\in \FV{Q}$, we have no guarantee that $w\in \FV{\accase MxPyQC}$ - consider the case $C=X$ and $M=\Lambda Y.\lb z.N$ with $z\notin \FV{N}$. On the other hand, the free variables of $M$ are not lost:

\begin{lem}\label{lem:FV_case}
	In $\fat$, $\FV{M}\subseteq \FV{\accase MxPyQC}$.
	
\end{lem}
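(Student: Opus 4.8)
The plan is to prove $\FV{M}\subseteq \FV{\accase MxPyQC}$ by induction on the structure of the formula $C$, following exactly the four clauses of Definition \ref{def:admissible-constructions}(2). This mirrors the recursive definition of $\accasesymb$, so each inductive case corresponds to one defining clause.

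First, for the base case $C=X$, we have $\accase MxPyQX=(M@X)@\pair{\lb x.P}{\lb y.Q}$. By the ``moreover'' part of Lemma \ref{lem:FV_om_@}, $\FV{M}=\FV{M@X}$, and then by the first part of Lemma \ref{lem:FV_om_@} applied with $U=\pair{\lb x.P}{\lb y.Q}$, we get $\FV{M@X}\subseteq\FV{(M@X)@\pair{\lb x.P}{\lb y.Q}}$, which chains to the desired inclusion. For the conjunction case $C=C_1\wedge C_2$, we have $\accase MxPyQ{C_1\wedge C_2}=\pair{\accase Mx{\pproj1P}y{\pproj1Q}{C_1}}{\accase Mx{\pproj2P}y{\pproj2Q}{C_2}}$; applying the induction hypothesis to the first component (with formula $C_1$) gives $\FV{M}\subseteq\FV{\accase Mx{\pproj1P}y{\pproj1Q}{C_1}}\subseteq\FV{\pair{\cdots}{\cdots}}$, since the free variables of a pair include those of its first component. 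For the implication case $C=C'\supset D$, we have $\accase MxPyQ{C'\supset D}=\lb z.\accase Mx{P@z}y{Q@z}D$; by the induction hypothesis $\FV{M}\subseteq\FV{\accase Mx{P@z}y{Q@z}D}$, and since $z\notin\FV M$ by the choice of the bound variable $z$, removing $z$ when passing under the binder $\lb z$ does not affect $\FV M$, so $\FV M\subseteq\FV{\lb z.\accase Mx{P@z}y{Q@z}D}$. The universal case $C=\forall X.C'$ is entirely analogous: $\accase MxPyQ{\forall X.C'}=\Lb X.\accase Mx{P@X}y{Q@X}{C'}$, the induction hypothesis gives the inclusion for the body, and since $X\notin\FTV M\subseteq$ is irrelevant to \emph{term} variables (and the bound $X$ does not bind term variables), $\FV M$ is unchanged by $\Lb X$.

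The only mild subtlety — and the step I would flag as the place where care is needed rather than a genuine obstacle — is making sure the bound-variable provisos are actually doing their job in the implication and universal cases: the proviso $z\neq x$, $z\neq y$, $z\notin\FV{M,P,Q}$ in the third clause is exactly what guarantees $z\notin\FV M$, so that abstracting $z$ does not delete anything from $\FV M$; similarly $X\notin\FTV{M,P,Q,A,B}$ (and $X$ being a type variable) guarantees $\Lb X$ does not disturb $\FV M$. Since these provisos are built into the definition, the argument goes through cleanly, and no case presents a real difficulty.
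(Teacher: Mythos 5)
Your proof is correct and follows essentially the same route as the paper's: induction on $C$ mirroring the defining clauses of $\accasesymb$, with the base case handled by two applications of Lemma \ref{lem:FV_om_@}, the conjunction case by passing through one component of the pair, and the implication and universal cases by the induction hypothesis together with the bound-variable provisos ($z\notin\FV{M}$, and $\Lb X$ binding only type variables). No gaps to report.
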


\begin{proof}
	The proof is by induction on $C$.
	
	Case $C=X$. $\FV{\accase MxPyQC}=\FV{M@X@\pair{\lb x. P}{\lb y.Q}}$. Applying Lemma \ref{lem:FV_om_@} twice, we obtain  $$\FV{M}=\FV{M@X}\subseteq \FV{M@X@\pair{\lb x. P}{\lb y.Q}}.$$
	
	Case $C=C_1\wedge C_2$. Let $j\in\{1,2\}$.
	
	$$
	\begin{array}{rcll}
		\FV{M} & \subseteq & \FV{\accase Mx{P@j}y{Q@j}{C_j}} & \text{(by IH)}\\
		& \subseteq & \bigcup_{i=1,2}\FV{\accase Mx{P@i}y{Q@i}{C_i}}\\
		&=&\FV{\ipair i{\accase Mx{P@i}y{Q@i}{C_i}}}&\\
		&=&\FV{\accase MxPyQC}&\textrm{(by def. of $\accasesymb$)}
	\end{array}
	$$
	%
	
	Case $C=C_1\supset C_2$.
	
	$$
	\begin{array}{rcll}
		\FV{M} & \subseteq & (\FV{\accase Mx{P@z}y{Q@z}{C_2}})\setminus z&\text{(*)}\\
		&=&\FV{\lb z^{C_1}.\accase Mx{P@z}y{Q@z}{C_2}}&\\
		&=&\FV{\accase MxPyQC}& \textrm{(by def. of $\accasesymb$)}
	\end{array}
	$$
	Justification $(*)$. Let $w\in\FV{M}$. By IH, $w\in \FV{\accase Mx{P@z}y{Q@z}{C_2}}$. 
	We may assume $w\neq z$. Thus $w\in (\FV{\accase Mx{P@z}y{Q@z}{C_2}})\setminus z$. 
	
	Case $C=\forall X C_0$.
	
	$$
	\begin{array}{rcll}
		\FV{M} &\subseteq& \FV{\accase Mx{P@X}y{Q@X}{C_0}} & \text{(by IH)} \\
		&=&\FV{\Lb X.\accase Mx{P@X}y{Q@X}{C_0}}&\\
		&=& \FV{\accase MxPyQC} & \text{(by def. of $\accasesymb$)}
	\end{array}
	$$
	%
\end{proof}
\begin{lem}\label{lem:miracle_abort_FV}
	In $\fat$, $\FV{\aabbort{M}{A}}=\FV M$.
\end{lem}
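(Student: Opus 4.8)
The proof is a direct induction on the structure of the formula $A$, mirroring the four clauses in the recursive definition of $\aabbortsymb$ (Definition \ref{def:admissible-constructions}(3)). In the base case $A=X$ we have $\aabbort M X=M@X$, and since $X$ is a type variable, the ``moreover'' part of Lemma \ref{lem:FV_om_@} gives $\FV{M@X}=\FV M$ on the nose. This is precisely the point that makes the statement an \emph{equality} rather than a mere inclusion, in contrast with Lemma \ref{lem:FV_case}: the construction $\aabbortsymb$ only ever feeds a type variable to $@$, never a term (nor does it use $\pproj i{}$), so the ``free-variable-losing'' phenomenon caused by substitution-on-the-fly with a term (or by $\pproj i{}$) simply never arises.

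For the conjunction case $A=A_1\wedge A_2$, I would compute $\FV{\pair{\aabbort M{A_1}}{\aabbort M{A_2}}}=\FV{\aabbort M{A_1}}\cup\FV{\aabbort M{A_2}}$, which by the induction hypothesis equals $\FV M\cup\FV M=\FV M$. For the implication case $A=B\supset C$, we have $\aabbort M{B\supset C}=\lb z^B.\aabbort M C$, so $\FV{\aabbort M{B\supset C}}=\FV{\aabbort M C}\setminus\{z\}$, which by the induction hypothesis is $\FV M\setminus\{z\}$; the freshness proviso $z\notin M$ imposed in the definition then yields $\FV M\setminus\{z\}=\FV M$. For the universal case $A=\forall X.A_0$, we have $\aabbort M{\forall X.A_0}=\Lb X.\aabbort M{A_0}$; since $\Lb X$ binds a \emph{type} variable it removes no term variable, so $\FV{\Lb X.\aabbort M{A_0}}=\FV{\aabbort M{A_0}}=\FV M$ by the induction hypothesis directly.

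There is no real obstacle here: the lemma is essentially bookkeeping. The only points requiring a moment's attention are (i) invoking the sharpened form of Lemma \ref{lem:FV_om_@} in the base case (ordinary $M@U$ only guarantees $\FV M\subseteq\FV{M@U}$, and it is the restriction $U=X$ a type variable that upgrades this to equality), and (ii) using the freshness side-condition $z\notin M$ in the implication clause to discard the abstracted variable cleanly.
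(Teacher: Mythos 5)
Your proof is correct and follows essentially the same route as the paper: induction on $A$ with the four clauses of the definition of $\aabbortsymb$, using $z\notin\FV M$ in the implication case. The only (harmless) difference is in the base case $A=X$, where you invoke the ``moreover'' part of Lemma \ref{lem:FV_om_@} while the paper inlines the same two-line case analysis on whether $M$ is a $\Lb$-abstraction.
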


\begin{proof}
	The proof is by induction on $A$.
	
	Case $A=X$. 
	
	$\aabbort MA=M@X$. If $M$ is not an universal lambda abstraction then $M@X=MX$ and $\FV {MX}=\FV M$. If $M=\Lb Y. M_0$ then $M@X=[X/Y]M_0$ and $\FV{[X/Y]M_0}=\FV{M_0}=\FV{\Lb Y.M_0}=\FV M$.
	
	Case $A=A_1\wedge A_2$.
	
	$$
	\begin{array}{rcll}
		\FV{\aabbort M{A_1\wedge A_2}}&=&\FV{\ipair i{\aabbort M{A_i}}}&\textrm{(by def. of $\aabbortsymb$)}\\
		&=&\bigcup_{i=1,2}\FV{\aabbort M {A_i}}&\\
		&=&\bigcup_{i=1,2} \FV M&\textrm{(by IH twice)}\\
		&=&\FV{M}&
	\end{array}
	$$
	
	Case $A=A_1\supset A_2$.
	
	$$
	\begin{array}{rcll}
		\FV{\aabbort M{A_1\supset A_2}}&=&\FV{\lb z^{A_1}.\aabbort M{A_2}}&\textrm{(by def. of $\aabbortsymb$)}\\
		&=&\FV{\aabbort M{A_2}}\setminus z&\\
		&=&\FV{M}\setminus z&\textrm{(by IH)}\\
		&=&\FV{M}& \textrm{(since $z\notin \FV{M}$)}
	\end{array}
	$$
	
	Case $A=\forall X. A_0$.
	
	$$
	\begin{array}{rcll}
		\FV{\aabbort M{\forall X. A_0}}&=&\FV{\Lb X.\aabbort M{A_0}}&\textrm{(by def. of $\aabbortsymb$)}\\
		&=&\FV{\aabbort M{A_0}}&\\
		&=&\FV{M}&\textrm{(by IH)}
	\end{array}
	$$
	
\end{proof}

\begin{lem}\label{lem:additional-compatibility-rules-1}Let $R$ be a redunction rule of $\fat$. Then 
	\begin{enumerate}
		\item \begin{enumerate}
			\item If $M \to_{R}M'$ then $M@i \to_{R\betac}^kM'@i$, with $k\in \{0,1,2\}$.
			\item If $M \to_{R}M'$ then $M@X \to_{R\betaall}^kM'@X$, where: if $R\neq \etaall$ then $k\in \{1,2\}$, else $k\in \{0,1,2\}$.
			\item If $M \to_{R}M'$ then $M@N \to_{R\betai}^kM'@N$, where: if $R\neq \etai$ then $k\in \{1,2\}$, else $k\in \{0,1,2\}$.
		\end{enumerate}
		\item If $N \to_{R}N'$ then $M@N \to_{R}^*M@N'$.
		\item If $M \to_{R}M'$ then $\ppair{M}{N} \to_{R}^2\ppair{M'}{N}$.
		\item If $N \to_{R}N'$ then $\ppair{M}{N} \to_{R}^2\ppair{M}{N'}$.
		\item If $M \to_{R}M'$ then $\llb x.M\to_{R}\llb.x M'$.
	\end{enumerate}
\end{lem}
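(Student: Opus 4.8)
The plan is to prove each item of Lemma~\ref{lem:additional-compatibility-rules-1} by a routine case analysis, the relevant cases being exactly those where the optimized constructor $@$, $\ppair{\_}{\_}$, or $\llb$ does something nontrivial (i.e.\ fires an ``on the fly'' reduction), since in the default cases the optimized constructors reduce to their ordinary counterparts and compatibility is inherited from the compatible closure of $R$. So for each item I would split according to whether the relevant subterm ($M$ in items 1, 3, 5; $N$ in items 2, 4) already has, or acquires through the step $M\to_R M'$, the shape that triggers the on-the-fly behaviour of the outer constructor.

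For item~1, the interesting phenomenon is that $M@U$ and $M'@U$ may have different ``activation status'': $M$ might not be an abstraction but $M'$ might become one (when $R$ is an $\eta$-rule producing an abstraction, or when $M$ itself is a redex whose contractum is an abstraction), or vice versa. Concretely, in case (a): if neither $M$ nor $M'$ is a pair, then $M@i = Mi \to_R M'i = M'@i$, giving $k=1$; if $M$ is a pair, so is $M'$ (compatibility preserves the head constructor unless the step is at the root, but a pair is not a $\beta$/$\eta$-redex head for projection), and $M@i$, $M'@i$ are the corresponding components, giving $k=1$ or, if the redex was inside the discarded component, $k=0$; the genuinely new case is $M'$ a pair with $M$ not --- this happens only for $R=\etac$ with $M=\pair{\proj1 N}{\proj2 N}$, $M'=N$, so $M@i = \pair{\proj1 N}{\proj2 N}i \to_{\betac} \proj i N$, and then $\proj i N$ versus $N@i$ differ by at most one more $\betac$-step (Lemma~\ref{lem:optimization}(1) read in the appropriate direction), accounting for $k=2$. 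Cases (b) and (c) are analogous, with $\betaall$ (resp.\ $\betai$) replacing $\betac$, and here the subtlety about $R=\etaall$ (resp.\ $R=\etai$) is the same: an $\eta$-step can turn a non-$\Lambda$ (non-$\lambda$) term into one, causing $M@X$ (resp.\ $M@N$) to fire where $M'@X$ does not, hence the lower bound $k\geq 1$ fails exactly for those rules, which is why the statement carves out $k\in\{0,1,2\}$ there and $k\in\{1,2\}$ otherwise.

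Item~2 is the easy direction: $N\to_R N'$ never changes the head constructor of $M$, so whether or not $M@N$ is activated depends only on $M$; if $M$ is not a $\lambda$-abstraction, $M@N = MN \to_R MN' = M@N'$ in one step; if $M = \lambda x.P$, then $M@N = [N/x]P$ and $M@N' = [N'/x]P$, and substituting into the (possibly several) occurrences of $x$ yields $[N/x]P \to_R^* [N'/x]P$ with as many steps as free occurrences of $x$ in $P$ --- hence $\to_R^*$ rather than $\to_R$. Items~3 and~4 are immediate from the definition $\ppair M N := \pair{\proj1{\pair M N}}{\proj2{\pair M N}}$: a step in $M$ (resp.\ $N$) is copied into the two syntactic occurrences of $M$ (resp.\ $N$), giving exactly $\to_R^2$. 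Item~5 is likewise immediate: $\llb x.M := \lambda w.(\lambda x.M)w$ contains one occurrence of $M$, so $M\to_R M'$ lifts to one step $\llb x.M \to_R \llb x.M'$.

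I do not expect any deep obstacle; the only thing requiring care is the bookkeeping in item~1 of the ``activation mismatch'' between $M@U$ and $M'@U$ when $R$ is an $\eta$-rule or when a root-redex contracts to an abstraction/pair --- one must check that in every such mismatch the discrepancy is absorbed by at most one extra $\beta_{\wedge}$/$\beta_{\forall}$/$\beta_{\supset}$ step (via Lemma~\ref{lem:optimization}), so that the bound $k\leq 2$ always holds, and that the lower bound $k\geq 1$ genuinely can drop to $0$ only for the $\eta$-rule of the matching connective. All of this is a finite, mechanical check and I would present it as such rather than spelling out every subcase.
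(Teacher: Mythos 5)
Your overall strategy---case analysis on whether the outer optimized constructor is ``activated'', with Lemma~\ref{lem:optimization} absorbing any mismatch---is the paper's own, and your treatment of items 2--5 is correct and matches it. The problems are in item 1, which is where the actual content of the lemma lies. First, ``if $M$ is a pair, so is $M'$'' is false: a pair \emph{is} the head of an $\etac$-redex, so a root step $\pair{\proj 1N}{\proj 2N}\to_{\etac}N$ takes a pair to an arbitrary term, and your parenthetical justification asserts the opposite. Second, your ``genuinely new case'' is both mislabelled and misclassified: the scenario you describe ($R=\etac$ at the root) has $M$ a pair and $M'$ arbitrary, not ``$M'$ a pair with $M$ not''; and the genuine case ``$M$ not a pair, $M'$ a pair'' is not confined to $\etac$---any root $\beta$-step whose contractum is a pair produces it, e.g.\ $(\lb x.\pair AB)N\to_{\betai}\pair{[N/x]A}{[N/x]B}$. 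As written, your case split explicitly denies that this can happen, so the case is not covered; the paper sidesteps the issue by splitting only on the shape of $M$: if $M$ is not a pair, then $M@i=Mi\to_R M'i\to_{\betac}^=M'@i$ regardless of what $M'$ is. Third, in the root-$\etac$ scenario you compute $M@i=\pair{\proj 1N}{\proj 2N}i\to_{\betac}\proj iN$, but since $M$ is a pair, $M@i$ is by definition already the component $\proj iN$; the bound survives, but the definition of $@$ is being misapplied.

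Your explanation of the $k\in\{1,2\}$ versus $k\in\{0,1,2\}$ asymmetry in (b), (c) is also backwards. The $k=0$ possibility for $R=\etaall$ (resp.\ $\etai$) does not come from ``an $\eta$-step turning a non-$\Lambda$ term into one''; it comes from a root step on $M=\Lb Y.M'Y$ (which \emph{is} a $\Lambda$, so $M@X$ fires): $M@X=[X/Y](M'Y)=M'X$, so the $R$-step is consumed by the substitution and at most one $\betaall$-step to $M'@X$ remains---zero steps when $M'$ is not a $\Lambda$. Conversely, when $R\neq\etaall$ and $M$ is a $\Lambda$, the step is necessarily internal, and $[X/Y]P\to_R[X/Y]P'$ retains exactly one $R$-step, which is why $k\geq 1$ is guaranteed there; and in (a) the lower bound fails for \emph{every} $R$ because a step inside the discarded component of a pair simply disappears ($M@i=N_i=M'@i$). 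These are precisely the points your ``finite, mechanical check'' has to get right, so they need to be spelled out rather than waved at; with the case analysis repaired along the paper's lines the argument goes through.
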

\begin{proof}
	Note that, $R$ being a reduction on $\fat$, $\to_{R}$ is a compatible relation on the proof terms of $\fat$.
	
	1.(a) Suppose that $M\to_{R}M'$. 
	There are two cases.
	
	If $M$ is not of the form $\pair{N_1}{N_2}$, then $M@i=Mi \to_R M'i \to_{\betac}^{=}M'@i$, where the last step uses Lemma \ref{lem:optimization}.
	
	If $M=\pair{N_1}{N_2}$, and since $M\to_{R}M'$, there are two sub-cases. 
	
	In the first sub-case, we have either $M'=\pair{N'_1}{N_2}$ and $N_1\to_{R}N'_1$, or $M'=\pair{N_1}{N'_2}$ and $N_2\to_{R}N'_2$. But then, if the reduction happens in the $i$-th component of the pair,
	$M@i=N_i\to_{R} N'_i =M'@i$; otherwise $M@i=N_i=M'@i$.
	
	In the second sub-case, we have $M=\pair{M'1}{M'2}\to _{\etac}M'$ (hence $R=\etac$). By definition of $@$ and Lemma \ref{lem:optimization}, we obtain $M@i=M'i \to_{\betac}^= M'@i$.
	
	1.(b) Suppose that $M\to_{R}M'$. There are two cases.
	
	If $M$ is not of the form $\Lambda Y.P$, then $M@X=MX \to_R M'X \to_{\betaall}^{=}M'@X$, where the last step uses Lemma \ref{lem:optimization}.
	
	If $M=\Lambda Y.P$, and since $M\to_{R}M'$, there are two sub-cases:
	
	In the first sub-case, we have $M'=\Lambda Y.P'$ and $P\to_{R}P'$. But then $M@X=[X/Y]P\to_{R}[X/Y]P' =M'@X$, due to the fact that from $P\to_RP'$ we have $[X/Y]P\to_R[X/Y]P'$.
	
	In the second sub-case, we have $M=\Lambda Y.M'Y\to_{\etaall}M'$ (hence $R=\etaall$). By definition of $@$, by the fact that $Y\notin M'$ and Lemma \ref{lem:optimization}, we obtain $M@X=[X/Y](M'Y)=M'X \to_{\betaall}^= M'@X$.
	
	1.(c) Suppose that $M\to_{R}M'$. There are two cases.
	
	If $M$ is not of the form $\Lambda x.P$, then $M@N=MN \to_R M'N \to_{\betai}^{=}M'@N$, where the last step uses Lemma \ref{lem:optimization}.
	
	If $M=\Lambda x.P$, and since $M\to_{R}M'$, there are two sub-cases.
	
	In the first sub-case, we have $M'=\Lambda x.P'$ and $P\to_{R}P'$. But then 
	$M@N=[N/x]P\to_{R}[N/x]P' =(\lb x.P')@N=M'@N$, due to the fact that from $P\to_RP'$ we have $[N/x]P\to_R[N/x]P'$.
	
	In the second sub-case, we have $M=\Lambda x.M'x\to_{\etai}M'$ (hence $R=\etai$). By definition of $@$, by the fact that $x\notin M'$ and Lemma \ref{lem:optimization}, we obtain $M@N=[N/x](M'x)=M'N \to_{\betai}^= M'@N$.
	
	2. Suppose that $N\to_{R}N'$. Let us prove that $M@N\to_R^* M@N'$. 
	
	If $M$ is not $\lb x.P$, then $M@N=MN \to_R MN' =M@N'$.
	
	If $M$ is $\lb x.P$ then $M@N=[N/x]P\to_{R}^*[N'/x]P =(\lb x.P)@N'=M@N'$, due to the fact that from $N\to_RN'$ we have $[N/x]P\to_R[N'/x]P$.
	
	3. Suppose that $M\to_{R}M'$. Let us prove that $\ppair{M}{N}\to_R^2\ppair{M'}{N}$. 
	
	$$\begin{array}{rcll}
		\ppair{M}{N}&=&\pair{\pair{M}{N}1}{\pair{M}{N}2}&\textrm{(by def. of $\ppair{.}{.}$)}\\
		&\to_{R}^2&\pair{\pair{M'}{N}1}{\pair{M'}{N}2}&\textrm{(by compatibility)}\\
		&=&	\ppair{M'}{N}&\textrm{(by def. of $\ppair{.}{.}$)}
	\end{array}$$
	
	4. The proof is entirely similar to the previous item.
	
	5.  Suppose that $M\to_{R}M'$. Let us prove that $\llb x.M \to_{R} \llb x.M'$.
	
	$$\begin{array}{rcll}
		\llb x.M&=&\lb w.(\lb x.M)w&\textrm{(by def. of $\llb$)}\\
		&\to_{R}&\lb w.(\lb x.M')w&\textrm{(by compatibility)}\\
		&=&	\llb x.M'&\textrm{(by def. of $\llb$)}
	\end{array}$$

\end{proof}

By an easy inspection of Lemma \ref{lem:additional-compatibility-rules-1}, we extract the following result: 

\begin{cor}\label{cor-lem:additional-compatibility-rules-1}In $\fat$, let $\to^*$ (resp.~$\to^+$) denote $\to_{\beta\eta}^*$ (resp.~$\to_{\beta\eta}^+$).
	\begin{enumerate}	\item \begin{enumerate}
			\item If $M \to^+M'$ then $M@X \to^*M'@X$, and $M@N \to^*M'@N$, and $M@i \to^*M'@i$.
			\item If $M \to_{\beta}^+M'$ then $M@X \to_{\beta}^+M'@X$, and $M@N \to_{\beta}^+M'@N$, but $M@i \to_{\beta}^*M'@i$.
		\end{enumerate}
		\item \begin{enumerate}
			\item If $N \to^+N'$ then $M@N \to^*M@N'$.
			\item If $N \to_{\beta}^+N'$ then $M@N \to_{\beta}^*M@N'$.
		\end{enumerate}
		\item  \begin{enumerate}
			\item If $M \to^+M'$ then $\ppair{M}{N} \to^+\ppair{M'}{N}$.
			\item If $M \to_{\beta}^+M'$ then $\ppair{M}{N} \to_{\beta}^+\ppair{M'}{N}$.
		\end{enumerate}
		\item  \begin{enumerate}
			\item If $N \to^+N'$ then $\ppair{M}{N} \to^+\ppair{M}{N'}$.
			\item If $N \to_{\beta}^+N'$ then $\ppair{M}{N} \to_{\beta}^+\ppair{M}{N'}$.
		\end{enumerate}
		\item  \begin{enumerate}
			\item If $M \to^+M'$ then $\llb x. M \to^+\llb x. M'$.
			\item If $M \to_{\beta}^+M'$ then $\llb x. M \to_{\beta}^+\llb x. M'$.
		\end{enumerate}
	\end{enumerate}
\end{cor}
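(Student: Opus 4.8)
The plan is to obtain every item by iterating the corresponding single-step statement of Lemma~\ref{lem:additional-compatibility-rules-1} along a reduction sequence, keeping track of the number of reduction steps produced so as to decide between $\to^\ast$ and $\to^+$ (and, for the $\beta$-only sub-items, restricting the sequence to $\beta$-rules). So suppose $M\to^+M'$, i.e.\ there is a sequence $M=M_0\to_{R_1}M_1\to_{R_2}\cdots\to_{R_n}M_n=M'$ with $n\ge1$ and each $R_j$ one of the $\beta\eta$-rules of $\fat$. For the first statement of 1(a), I would apply item 1(b) of Lemma~\ref{lem:additional-compatibility-rules-1} to each step to get $M_{j-1}@X\to_{R_j\betaall}^{k_j}M_j@X$; since $\betaall\subseteq\beta$ and each $R_j$ is already a $\beta\eta$-rule, all of these are $\beta\eta$-reductions, and concatenating them yields $M@X\to^\ast M'@X$. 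The other two statements of 1(a) are identical, using items 1(c) and 1(a) of the lemma instead.

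For 1(b), the sequence now has each $R_j\in\beta$, so in particular $R_j\neq\etaall$ and $R_j\neq\etai$; hence item 1(b) of Lemma~\ref{lem:additional-compatibility-rules-1} gives $k_j\in\{1,2\}$ for every $j$, and the concatenated reduction has $k_1+\cdots+k_n\ge n\ge1$ steps, so $M@X\to_\beta^+M'@X$; likewise $M@N\to_\beta^+M'@N$ via item 1(c). For $M@i$, by contrast, item 1(a) of Lemma~\ref{lem:additional-compatibility-rules-1} only yields $k_j\in\{0,1,2\}$ even when $R_j$ is a $\beta$-rule (a reduction taking place inside the component of $\pair{N_1}{N_2}$ other than the $i$-th contributes no step to $M@i$), so the best available conclusion is $M@i\to_\beta^\ast M'@i$, which is exactly what the statement records with ``but''.

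Items 2--5 will follow the same pattern. From $N\to^+N'$ (resp.\ $M\to^+M'$) I take the length-$n$ sequence and apply, step by step, item 2 (resp.\ 3, 4, 5) of Lemma~\ref{lem:additional-compatibility-rules-1}: item 2 contributes $\to_{R_j}^\ast$ per step, so only $\to^\ast$ is guaranteed (for instance if $M=\lb x.P$ with $x\notin P$ then $M@N=P=M@N'$, zero steps); items 3 and 4 contribute exactly two reductions per step, so with $n\ge1$ we get $\to^+$; item 5 contributes exactly one reduction per step, again giving $\to^+$. Each ``$\beta$-only'' sub-item is obtained verbatim by restricting the sequence to $\beta$-rules. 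I do not expect any genuine obstacle; the only point that needs a moment's care — and the reason the corollary is stated the way it is — is the bookkeeping of the exponents $k_j$ in item~1 of the lemma, which is exactly what forces the asymmetry between the $\to^\ast$ conclusion for $M@i$ and the $\to^+$ conclusions for $M@X$ and $M@N$.
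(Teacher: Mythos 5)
Your proposal is correct and is essentially the paper's own argument: the paper derives this corollary simply "by an easy inspection" of Lemma~\ref{lem:additional-compatibility-rules-1}, i.e.\ by iterating its single-step items along the reduction sequence, exactly as you do. Your bookkeeping of the exponents $k_j$ (in particular noting that $k_j\in\{1,2\}$ for $\beta$-steps under $@X$ and $@N$ but possibly $0$ under $@i$, and that item~2 of the lemma only yields $\to^\ast$) correctly accounts for the asymmetry between the $\to^+$ and $\to^\ast$ conclusions.
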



\begin{lem}\label{lem:additional-compatibility-rules-2}Let $R$ be a reduction rule of $\fat$. Then
	\begin{enumerate}
		\item If $M\to_RM'$ then $\aiinjn iMAB \to_R \aiinjn i{M'}AB$.
		\item If $M\to_RM'$ then $\accase MxPyQC \to_{R\beta}^* \accase {M'}xPyQC$. Moreover, if $M\to_{\beta}M'$ then \\$\accase MxPyQC \to_{\beta}^+ \accase {M'}xPyQC$.
		\item If $P\to^*P'$ then $\accase MxPyQC \to^* \accase {M}x{P'}yQC$. Moreover, if $P\to_{\beta}^*P'$ then $\accase MxPyQC \to_{\beta}^* \accase {M}x{P'}yQC$.
		\item If $Q\to^*Q'$ then $\accase MxPyQC \to^* \accase {M}x{P}y{Q'}C$. Moreover, if $Q\to_{\beta}^*Q'$ then $\accase MxPyQC \to_{\beta}^* \accase {M}x{P}y{Q'}C$.
		\item If $M\to_RM'$ then $\aabbort MA \to_{R\beta}^* \aabbort {M'}A$. Moreover, \\if $M\to_{\beta}M'$ then $\aabbort MA \to_{\beta}^+ \aabbort {M'}A$.
	\end{enumerate}	
\end{lem}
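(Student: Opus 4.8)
The plan is to prove items~1, 2 and~5 by induction on the type parameter ($C$ in item~2, $A$ in item~5) and items~3 and~4 by induction on $C$, in each case unfolding the relevant defining clause from Definition~\ref{def:admissible-constructions} and pushing the reduction of the contracted subterm through the surrounding context by means of the compatibility results already established in Lemma~\ref{lem:additional-compatibility-rules-1} and Corollary~\ref{cor-lem:additional-compatibility-rules-1}. Item~1 is immediate: $\aiinjn iMAB$ is literally $\Lb X.\lb w.\proj iwM$, so a step $M\to_R M'$ is carried to $\aiinjn i{M'}AB$ by a single compatibility step (in the argument of the projection, then under $\lb w$ and $\Lb X$; the side conditions on the bound variables persist since reduction does not create free variables).

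For item~2 the base case $C=X$ is the delicate one, because $\accase MxPyQX=(M@X)@\pair{\lb x.P}{\lb y.Q}$ nests two occurrences of $@$. A step $M\to_R M'$ first yields $M@X\to_{R\betaall}^k M'@X$ by Lemma~\ref{lem:additional-compatibility-rules-1}, item~1(b), and this reduction sequence is then transported through the outer $@$ step by step using item~1(c) of the same lemma, each transport possibly spawning the $\beta$-steps that $@$ forces ``on the fly''; collecting everything we land in $\to_{R\beta}^*$. For the ``moreover'' clause we decompose $M\to_\beta M'$ as a step of one of the atomic rules $\betai,\betac,\betaall$ --- none of which is $\etaall$ --- so Lemma~\ref{lem:additional-compatibility-rules-1}, item~1(b) already gives at least one step, and Corollary~\ref{cor-lem:additional-compatibility-rules-1}, item~1(b) keeps the count strictly positive through the outer $@$. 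The inductive cases $C=C_1\wedge C_2$, $C=C_1\supset C_2$, $C=\forall X.C_0$ are routine: apply the induction hypothesis to the sub-$\accasesymb$ occurring in the corresponding clause (with the roles of $P,Q$ now played by $\pproj iP,\pproj iQ$, or $P@z,Q@z$, or $P@X,Q@X$) and close under the surrounding pair/$\lambda$/$\Lambda$ by compatibility of $\to_{R\beta}$; strict positivity is plainly preserved by those constructors.

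For items~3 and~4 it is convenient to first reduce to a single step $P\to_S P'$ (resp.~$Q\to_S Q'$), the general claim following by transitivity, and then to induct on $C$. The base case $C=X$ carries the step inside $\pair{\lb x.P}{\lb y.Q}$ by compatibility and then through $(M@X)@(\cdot)$ by Lemma~\ref{lem:additional-compatibility-rules-1}, item~2. In the case $C=C_1\wedge C_2$ the subtle point is that a single step $P\to_S P'$ only produces $\pproj iP\to^*\pproj iP'$ --- possibly a two-step reduction, since $\pproj i$ may itself force a $\betac$-step (Lemma~\ref{lem:additional-compatibility-rules-1}, item~1(a)) --- so the induction hypothesis for $C_i$ must be invoked in its multi-step form; this is precisely why items~3 and~4 are stated directly for $\to^*$. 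The cases $C=C_1\supset C_2$ and $C=\forall X.C_0$ are analogous, moving the step across $@z$ and $@X$ via items~1(c) and~1(b). Item~5 is the mildest of the three inductions, since $\aabbort MA$ never nests $@$: in the base case $A=X$ a step $M\to_R M'$ gives $M@X\to_{R\betaall}^k M'@X$ directly (with $k\ge1$ whenever $R$ is a $\beta$-rule), and the inductive cases $A_1\wedge A_2$, $B\supset C$, $\forall X.A_0$ merely propagate the reduction under pair, $\lambda$ and $\Lambda$.

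The main obstacle is bookkeeping rather than insight: one must track carefully which reduction rules are used and how many steps appear, in particular distinguishing the general statement ($\to_{R\beta}^*$, which may well be zero steps --- e.g.\ when $R=\etaall$ makes $M@X$ and $M'@X$ coincide in the base case of item~2) from the ``moreover'' statement ($\to_\beta^+$, strictly positive), and one must be alert to the fact that $\pproj i$ --- and, at the base case of item~2, the doubled $@$ --- can silently turn a single contracted redex into a short $\beta$-reduction sequence, which is why items~3 and~4 are formulated and proved in multi-step form from the outset.
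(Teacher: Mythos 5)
Your proposal is correct and follows essentially the same route as the paper's proof: item 1 by a single compatibility step, items 2 and 5 by induction on the type using Lemma \ref{lem:additional-compatibility-rules-1} at the base case and compatibility plus the induction hypothesis at the compound cases (with the $\beta$-clauses giving the strict ``moreover'' counts), and items 3 and 4 by induction on $C$ carried out directly in multi-step form via Corollary \ref{cor-lem:additional-compatibility-rules-1}. Your passing suggestion to first reduce items 3--4 to a single step is exactly what the $\pproj i$ case forbids, but you identify and repair this yourself, landing on the paper's formulation.
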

\begin{proof} 
	1. Suppose $M\to_RM'$. Then
	$$LHS=\Lambda X.\lb w.wiM\to_R\Lb X.\lb w.wiM'=RHS.$$
	
	2. Suppose $M\to_RM'$. By induction on $C$.
	
	Case $C=X$.
	$$
	\begin{array}{rcll}
		LHS&=&M@X@\pair{\lb x.P}{\lb y.Q}&\textrm{(by def. of $\accasesymb$)}\\
		&\to_{R\beta}^{\leq 4}&M'@X@\pair{\lb x.P}{\lb y.Q}&\textrm{(by Lemma \ref{lem:additional-compatibility-rules-1}.1)}\\
		&=&RHS&\textrm{(by def. of $\accasesymb$)}
	\end{array}
	$$
	
	Case $C=C_1\supset C_2$.
	$$
	\begin{array}{rcll}
		LHS&=&\lb z.\accase Mx{P@z}y{Q@z}{C_2}&\textrm{(by def. $\accasesymb$)}\\
		&\to_{R\beta}^*&\lb z.\accase {M'}x{P@z}y{Q@z}{C_2}&\textrm{(by IH)}\\
		&=&RHS&\textrm{(by def. $\accasesymb$)}
	\end{array}
	$$
	Case $C=C_1\wedge C_2$.
	$$
	\begin{array}{rcll}
		LHS&=&\ipair{i}{\accase Mx{\pproj iP}y{\pproj iQ}{C_i}}&\textrm{(by def. $\accasesymb$)}\\
		&\to_{R\beta}^*&\ipair{i}{\accase {M'}x{\pproj iP}y{\pproj iQ}{C_i}}&\textrm{(by IH twice)}\\
		&=&RHS&\textrm{(by def. $\accasesymb$)}
	\end{array}
	$$
	Case $C=\forall X. D$.
	$$
	\begin{array}{rcll}
		LHS&=&\Lb X.\accase Mx{P@X}y{Q@X}{D}&\textrm{(by def. $\accasesymb$)}\\
		&\to_{R\beta}^*&\Lb X.\accase {M'}x{P@X}y{Q@X}{D}&\textrm{(by IH)}\\
		&=&RHS&\textrm{(by def. $\accasesymb$)}
	\end{array}
	$$
	
	The more refined result when $R$ is a $\beta$-reduction comes from Lemma \ref{lem:additional-compatibility-rules-1}.1.(b) and (c).
	
	3. By induction on $C$.
	
	Case $C=X$.
	$$
	\begin{array}{rcll}
		LHS&=&M@X@\pair{\lb x.P}{\lb y.Q}&\textrm{(by def. of $\accasesymb$)}\\
		&\to^*&M@X@\pair{\lb x.P'}{\lb y.Q}&\textrm{(by Corollary \ref{cor-lem:additional-compatibility-rules-1}, since $\lb x.P\to^*\lb x.P'$)}\\
		&=&RHS&\textrm{(by def. of $\accasesymb$)}
	\end{array}
	$$
	
	Case $C=C_1\supset C_2$.
	$$
	\begin{array}{rcll}
		LHS&=&\lb z.\accase Mx{P@z}y{Q@z}{C_2}&\textrm{(by def. $\accasesymb$)}\\
		&\to^*&\lb z.\accase {M}x{P'@z}y{Q@z}{C_2}&\textrm{(by IH and Corollary \ref{cor-lem:additional-compatibility-rules-1})}\\
		&=&RHS&\textrm{(by def. $\accasesymb$)}
	\end{array}
	$$
	Case $C=C_1\wedge C_2$.
	$$
	\begin{array}{rcll}
		LHS&=&\ipair{i}{\accase Mx{\pproj iP}y{\pproj iQ}{C_i}}&\textrm{(by def. $\accasesymb$)}\\
		&\to^*&\ipair{i}{\accase {M}x{\pproj i{P'}}y{\pproj iQ}{C_i}}&\textrm{(by IH twice and Corollary \ref{cor-lem:additional-compatibility-rules-1})}\\
		&=&RHS&\textrm{(by def. $\accasesymb$)}
	\end{array}
	$$
	Case $C=\forall X. D$.
	$$
	\begin{array}{rcll}
		LHS&=&\Lb X.\accase Mx{P@X}y{Q@X}{D}&\textrm{(by def. $\accasesymb$)}\\
		&\to^*&\Lb X.\accase {M}x{P'@X}y{Q@X}{D}&\textrm{(by IH and Corollary \ref{cor-lem:additional-compatibility-rules-1})}\\
		&=&RHS&\textrm{(by def. $\accasesymb$)}
	\end{array}
	$$
	
	The more refined result for $\beta$-reductions comes from clauses (b) of Corollary \ref{cor-lem:additional-compatibility-rules-1}.
	
	4. Analogous to 3.
	
	5. Suppose $M\to_{R}M'$. By induction on $C$.
	
	Case $C=X$.
	$$
	\begin{array}{rcll}
		LHS&=&M@X &\textrm{(by def. of $\aabbortsymb$)}\\
		&\to_{R\beta}^{\leq 2}&M'@X&\textrm{(by Lemma \ref{lem:additional-compatibility-rules-1})}\\
		&=&RHS&\textrm{(by def. of $\aabbortsymb$)}
	\end{array}
	$$
	
	Case $C=C_1\supset C_2$.
	$$
	\begin{array}{rcll}
		LHS&=&\lb z.\aabbort M{C_2}&\textrm{(by def. of $\aabbortsymb$)}\\
		&\to_{R\beta}^*&\lb z.\aabbort {M'}{C_2}&\textrm{(by IH)}\\
		&=&RHS&\textrm{(by def. of $\aabbortsymb$)}
	\end{array}
	$$
	
	Case $C=C_1\wedge C_2$.
	$$
	\begin{array}{rcll}
		LHS&=&\ipair i{\aabbort M{C_i}}&\textrm{(by def. of $\aabbortsymb$)}\\
		&\to_{R\beta}^*&\ipair i{\aabbort {M'}{C_i}}&\textrm{(by IH twice)}\\
		&=&RHS&\textrm{(by def. of $\aabbortsymb$)}
	\end{array}
	$$
	
	Case $C=\forall X.C_0$.
	$$
	\begin{array}{rcll}
		LHS&=&\Lb X.\aabbort M{C_0}&\textrm{(by def. of $\aabbortsymb$)}\\
		&\to_{R\beta}^*&\Lb z.\aabbort {M'}{C_0}&\textrm{(by IH)}\\
		&=&RHS&\textrm{(by def. of $\aabbortsymb$)}
	\end{array}
	$$
	
	The more refined result when $R$ is a $\beta$-reduction comes from Lemma \ref{lem:additional-compatibility-rules-1}.1.(b).

\end{proof}



\begin{lem}\label{lem:term-subst-@}In $\fat$:
	\begin{enumerate}
		\item If $P=x$ and $N$ is an abstraction then\\ $[N/x](P@Q)\,\to_{\betai}\,([N/x]P)@([N/x]Q)$. Otherwise, $[N/x](P@Q)\,=\,([N/x]P)@([N/x]Q)$.\\ In particular $[N/x](P@Q)\,\to_{\betai}^=\,([N/x]P)@([N/x]Q)$.
		\item If $P=x$ and $N$ is a pair then $[N/x](\pproj iP)\,\to_{\betac}\,\pproj i{[N/x]P}$. Otherwise, $[N/x](\pproj iP)\,=\,\pproj i{[N/x]P}$. In particular $[N/x](\pproj iP)\,\to_{\betac}^=\,\pproj i{[N/x]P}$. 
		\item If $P=x$ and $N$ is a generalization then $[N/x](P@Z)\,\to_{\betaall}\,([N/x]P)@Z$. Otherwise, $[N/x](P@Z)\,=\,([N/x]P)@Z$. In particular $[N/x](P@Z)\to_{\betaall}^=([N/x]P)@Z$. 
	\end{enumerate}
\end{lem}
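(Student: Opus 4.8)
The plan is to prove all three items by a case analysis on the shape of $P$, reading off the defining clauses of $@$ and $\pproj{i}{\cdot}$ from Fig.~\ref{fig:optimized-elims} and using the standard substitution lemmas for $\fat$ (commutation of a term substitution with a term substitution, resp.\ with a type substitution, after $\alpha$-renaming bound variables out of the way). The three items are strictly parallel: item~1 pairs the term application $@$ with $\lambda$-abstractions, item~2 pairs the projection $\pproj{i}{\cdot}$ with pairs, and item~3 pairs the type application $@Z$ with $\Lambda$-abstractions; so I would carry out the argument in full for item~1 and remark that items~2 and~3 follow by the evident adaptation, the only new ingredient in item~3 being that one uses commutation of a term substitution with a \emph{type} substitution.

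For item~1, first I would treat the case in which $P$ is neither the variable $x$ nor a $\lambda$-abstraction. In that case $[N/x]P$ is not a $\lambda$-abstraction either (substitution leaves the outermost constructor unchanged, and a variable distinct from $x$ remains a variable), so both $P@Q$ and $([N/x]P)@([N/x]Q)$ fall under the ``otherwise'' clause of $@$, and the claimed identity is just $[N/x](PQ) = ([N/x]P)([N/x]Q)$. Next, if $P = \lambda z.P_0$, I would use $\alpha$-conversion to assume $z \neq x$ and $z \notin N$; then $P@Q = [Q/z]P_0$ while $[N/x]P = \lambda z.[N/x]P_0$ is again a $\lambda$-abstraction with $([N/x]P)@([N/x]Q) = [([N/x]Q)/z]([N/x]P_0)$, so the identity is exactly the substitution lemma $[N/x][Q/z]P_0 = [([N/x]Q)/z][N/x]P_0$.

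The remaining case, $P = x$, is the only one producing a reduction. Here $P@Q = xQ$, so $[N/x](P@Q) = N([N/x]Q)$, while $([N/x]P)@([N/x]Q) = N@([N/x]Q)$. If $N$ is not a $\lambda$-abstraction, this latter term is literally $N([N/x]Q)$ and we have an identity; if $N = \lambda z.N_0$ (with $z$ taken fresh), then $N@([N/x]Q) = [([N/x]Q)/z]N_0$ whereas $N([N/x]Q) = (\lambda z.N_0)([N/x]Q) \to_{\betai} [([N/x]Q)/z]N_0$, which is the single $\betai$-step asserted. The ``in particular'' clause is then obtained by merging the cases into $\to_{\betai}^=$, and items~2 and~3 are proved the same way, with the $\betai$-step replaced by a $\betac$- resp.\ $\betaall$-step arising when $N$ is a pair resp.\ a $\Lambda$-abstraction. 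I expect no real obstacle here; the only thing demanding a little care is the bookkeeping of freshness side-conditions ($z \neq x$, $z \notin N$ for items~1--2, and $Y \neq Z$, $Y \notin \FTV{N}$ for item~3) that make the substitution lemma applicable and keep the head constructor of $[N/x]P$ intact.
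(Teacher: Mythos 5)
Your proposal is correct and follows essentially the same route as the paper's proof: a case analysis on $P$, using the substitution lemma of $\fat$ in the abstraction (resp.\ pair, generalization) case, and observing that the only source of a $\betai$- (resp.\ $\betac$-, $\betaall$-) step is $P=x$ with $N$ an abstraction (resp.\ pair, generalization). The only difference is cosmetic: the paper organizes the non-abstraction cases by asking whether $[N/x]P$ is an abstraction, while you split on $P=x$ versus $P\neq x$, which amounts to the same case distinction.
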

\begin{proof}
	Each item is proved by case analysis of $P$.
	
	1. If $P=\lb z.P'$, then $[N/x](P@Q)=[N/x][Q/z]P'\stackrel{(*)}{=}[[N/x]Q/z][N/x]P'\\=(\lb z.[N/x]P')@([N/x]Q)=([N/x]P)@([N/x]Q)$, where equality marked with $(*)$ is justified by the familiar substitution lemma in $\fat$. If $P$ is not an abstraction, then $([N/x]P)$ is an abstraction only if $P=x$ and $N=\lb y.N'$, say, in which case $[N/x](P@Q)=(\lb y.N')([N/x]Q)\to_{\betai}[[N/x]Q/y]N'=(\lb y.N')@([N/x]Q)=([N/x]P)@([N/x]Q)$. If $([N/x]P)$ is not an abstraction either, then $([N/x]P)@([N/x]Q)=[N/x](PQ)=[N/x](P@Q)$.
	
	2. and 3. The proof is done in a similar way considering pair and generalization respectively instead of abstraction.
\end{proof}

\begin{lem}\label{lem:type-subst-@}In $\fat$:
	\begin{enumerate}
		\item $[Y/X](P@Q)=([Y/X]P)@([Y/X]Q)$.
		\item $[Y/X](\pproj iP)=\pproj i{[Y/X]P}$.
		\item $[Y/X](P@Z)=([Y/X]P)@[Y/X]Z$.
	\end{enumerate}
\end{lem}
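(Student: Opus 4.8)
The plan is to prove each of the three items by case analysis on the shape of $P$, in the same style as the proof of Lemma~\ref{lem:term-subst-@}, but noting that this time no $\beta$-reduction is ever forced, so each conclusion is a plain syntactic identity.

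For item~1, I would distinguish whether $P$ is a $\lambda$-abstraction. If $P=\lb z.P'$, with the bound variable $z$ chosen (by $\alpha$-conversion) fresh for $X$ and $Y$, then $[Y/X](P@Q)=[Y/X]([Q/z]P')$, which by the familiar substitution lemma of $\fat$ (commutation of a type substitution with a term substitution) equals $[([Y/X]Q)/z]([Y/X]P')$; and this is exactly $(\lb z.[Y/X]P')@([Y/X]Q)=([Y/X]P)@([Y/X]Q)$. If $P$ is not a $\lambda$-abstraction, then $P@Q=PQ$, so $[Y/X](P@Q)=([Y/X]P)([Y/X]Q)$; the point is that a type substitution does not change the outermost term constructor, hence $[Y/X]P$ is still not a $\lambda$-abstraction, whence $([Y/X]P)@([Y/X]Q)=([Y/X]P)([Y/X]Q)$, as required. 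Items~2 and~3 are treated identically, replacing ``$\lambda$-abstraction'' by ``pair $\pair{N_1}{N_2}$'' and by ``generalization $\Lb W.P'$'' respectively (in the latter case again choosing $W$ fresh for $X,Y$), using the matching defining clause of the corresponding optimized eliminator together with the appropriate instance of the substitution lemma.

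The only point worth flagging is the contrast with Lemma~\ref{lem:term-subst-@}: there, the substituted object $N$ could itself be an abstraction, a pair, or a generalization, so replacing $P=x$ by $N$ could expose a fresh redex that $@$ immediately contracts, producing a $\betai$, $\betac$, or $\betaall$ step. Here the substituted object is merely a type variable $Y$, which can never turn a non-abstraction into an abstraction (nor a non-pair into a pair, etc.), so the ``otherwise'' branch of each defining clause in Fig.~\ref{fig:optimized-elims} is preserved by $[Y/X]$ on both sides, and the identities hold on the nose. In consequence there is no genuine obstacle; the only routine care needed is the usual freshness bookkeeping for the bound variable ($z$, resp.\ $W$) with respect to $X$ and $Y$ when invoking the substitution lemma.
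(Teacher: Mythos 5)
Your proof is correct and follows essentially the same route as the paper: a case analysis on the shape of $P$, mirroring Lemma~\ref{lem:term-subst-@} but simpler, with the key observation — explicitly made in the paper's proof — that a type substitution $[Y/X]$ cannot turn a non-abstraction (resp.\ non-pair, non-generalization) into one, so no reduction step is ever forced and the identities hold syntactically.
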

\begin{proof}
	Each item is proved by case analysis of $P$. The proof if similar to that of Lemma \ref{lem:term-subst-@}, but simpler, because whenever $P$ is not an abstraction (resp. pair, generalization), neither is $[Y/X]P$. 
\end{proof}
\begin{lem}\label{lem:term-subst-expansions} In $\fat$:
	\begin{enumerate}
		\item $[N/x](\llb y.M)=\llb y. [N/x]M$.
		\item $[N/x]\ppair{M_1}{M_2}=\ppair{[N/x]M_1}{[N/x]M_2}.$
	\end{enumerate}
\end{lem}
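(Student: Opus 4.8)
The plan is to prove both identities directly, by unfolding the definitions of the expanded constructors $\llb$ and $\ppair{\cdot}{\cdot}$ and then pushing the substitution $[N/x]$ inside, using the fact that in $\fat$ capture-avoiding substitution commutes with each primitive term former ($\lb$-abstraction, application, pairing, projection) once the bound variables have been chosen fresh. No induction is needed: each item is a one-line computation modulo $\alpha$-equivalence.

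For item~1, recall that $\llb y.M=\lb w.(\lb y.M)w$ with $w\notin M$. Working, as always, modulo $\alpha$-equivalence, we may in addition assume that $w\neq x$, $w\notin N$, $y\neq x$ and $y\notin N$; this renaming of bound variables is precisely what legitimises pushing $[N/x]$ through the two $\lb$'s without capture. Then
\[ [N/x](\llb y.M)\;=\;[N/x]\bigl(\lb w.(\lb y.M)w\bigr)\;=\;\lb w.\bigl(\lb y.[N/x]M\bigr)w, \]
where the last step uses $w\neq x$ (so that $[N/x]w=w$). Since $w\notin M$ and $w\notin N$, we have $w\notin\FV{[N/x]M}$, so the right-hand side is by definition $\llb y.[N/x]M$, as required.

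For item~2, recall $\ppair{M_1}{M_2}=\pair{\proj 1{\pair{M_1}{M_2}}}{\proj 2{\pair{M_1}{M_2}}}$. Here no binders are involved, so the substitution merely distributes over pairing and over the two projections:
\[ [N/x]\ppair{M_1}{M_2}\;=\;\pair{\proj 1{\pair{[N/x]M_1}{[N/x]M_2}}}{\proj 2{\pair{[N/x]M_1}{[N/x]M_2}}}\;=\;\ppair{[N/x]M_1}{[N/x]M_2}. \]
The only point that deserves a word of care is the bound-variable hygiene in item~1, namely making explicit that the freshness condition $w\notin M$ built into the notation $\llb y.M$, together with the standing $\alpha$-conversion convention, lets us assume $w$ and $y$ distinct from $x$ and absent from $N$; with that in place everything reduces to the homomorphism property of substitution. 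I therefore expect no genuine obstacle — this is one of the bureaucratic lemmas announced at the start of the subsection.
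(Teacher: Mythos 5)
Your proof is correct and follows essentially the same route as the paper, which simply observes that the claims are straightforward from the definitions of $\llb$ and $\ppair{\cdot}{\cdot}$ and the usual substitution in $\fat$; your explicit attention to the freshness of $w$ and the $\alpha$-convention in item~1 is exactly the bureaucracy the paper leaves implicit.
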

\begin{proof}
	The proof is straightforward from the usual substitution in $\fat$ and the definitions of $\llb$ and $\ppair{\cdot}{\cdot}$ respectively. 
\end{proof}

\begin{lem}\label{lem:subst-admissible-optimized-constructions} In $\fat$:
	\begin{enumerate}
		\item
		\begin{enumerate}
			\item {\small{$[N/z]\accase MxPyQC\to_{\beta}^*\accase{[N/z]M}x{[N/z]P}y{[N/z]Q}C$}}.
			\item {\small{$[Y/X]\accase MxPyQC=\\ =\accase{[Y/X]M}x{[Y/X]P}y{[Y/X]Q}{[Y/X]C}$}}.
		\end{enumerate}
		\item
		\begin{enumerate}
			\item $[N/z]\aabbort MC\to_{\beta}^*\aabbort{[N/z]M}C$.
			\item $[Y/X]\aabbort MC=\aabbort{[Y/X]M}{[Y/X]C}$.
		\end{enumerate}
		\item  Item 1.(a) (resp. 2.(a)) has a variant form that holds in a strong form (with syntactic identity), when $z\notin M,P,Q$ and $N$ is not an abstraction (resp. $z\notin M$ and $N$ arbitrary):
		\begin{enumerate}
			\item $[N/z]\accase Mx{P@z}y{Q@z}C=\accase{M}x{P@N}y{Q@N}C$.
			\item $[N/z]\aabbort MC=\aabbort{M}C$.
		\end{enumerate}
	\end{enumerate}
\end{lem}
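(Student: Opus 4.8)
The plan is to prove every item by induction on the formula that drives the recursive definition of $\accasesymb$ (resp.\ $\aabbortsymb$) in Definition~\ref{def:admissible-constructions}: in each case one unfolds the appropriate clause of the definition, pushes the substitution past the optimized constructors using the substitution lemmas already established (Lemmas~\ref{lem:term-subst-@} and~\ref{lem:type-subst-@}), applies the induction hypothesis to the branches, collects the resulting reduction steps with the compatibility results (Lemma~\ref{lem:additional-compatibility-rules-1}, Corollary~\ref{cor-lem:additional-compatibility-rules-1}, Lemma~\ref{lem:additional-compatibility-rules-2}), and reassembles by the same clause; throughout, the bound variables occurring in the definitions ($x$, $y$, the bound variable of the $\supset$-clause, the bound type variable of the $\forall$-clause) are silently $\alpha$-renamed away from the substituted variable, from $N$, and from $Y$. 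One preliminary remark: item~3.(b) needs no induction at all, since by Lemma~\ref{lem:miracle_abort_FV} we have $\FV{\aabbort MC}=\FV M$, so $z\notin M$ already forces $z\notin\aabbort MC$ and hence $[N/z]\aabbort MC=\aabbort MC$, for $N$ arbitrary.

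Item~1.(a) is proved by induction on $C$. For $C=X$ one has $\accase MxPyQX=M@X@\pair{\lb x.P}{\lb y.Q}$; applying $[N/z]$ and Lemma~\ref{lem:term-subst-@}.1 (to the outer $@$) together with Lemma~\ref{lem:term-subst-@}.3 (to $M@X$) moves the substitution inward up to a bounded number of $\betai/\betaall$-steps, the ordinary pair and $\lb$'s are handled by the usual substitution lemma of $\fat$, and Corollary~\ref{cor-lem:additional-compatibility-rules-1} collects the steps under the outer $@$, giving $[N/z]\accase MxPyQX\to_{\beta}^*\accase{[N/z]M}x{[N/z]P}y{[N/z]Q}X$. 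For $C=C_1\supset C_2$ one unfolds to $\lb w.\accase Mx{P@w}y{Q@w}{C_2}$ with $w$ fresh; the substitution enters under $\lb w$, the induction hypothesis applies (it is essential that the hypothesis is stated for \emph{arbitrary} branches, since here the branches are $P@w$ and $Q@w$, not $P$ and $Q$), and Lemma~\ref{lem:term-subst-@}.1 reconciles $[N/z](P@w)$ with $([N/z]P)@w$ up to a $\betai$-step, propagated by Lemma~\ref{lem:additional-compatibility-rules-2}.3--4. The cases $C=C_1\wedge C_2$ and $C=\forall X.C_0$ are analogous, using Lemma~\ref{lem:term-subst-@}.2 (so that $[N/z]$ commutes with $\pproj i{\cdot}$ up to $\betac$) and Lemma~\ref{lem:term-subst-@}.3 respectively. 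Item~2.(a) is the same induction, strictly simpler since $\aabbortsymb$ has no branches and no occurrence of $\pproj i{\cdot}$.

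For the type-substitution items 1.(b) and 2.(b) one runs the very same inductions with $[Y/X]$ in place of $[N/z]$: the difference is that Lemma~\ref{lem:type-subst-@} yields \emph{exact} equalities rather than the $\to_{\beta}^=$ corrections of Lemma~\ref{lem:term-subst-@} (type substitution never creates a redex through $@$ or $\pproj i{\cdot}$), and $[Y/X]$ leaves the term-level bound variables untouched; hence the recursion delivers a plain syntactic identity, in which the atom $C$ is acted on — whence the $[Y/X]C$ in the conclusion. For item~3.(a) I would re-run the proof of item~1.(a), observing that the strengthened hypotheses — $z\notin M,P,Q$ and $N$ not an abstraction — are precisely what is needed to force every invocation of Lemma~\ref{lem:term-subst-@} into its ``otherwise'' (equality) branch: with $z$ absent from $M,P,Q$, the term in function position of any $@$ met along the recursion is never the bare variable $z$ (and the argument of any $\pproj i{\cdot}$ met is never $z$), while the residual ``variable with $N$ an abstraction'' case is excluded by the hypothesis on $N$; computing in addition $[N/z](P@z)=P@N$ and $[N/z](Q@z)=Q@N$ directly, the whole chain collapses to the syntactic identity $[N/z]\accase Mx{P@z}y{Q@z}C=\accase Mx{P@N}y{Q@N}C$.

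I expect item~3.(a) to be the main obstacle: one must state and preserve through the induction the precise invariant that the auxiliary arguments inserted by the recursive clauses — $P@w$ in the $\supset$-clause, $\pproj iP$ in the $\wedge$-clause, $P@X$ in the $\forall$-clause — never reinstate $z$ in a position where an optimization of Fig.~\ref{fig:optimized-elims} (equivalently, a side condition of Lemma~\ref{lem:term-subst-@}) could trigger, the $\supset$-clause being exactly where the ``$N$ not an abstraction'' hypothesis is consumed. The only other point requiring attention is that in item~1.(a) the number of $\beta$-steps genuinely cannot be controlled (the $@i$ clause of Lemma~\ref{lem:additional-compatibility-rules-1}.1 may contribute zero steps), which is why item~1.(a) is phrased with $\to_{\beta}^*$ and not $\to_{\beta}^+$; everything else is routine bookkeeping.
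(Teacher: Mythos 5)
Your handling of items 1 and 2 is essentially the paper's proof: the same induction on $C$, unfolding the clause of $\accasesymb$ (resp.\ $\aabbortsymb$), pushing the substitution through with Lemmas~\ref{lem:term-subst-@} and~\ref{lem:type-subst-@}, and collecting the steps with Lemma~\ref{lem:additional-compatibility-rules-1}, Corollary~\ref{cor-lem:additional-compatibility-rules-1} and Lemma~\ref{lem:additional-compatibility-rules-2}; your remark that the induction hypothesis must be available for arbitrary branches is exactly the point. Your shortcut for 3.(b) --- $z\notin M$ gives $z\notin\FV{\aabbort MC}$ by Lemma~\ref{lem:miracle_abort_FV}, so the substitution is vacuous --- is correct, non-circular, and in fact cleaner than the paper's route, which re-inspects the proof of 2.(a).

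The gap is in your argument for 3.(a). The invariant you rely on --- ``with $z$ absent from $M,P,Q$, the term in function position of any $@$ met along the recursion is never the bare variable $z$ (and the argument of any $\pproj i{(\cdot)}$ met is never $z$)'' --- is false: the branches in 3.(a) are $P@z$ and $Q@z$, and $P@z$ can be literally $z$ even though $z\notin P$ (take $P=\lb w.w$). The recursion then places that term in function position in the $\supset$-clause and under $\pproj i{(\cdot)}$ in the $\wedge$-clause, so the dangerous instances of Lemma~\ref{lem:term-subst-@} do arise; this is precisely the phenomenon behind the counter-example in footnote~\ref{fn:counter-example}, and it is the only reason the hypothesis on $N$ is there at all --- indeed you yourself say that hypothesis is ``consumed'' in the $\supset$-clause, which contradicts the invariant. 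Once the branch can be $z$, the $\wedge$-clause would need ``$N$ not a pair'' (item 2 of Lemma~\ref{lem:term-subst-@}) and the $\forall$-clause ``$N$ not a generalization'' (item 3), neither of which follows from ``$N$ not an abstraction''. Concretely, with $M=m$ a variable, $P=\lb w.w$, $Q=\lb w.q$, $C=X\wedge X$ and $N=\pair{a}{b}$ (not an abstraction, $z\notin M,P,Q$), the left-hand side of 3.(a) contains the subterm $\lb x.\proj 1{\pair{a}{b}}$ where the right-hand side has $\lb x.a$, so the claimed identity does not hold syntactically. To be fair, the paper's own one-line justification of 3.(a) (``go through the proof of 1.(a)\dots'') glosses over the same cases, and in every use of 3.(a) in the paper $N$ is a variable, for which your argument (and the paper's) does go through; a rigorous treatment either strengthens the hypothesis (e.g.\ $N$ a variable, or $N$ neither abstraction nor pair nor generalization) or replaces the blanket claim by the $z$-special machinery of Lemma~\ref{lem:z-special}, which is how Section~\ref{sec:special} handles the case that actually matters.
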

\begin{proof}
	1(a). By induction on $C$.
	
	Case $C=X$.
	$$
	\begin{array}{rcll}
		LHS&=&[N/z]((M@X)@\pair{\lb x.P}{\lb y.Q})&\textrm{(by def. of $\accasesymb$)}\\
		&\to_{\beta}^=&([N/z](M@X))@(\pair{\lb x.[N/z]P}{\lb y.[N/z]Q})&\textrm{(by Lem. \ref{lem:term-subst-@})}\\
		&\to_{\beta}^{\leq 2}&(([N/z]M)@X))@\pair{\lb x.[N/z]P}{\lb y.[N/z]Q}&\textrm{(by Lem. \ref{lem:term-subst-@} and \ref{lem:additional-compatibility-rules-1})}\\
		&=&RHS&\textrm{(by def. of $\accasesymb$)}
	\end{array}
	$$
	
	Case $C=C_1\supset C_2$.
	$$
	\begin{array}{rcll}
		LHS&=&\lb w.[N/z](\accase Mx{P@w}y{Q@w}{C_2})&
		\\
		&\to_{\beta}^*&\lb w.\accase {[N/z]M}x{[N/z](P@w)}y{[N/z](Q@w)}{C_2}&\textrm{(by IH)}\\
		&\to_{\beta}^*&\lb w.\accase {[N/z]M}x{([N/z]P)@w}y{([N/z]Q)@w}{C_2}&
		\\
		&=&RHS&
	\end{array}
	$$
	The first and fourth steps above follow from the definition of $\accasesymb$, and the third step relies on Lemmas \ref{lem:term-subst-@} and \ref{lem:additional-compatibility-rules-2}.\\
	Case $C=C_1\wedge C_2$.
	$$
	\begin{array}{rcll}
		LHS&=&\ipair{i}{[N/z]\accase Mx{\pproj iP}y{\pproj iQ}{C_i}}&
		\\
		&\to_{\beta}^*&\ipair{i}{\accase {[N/z]M}x{[N/z](\pproj iP)}y{[N/z](\pproj iQ)}{C_i}}&
		\\
		&\to_{\beta}^*&\ipair{i}{\accase {[N/z]M}x{\pproj i{([N/z]P)}}y{\pproj i{([N/z]Q)}}{C_i}}&
		\\
		&=&RHS&
	\end{array}
	$$
	The first and fourth steps above follow from the definition of $\accasesymb$, the second by applying the IH twice and the third relies on Lemmas \ref{lem:term-subst-@} and \ref{lem:additional-compatibility-rules-2}.\\
	Case $C=\forall C_0$.
	$$
	\begin{array}{rcll}
		LHS&=&\Lb X.[N/z]\accase Mx{P@X}y{Q@X}{C_0}&
		\\
		&\to_{\beta}^*&\Lb X.\accase {[N/z]M}x{[N/z](P@X)}y{[N/z](Q@X)}{C_0}&\textrm{(by IH)}\\
		&\to_{\beta}^*&\Lb X.\accase {[N/z]M}x{([N/z]P)@X}y{([N/z]Q)@X}{C_0}&
		\\
		&=&RHS&
	\end{array}
	$$
	The first and fourth steps above follow from the definition of $\accasesymb$, and the third step relies on Lemmas \ref{lem:term-subst-@} and \ref{lem:additional-compatibility-rules-2}.
	
	1 (b). Straightforward induction on $C$, using Lemma \ref{lem:type-subst-@}.
	
	2 (a). By induction on $C$.
	
	Case $C=X$.
	$$
	\begin{array}{rcll}
		LHS&=&[N/z](M@X) &\textrm{(by def. of $\aabbortsymb$)}\\
		&\to_{\beta}^=&([N/z]M)@X&\textrm{(by Lemma \ref{lem:term-subst-@})}\\
		&=&RHS&\textrm{(by def. of $\aabbortsymb$)}
	\end{array}
	$$
	
	Case $C=C_1\supset C_2$.
	$$
	\begin{array}{rcll}
		LHS&=&\lb z.[N/z]\aabbort M{C_2}&\textrm{(by def. of $\aabbortsymb$)}\\
		&\to_{\beta}^*&\lb z.\aabbort {[N/z]M}{C_2}&\textrm{(by IH)}\\
		&=&RHS&\textrm{(by def. of $\aabbortsymb$)}
	\end{array}
	$$
	
	Case $C=C_1\wedge C_2$.
	$$
	\begin{array}{rcll}
		LHS&=&\ipair i{[N/z]\aabbort M{C_i}}&\textrm{(by def. of $\aabbortsymb$)}\\
		&\to_{\beta}^*&\ipair i{\aabbort {[N/z]M}{C_i}}&\textrm{(by IH twice)}\\
		&=&RHS&\textrm{(by def. of $\aabbortsymb$)}
	\end{array}
	$$
	
	Case $C=\forall X.C_0$.
	$$
	\begin{array}{rcll}
		LHS&=&\Lb X.[N/z]\aabbort M{C_0}&\textrm{(by def. of $\aabbortsymb$)}\\
		&\to_{\beta}^*&\Lb z.\aabbort {[N/z]M}{C_0}&\textrm{(by IH)}\\
		&=&RHS&\textrm{(by def. of $\aabbortsymb$)}
	\end{array}
	$$
	
	2 (b). Straightforward induction on $C$, using Lemma \ref{lem:type-subst-@}.
	
	3 (a). Go through the proof of 1 (a) and check that, when $N$ is not an abstraction, all applications of Lemma \ref{lem:term-subst-@} produce syntactic equality - therefore so do all applications of IH. Hence we already have 
	
	$LHS=\accase{[N/z]M}x{[N/z](P@z)}y{[N/z](Q@z)}C$. Applying again Lemma \ref{lem:term-subst-@} twice, and using $z\notin M,P,Q$, we obtain the desired RHS.\footnote{\label{fn:counter-example}Of course it would be more elegant to state item 3(a) with arbitrary $N$; it turns out that such statement fails for arbitrary $N$ (counter-example: $N$ an abstraction, $P$ or $Q$ is $\lb w.w$, and $C=C_1\supset C_2$).}
	
	3 (b). Go through the proof of 2 (a) and check that, in this particular case, due to $z\notin M$, the single application of Lemma \ref{lem:term-subst-@} produce syntactic equality - therefore so do all applications of IH.
\end{proof}

\subsection{Admissible reduction rules}

We now see that all reduction rules of $\ipc$ are admissible in $\fat$ in the form of reduction (possibly syntactic equality). 
\begin{lem}[Admissible $\betas, \etas$ with $\connective=\supset,\wedge$]\label{lem:admissible-beta-eta-implication-conj} In $\fat$:
	\begin{enumerate}
		\item $(\llb x.M)@N\to_{\beta}[N/x]M$.
		\item $\ppair{M_1}{M_2}@i\to_{\beta}M_i$.
		\item $\llb x.M@x\to^+_{\eta}M,$ if $x\notin M$.
		\item $\ppair{M@1}{M@2}\to^+_{\eta}M$.
	\end{enumerate}
\end{lem}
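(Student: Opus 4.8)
All four items follow directly from facts already established in Lemma~\ref{lem:admissible-eta-reduction}, so the proof is essentially a matter of invoking the right clauses and composing them. The plan is to dispatch the two $\beta$-statements (items~1 and~2) first and then the two $\eta$-statements (items~3 and~4).

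Item~1 is, in effect, the content of Lemma~\ref{lem:admissible-eta-reduction}.3: there we have $(\llb x.M)@N=(\lb x.M)N\to_{\betai}(\lb x.M)@N$, and since $\lb x.M$ is an abstraction, the definition of $@$ gives $(\lb x.M)@N=[N/x]M$, so indeed $(\llb x.M)@N\to_{\beta}[N/x]M$ in a single step. Likewise, item~2 is the content of Lemma~\ref{lem:admissible-eta-reduction}.4: there $\ppair{M_1}{M_2}@i=\pair{M_1}{M_2}i\to_{\betac}\pair{M_1}{M_2}@i$, and since $\pair{M_1}{M_2}$ is a pair, the definition of $\pproj{i}{\cdot}$ gives $\pair{M_1}{M_2}@i=M_i$, so $\ppair{M_1}{M_2}@i\to_{\beta}M_i$ in a single step. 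Equivalently, one may unfold $\llb$ and $\ppair{\cdot}{\cdot}$ by hand and fire one primitive $\betai$- respectively $\betac$-step; the only subtlety there is that the auxiliary bound variable built into $\llb$ is fresh for $M$, so the on-the-fly substitution performed by $@$ merely drops $N$ into the argument position and leaves $\lb x.M$ untouched.

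For item~3, apply Lemma~\ref{lem:admissible-eta-reduction}.1 with $M@x$ in place of its metavariable, obtaining $\llb x.(M@x)\to_{\etai}\lb x.(M@x)$; then, using the hypothesis $x\notin M$, apply Lemma~\ref{lem:admissible-eta-reduction}.5 to get $\lb x.(M@x)\to_{\etai}^{=}M$. Composing these gives $\llb x.(M@x)\to_{\eta}^{+}M$, the first step already guaranteeing at least one reduction. For item~4, recall that $M@i$ abbreviates $\pproj{i}{M}$; Lemma~\ref{lem:admissible-eta-reduction}.2 gives $\ppair{M@1}{M@2}\to_{\etac}\pair{\pproj{1}{M}}{\pproj{2}{M}}$, and Lemma~\ref{lem:admissible-eta-reduction}.6 gives $\pair{\pproj{1}{M}}{\pproj{2}{M}}\to_{\etac}^{=}M$; composing yields $\ppair{M@1}{M@2}\to_{\eta}^{+}M$.

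I do not expect a genuine obstacle: the lemma is just the remark that the $\eta$-expanded constructors $\llb$ and $\ppair{\cdot}{\cdot}$, combined with the optimized eliminators $@$ and $\pproj{i}{\cdot}$, recover exactly the $\beta\eta$-behaviour of the primitive $\lb$-abstraction and pair. The only points deserving a line of attention are bureaucratic: the freshness of the variable hidden inside $\llb$ (used silently in item~1), and the fact that clauses~1 and~2 of Lemma~\ref{lem:admissible-eta-reduction} carry no side conditions, so that they may be instantiated with $M@x$ and with $M@i$ as the argument.
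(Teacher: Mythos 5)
Your proof is correct and follows essentially the same route as the paper: items 1 and 2 are read off from items 3 and 4 of Lemma~\ref{lem:admissible-eta-reduction} together with the definition of $@$, and items 3 and 4 are obtained by composing items 1 and 5, respectively 2 and 6, of that lemma, exactly as in the paper's argument. The freshness remark about the variable hidden in $\llb$ is a correct (and welcome) explicit justification of a point the paper leaves implicit.
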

\begin{proof}
	Proof of 1. $(\llb x.M)@N=(\lb x. M)N \to_{\betai}[N/x]M$, where the first equality if by item 3 of Lemma \ref{lem:admissible-eta-reduction}.
	
	Proof of 2. $\ppair{M_1}{M_2}@i=\pair{M_1}{M_2}i\to_{\betac}M_i$, where the first equality if by item 4 of Lemma \ref{lem:admissible-eta-reduction}.
	
	Proof of 3. Suppose $x\notin M$. $\llb x.M@x\to_{\etai}\lb x. M@x\to^=_{\etai}M$, where use is made of items 1 and 5 of Lemma \ref{lem:admissible-eta-reduction}.
	
	Proof of 4.
	$\ppair{M@1}{M@2}\to_{\etac}\pair{M@1}{M@2}\to^=_{\etac}M$, where use is made of items 2 and 6 of Lemma \ref{lem:admissible-eta-reduction}.
\end{proof}

%
%
%
%
\begin{lem}[Admissible $\betad$]\label{lem:optimized-admissible-beta-disjunction} In $\fat$: $\accase{\aiinjn iNAB}{x_1}{P_1}{x_2}{P_2}C\to^+_{\beta\eta}[N/x_i]P_i$.
\end{lem}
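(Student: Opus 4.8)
The plan is to prove the claim by induction on the formula $C$, unfolding the clauses of $\accasesymb$ from Definition~\ref{def:admissible-constructions}, moving the substitutions $[N/x_i](\cdot)$ past the optimized constructors via Lemma~\ref{lem:term-subst-@}, and absorbing the left-over administrative wrappers with the admissible $\eta$-laws of Lemma~\ref{lem:admissible-eta-reduction}. There are four cases for $C$, matching the four clauses of $\accasesymb$: a type variable (base), $\supset$, $\wedge$, and $\forall$.

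\emph{Base case} $C=Z$. I would first unfold: by definition $\accase{\aiinjn iNAB}{x_1}{P_1}{x_2}{P_2}Z=(\aiinjn iNAB@Z)@\pair{\lb x_1.P_1}{\lb x_2.P_2}$, and $\aiinjn iNAB=\Lb X.\lb w^{(A\supset X)\wedge(B\supset X)}.\proj i w N$. Since this is a generalization and $X\notin N$, the first $@$ fires and returns the $\lambda$-abstraction $\lb w^{(A\supset Z)\wedge(B\supset Z)}.\proj i w N$; since that is a $\lambda$-abstraction with $w\notin N$, the second $@$ fires and substitutes the pair for $w$, producing exactly $(\proj i{\pair{\lb x_1.P_1}{\lb x_2.P_2}})N$. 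This term is a $\betac$-redex whose contractum is a $\betai$-redex, so $(\proj i{\pair{\lb x_1.P_1}{\lb x_2.P_2}})N\to_{\betac}(\lb x_i.P_i)N\to_{\betai}[N/x_i]P_i$, which gives $\to^+_\beta$, a fortiori $\to^+_{\beta\eta}$.

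\emph{Inductive cases.} Each of the three cases follows the same pattern: unfold $\accasesymb$, apply the induction hypothesis (IH) under the enclosing constructor using compatibility of $\to_{\beta\eta}$, normalize the resulting substitution by Lemma~\ref{lem:term-subst-@}, and collapse the remaining wrapper by the matching $\eta$-law. For $C=C_1\wedge C_2$ the term is $\ipair j{\accase{\aiinjn iNAB}{x_1}{\pproj j{P_1}}{x_2}{\pproj j{P_2}}{C_j}}$; the IH on $C_1$ and $C_2$ yields $\to^+_{\beta\eta}$ to $\pair{[N/x_i](\pproj 1{P_i})}{[N/x_i](\pproj 2{P_i})}$; then Lemma~\ref{lem:term-subst-@}(2) rewrites each component ($\to^=_{\betac}$) to $\pproj j{([N/x_i]P_i)}$; and Lemma~\ref{lem:admissible-eta-reduction}(6) ($\pair{\pproj 1M}{\pproj 2M}\to^=_{\etac}M$, here with $M=[N/x_i]P_i$) finishes the case. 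For $C=C_1\supset D$ the term is $\lb z.\accase{\aiinjn iNAB}{x_1}{P_1@z}{x_2}{P_2@z}D$; the IH on $D$ yields $\to^+_{\beta\eta}$ to $\lb z.[N/x_i](P_i@z)$; Lemma~\ref{lem:term-subst-@}(1) rewrites ($\to^=_{\betai}$, using $z\neq x_i$) to $\lb z.([N/x_i]P_i)@z$; and Lemma~\ref{lem:admissible-eta-reduction}(5) contracts this ($\to^=_{\etai}$, since $z$ is fresh) to $[N/x_i]P_i$. The case $C=\forall Y.C_0$ is identical, with $P_i@Y$ in place of $P_i@z$, Lemma~\ref{lem:term-subst-@}(3) in place of (1), and Lemma~\ref{lem:admissible-eta-reduction}(7) in place of (5). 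In every inductive case the overall chain is ``IH $(\to^+_{\beta\eta})$ followed by finitely many $\to^=$ steps'', hence $\to^+_{\beta\eta}$, which closes the induction.

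\emph{Main obstacle.} The delicate point is the base case, where one must check that \emph{both} occurrences of $@$ in $\accase{\cdot}{\cdot}{\cdot}{\cdot}{\cdot}Z$ genuinely fire. This rests precisely on the freshness side-conditions built into $\aiinjsymb$ — $X\notin N$, so $\aiinjn iNAB@Z$ is a real $\lambda$-abstraction rather than a stuck application, and $w\notin N$, so substituting the pair for $w$ leaves $N$ untouched — and on the fact that the head of the resulting term is the ordinary projection-of-a-pair $\proj i{\pair{\lb x_1.P_1}{\lb x_2.P_2}}$ (coming from the $\proj i w$ in the definition of $\aiinjsymb$), not an already-optimized $@$, so that two real $\beta$-steps remain. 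Once this is pinned down, the inductive cases are routine bookkeeping with the substitution and $\eta$ lemmas already established.
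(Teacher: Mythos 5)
Your proof is correct and follows essentially the same route as the paper's: induction on $C$, with the base case unfolding $\accasesymb$ and $\aiinjsymb$ so that both occurrences of $@$ fire (using the freshness of $X$ and $w$) to leave the genuine $\betac$- then $\betai$-redex, and the inductive cases combining the IH with Lemma~\ref{lem:term-subst-@} and items 5--7 of Lemma~\ref{lem:admissible-eta-reduction}. The only difference is that you spell out the $\wedge$ case, which the paper dismisses as ``similar''.
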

\begin{proof} By induction on $C$. 
	
	Case $C=Y$.
	$$
	\begin{array}{cll}
		&LHS&\\
		=&(\Lb X.\lb z^{(A\supset X)\wedge(B\supset X)}.\proj iz N)@Y@\pair{\lb x_1.P_1}{\lb x_2.P_2}&\textrm{(by def. of $\accasesymb$)}\\
		=&(\lb z^{(A\supset Y)\wedge(B\supset Y)}.\proj iz N)\pair{\lb x_1.P_1}{\lb x_2.P_2}&
		\\
		=&\proj i{\pair{\lb x_1.P_1}{\lb x_2.P_2}}N&
		\\
		\to_{\betac}&(\lb x_i.P_i)N&\\
		\to_{\betai}&[N/x_i]P_i
	\end{array}
	$$
	
	
	The second step above follows from the definition of $@$ and the fact that $X\notin A,B,N$; the third step follows also from the definition of $@$ and the fact that $z\notin N$.
	
	Case $C=D_1\supset D_2$.
	$$
	\begin{array}{cll}
		&LHS&\\
		=&\lb z.\accase{\aiinjn iNAB}{x_1}{P_1 @z}{x_2}{P_2 @z}{D_2}&\textrm{(by def. of $\accasesymb$)}\\
		\to^+_{\beta\eta}&\lb z.[N/x_i](P_i @z)&\textrm{(by IH)}\\
		\to_{\betai}^=&\lb z.([N/x_i]P_i)@([N/x_i]z)&\textrm{(by Lemma \ref{lem:term-subst-@})}\\
		=&\lb z.([N/x_i]P_i)@z&\textrm{(since $z\neq x_i$)}\\
		\to_{\etai}^=&[N/x_i]P_i&\textrm{(by Lemma \ref{lem:admissible-eta-reduction})}
	\end{array}
	$$
	
	Case $C=D_1\wedge D_2$. Similar, uses $\betac$ and $\etac$.
	
	Case $C=\forall Y.D$.
	$$
	\begin{array}{cll}
		&LHS&\\
		=&\Lb Y.\accase{\aiinjn iNAB}{x_1}{P_1 @Y}{x_2}{P_2 @Y}{D}&\textrm{(by def. of $\accasesymb$)}\\
		\to^+_{\beta\eta}&\Lb Y.[N/x_i](P_i @Y)&\textrm{(by IH)}\\
		\to_{\betaall}^=&\Lb Y.([N/x_i]P_i)@Y&\textrm{(by Lemma \ref{lem:term-subst-@})}\\
		\to_{\etaall}^=&[N/x_i]P_i&\textrm{(by Lemma \ref{lem:admissible-eta-reduction})}
	\end{array}
	$$
\end{proof}

\begin{lem}[Admissible $\etad$]\label{lem:optimized-admissible-eta-disjunction} In $\fat$:
	$$\accase M{x^A}{\aiinjn 1xAB}{y^B}{\aiinjn 2yAB}{A\dvee B}\to^*_{\eta} M.$$
\end{lem}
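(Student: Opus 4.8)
The plan is to compute the left-hand side explicitly, by unfolding the recursive definition of $\accasesymb$ at the concrete type $A\dvee B=\forall X.\bigl(((A\supset X)\wedge(B\supset X))\supset X\bigr)$, and then to reach $M$ through a short sequence of $\eta$-reductions. Write $P:=\aiinjn 1xAB$ and $Q:=\aiinjn 2yAB$; unfolding their definition gives $P=\Lb X.\lb w.\proj 1w\,x$ and $Q=\Lb X.\lb w.\proj 2w\,y$, with the bound variables chosen fresh. The $\forall$-clause of $\accasesymb$ yields $\Lb X.\accase Mx{P@X}y{Q@X}{((A\supset X)\wedge(B\supset X))\supset X}$, where $P@X=\lb w.\proj 1w\,x$ and $Q@X=\lb w.\proj 2w\,y$ because $@$ performs the type instantiation on the fly. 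Applying next the $\supset$-clause, simplifying $(P@X)@z$ and $(Q@X)@z$ by the on-the-fly action of $@$ on $\lambda$-abstractions, and then the atomic base clause, one obtains the \emph{syntactic identity}
$$\accase M{x^A}{\aiinjn 1xAB}{y^B}{\aiinjn 2yAB}{A\dvee B}\;=\;\Lb X.\lb z.\,(M@X)@\pair{\lb x.\proj 1z\,x}{\lb y.\proj 2z\,y},$$
where, by the bound-variable conventions in the definition of $\accasesymb$, $z\neq x$, $z\neq y$, $z\notin M$ and $X\notin M$.

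Next I would $\eta$-reduce this term to $M$. Since $x\neq z$ and $y\neq z$, the subterms $\lb x.\proj 1z\,x$ and $\lb y.\proj 2z\,y$ are $\etai$-redexes contracting to $\proj 1z$ and $\proj 2z$; then $\pair{\proj 1z}{\proj 2z}$ is an $\etac$-redex contracting to $z$. Pushing these contractions through the outer $@$ by Lemma \ref{lem:additional-compatibility-rules-1}(2), and through the binders $\Lb X$ and $\lb z$ by compatibility, the term reduces by $\to_\eta^*$ to $\Lb X.\lb z.(M@X)@z$. Now Lemma \ref{lem:admissible-eta-reduction}(5) applies to the body: $\lb z.(M@X)@z\to_\eta^=M@X$, the side condition $z\notin M@X$ holding because $z\notin M$ and $\FV{M@X}=\FV M$ by Lemma \ref{lem:FV_om_@} (the type-variable case). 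Finally Lemma \ref{lem:admissible-eta-reduction}(7) gives $\Lb X.(M@X)\to_\eta^=M$, since $X\notin M$. Chaining everything yields $\accase M{x^A}{\aiinjn 1xAB}{y^B}{\aiinjn 2yAB}{A\dvee B}\to_\eta^*M$.

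The argument is essentially a computation, so no serious obstacle is expected; the points that need care are: (a) verifying that the unfolding lands \emph{exactly} on the displayed term, with no residual $\beta$-redexes created along the way --- this is precisely what the reductions built into $@$ and the optimized projection $\pproj iM$ guarantee; (b) discharging the freshness side conditions of the $\eta$-lemmas, in particular invoking Lemma \ref{lem:FV_om_@} for $z\notin M@X$, and the choice-of-bound-variable conventions of $\accasesymb$ for $z\notin M$ and $X\notin M$; and (c) noticing that the last two $\eta$-steps may degenerate to syntactic identity (when $M@X$ is already an abstraction, respectively when $M$ is a generalization), which is why the statement only asserts $\to_\eta^*$ and not $\to_\eta^+$.
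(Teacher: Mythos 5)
Your proposal is correct and follows essentially the same route as the paper's proof: unfold $\accasesymb$ at $A\dvee B$ (via the $\forall$-, $\supset$- and atomic clauses, with $@$ performing the instantiations) to reach exactly $\Lb X.\lb w.(M@X)@\pair{\lb x.\proj 1w\,x}{\lb y.\proj 2w\,y}$, then contract the two $\etai$-redexes and the $\etac$-redex in the argument using Lemma \ref{lem:additional-compatibility-rules-1}, and finish with items 5 and 7 of Lemma \ref{lem:admissible-eta-reduction}. Your extra care about the side condition $z\notin M@X$ via Lemma \ref{lem:FV_om_@} and about the possible degeneration of the last steps to identity is consistent with, and slightly more explicit than, the paper's argument.
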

\begin{proof} Let $C:=((A\supset X)\wedge(B\supset X))\supset X$.
	
	$$
	\begin{array}{cll}
		&LHS&\\
		=&\Lb X.\accase{M}{x}{(\Lb Y.\lb z.\proj 1 z x)@X}{y}{(\Lb Y.\lb z.\proj 2 z y)@X}{C}&
		\\
		=&\Lb X.\lb w.\accase{M}{x}{(\Lb Y.\lb z.\proj 1 z x)@X@w}{y}{(\Lb Y.\lb z.\proj 2 z y)@X@w}{X}&
		\\
		=&\Lb X.\lb w.M@X@\pair{\lb x.(\Lb Y.\lb z.\proj 1 z x)@X@w}{\lb y.(\Lb Y.\lb z.\proj 2 z y)@X@w}&
		\\
		=&\Lb X.\lb w.{M}@X@\pair{\lb{x}.{\proj 1 w x}}{\lb{y}.{\proj 2 w y}}&
		\\
		\to_{\etai}^*&\Lb X.\lb w.{M}@X@\pair{{\proj 1 w}}{{\proj 2 w }}&
		\\
		\to_{\etac}^*&\Lb X.\lb w.{M}@X@w&
		\\
		\to_{\etai}^=&\Lb X.{M}@X&
		\\
		\to_{\etaall}^=&M&
	\end{array}
	$$
	
	The first four steps (equalities) above follow from the definitions, including those for $\accasesymb$ and $@$. Steps five and six are justified by Lemma \ref{lem:additional-compatibility-rules-1} and the last two steps by Lemma \ref{lem:admissible-eta-reduction}.
	
\end{proof}

\begin{lem}[Admissible $\pis$, for $\connective=\supset,\wedge,\forall$]\label{lem:optimized-admissible-commutative-conversions} In $\fat$:
	\begin{enumerate}
		\item $(\accase MxPyQ{C\supset D})@N\to_{\beta}^*\accase Mx{P@N}y{Q@N}D$. The result holds with syntactic equality when $N$ is not an abstraction.
		\item $\pproj i{\accase MxPyQ{C_1\wedge C_2}}=\accase Mx{\pproj iP}y{\pproj iQ}{C_i}$.
		\item $(\accase MxPyQ{\forall X.C})@Y=\accase Mx{P@Y}y{Q@Y}{[Y/X]C}$.
	\end{enumerate}
\end{lem}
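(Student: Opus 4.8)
The proof of all three items is by direct computation: one unfolds the appropriate clause in the definition of $\accasesymb$ (Definition \ref{def:admissible-constructions}) and then invokes the substitution lemmas already established. No new induction is needed, since the inductions are hidden inside Lemmas \ref{lem:term-subst-@}, \ref{lem:type-subst-@} and \ref{lem:subst-admissible-optimized-constructions}.

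Items 2 and 3 are the easy ones. For item 2, by definition $\accase MxPyQ{C_1\wedge C_2}$ \emph{is} the pair $\ipair i{\accase Mx{\pproj iP}y{\pproj iQ}{C_i}}$, so the optimized projection $\pproj i$ selects its $i$-th component on the nose (the first clause of $\pproj i$ in Fig.~\ref{fig:optimized-elims}), which is exactly the right-hand side; nothing reduces. For item 3, I unfold $\accase MxPyQ{\forall X.C}=\Lb X.\accase Mx{P@X}y{Q@X}C$, where — as the fourth clause of Definition \ref{def:admissible-constructions} prescribes, and using $\alpha$-conversion — the bound $X$ is fresh for $M,P,Q$ and distinct from $Y$. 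Since the head is a $\Lb$-abstraction, $({\cdot})@Y$ performs the type substitution, giving $[Y/X]\accase Mx{P@X}y{Q@X}C$. Pushing this substitution inside with Lemma \ref{lem:subst-admissible-optimized-constructions}(1)(b) and then simplifying $[Y/X](P@X)=([Y/X]P)@Y=P@Y$ (and likewise for $Q$) by Lemma \ref{lem:type-subst-@}(3) together with the freshness of $X$, I obtain $\accase Mx{P@Y}y{Q@Y}{[Y/X]C}$ as a genuine equality.

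Item 1 is where a reduction (rather than an identity) appears. I unfold $\accase MxPyQ{C\supset D}=\lb z.\accase Mx{P@z}y{Q@z}D$, with $z$ chosen fresh for $M,P,Q$ and $z\neq x,y$. The head being a $\lb$-abstraction, the operator $({\cdot})@N$ forces a $\betai$-step, so the left-hand side equals $[N/z]\accase Mx{P@z}y{Q@z}D$. Applying Lemma \ref{lem:subst-admissible-optimized-constructions}(1)(a) to move $[N/z]$ inside, this $\beta$-reduces to $\accase{[N/z]M}x{[N/z](P@z)}y{[N/z](Q@z)}D$; by $z\notin M$ we have $[N/z]M=M$, and by Lemma \ref{lem:term-subst-@}(1) with $z\notin P,Q$ we get $[N/z](P@z)=P@N$ and $[N/z](Q@z)=Q@N$. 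Hence the left-hand side $\to_{\beta}^*\accase Mx{P@N}y{Q@N}D$. For the sharper claim, when $N$ is not an abstraction I replace the appeal to Lemma \ref{lem:subst-admissible-optimized-constructions}(1)(a) by the strong variant Lemma \ref{lem:subst-admissible-optimized-constructions}(3)(a), which yields $[N/z]\accase Mx{P@z}y{Q@z}D=\accase Mx{P@N}y{Q@N}D$ outright, so that $(\accase MxPyQ{C\supset D})@N=\accase Mx{P@N}y{Q@N}D$ holds as a syntactic identity.

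The only delicate point is item 1 for arbitrary $N$: pushing $[N/z]$ through the nested occurrences of $@z$ inside $\accasesymb$ can itself spawn fresh $\betai$-redexes precisely when $N$ is an abstraction and some $P$ or $Q$ is essentially an identity (this is the content of the footnote to Lemma \ref{lem:subst-admissible-optimized-constructions}), so the strong variant genuinely fails in general and one must be content with $\to_{\beta}^*$. This asymmetry is exactly what, later on, keeps $\pii$ from collapsing to syntactic identity the way $\pic$ and $\piall$ do; recovering the identity for $\pii$ on \emph{translated} terms is the subject of the separate Section \ref{sec:special}. Everything else here is routine unfolding plus bookkeeping with the freshness conditions built into the definition of $\accasesymb$.
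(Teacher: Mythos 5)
Your proposal is correct and follows essentially the same route as the paper: unfold the relevant clause of $\accasesymb$, apply the definition of $@$, push the substitution inside via Lemma \ref{lem:subst-admissible-optimized-constructions} (items 1(a)/3(a) for implication, 1(b) for $\forall$), and finish with Lemmas \ref{lem:term-subst-@}/\ref{lem:type-subst-@} and the freshness conditions $z,X\notin M,P,Q$. The sharpened claim for non-abstraction $N$ via item 3(a), and the observation that item 2 is a definitional identity, are exactly the paper's argument as well.
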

\begin{proof} Proof of 1.
	$$
	\begin{array}{cll}
		&LHS&\\
		=&(\lb z.\accase Mx{P@z}y{Q@z}D)@N&\textrm{(by def. of $\accasesymb$)}\\
		=&[N/z]\accase Mx{P@z}y{Q@z}D&\textrm{(by def. of $@$)}\\
		\to_{\beta}^*&\accase {[N/z]M}x{[N/z](P@z)}y{[N/z](Q@z)}D&\textrm{(by Lemma \ref{lem:subst-admissible-optimized-constructions})}\\
		=&\accase {[N/z]M}x{([N/z]P)@N}y{([N/z]Q)@N}D&\textrm{(by Lemma \ref{lem:term-subst-@})}
		\\
		=&RHS&\textrm{(since $z\notin M,P,Q$)}
	\end{array}
	$$
	In this calculation we used item 1(a) of Lemma \ref{lem:subst-admissible-optimized-constructions} and when applying Lemma \ref{lem:term-subst-@} we used the fact that $P\neq z$ and $Q\neq z$. When $N$ is not an abstraction, item 3 (a) of the same lemma gives $[N/z]\accase Mx{P@z}y{Q@z}D=RHS$.
	
	Proof of 2.
	$$
	\begin{array}{rcll}
		LHS&=&\pproj i{\left(\langle\accase Mx{\pproj jP}y{\pproj jQ}{C_j}\rangle_{j=1,2}\right)}&\textrm{(by def. of $\accasesymb$)}\\
		&=&RHS&\textrm{(by def. of $@$)}
	\end{array}
	$$
	Proof of 3.
	$$
	\begin{array}{cll}
		&LHS&\\
		=&(\Lb X.\accase Mx{P@X}y{Q@X}C)@Y&
		\\
		=&[Y/X]\accase Mx{P@X}y{Q@X}C&
		\\
		=&\accase {[Y/X]M}x{[Y/X](P@X)}y{[Y/X](Q@X)}{[Y/X]C}&
		\\
		=&\accase {[Y/X]M}x{([Y/X]P)@Y}y{([Y/X]Q)@Y}{[Y/X]C}&
		\\
		=&RHS&
	\end{array}
	$$
	In the first two equalities above, we used the definitions of $\accasesymb$ and $@$, respectively. The third equality follows from Lemma \ref{lem:subst-admissible-optimized-constructions}; the fourth equality from Lemma \ref{lem:type-subst-@}, and for the last equality, note that $X\notin M,P,Q$.   \end{proof}

\begin{lem}[Admissible $\abs$, for $\connective=\supset,\wedge,\forall$]\label{lem:optimized-admissible-absurdity-conversions} In $\fat$:
	\begin{enumerate}
		\item $(\aabbort M{A\supset B})@N=\aabbort MB$.
		\item $\pproj i{\aabbort M{A_1\wedge A_2}}=\aabbort M{A_i}$, $i=1,2$.
		\item $(\aabbort M{\forall X.A})@Y=\aabbort M{[Y/X]A}$.
	\end{enumerate}
\end{lem}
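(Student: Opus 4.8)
The three equations are all instances of the same phenomenon: for a \emph{compound} target type $C$, the term $\aabbort{M}{C}$ has as its head constructor exactly the introduction form matching the shape of $C$ (an abstraction for $\supset$, a pair for $\wedge$, a generalization for $\forall$), so that applying the corresponding optimized eliminator $@N$, $\pproj i{\_}$ or $@Y$ triggers the ``on the fly'' reduction built into Fig.~\ref{fig:optimized-elims}. The plan is therefore, in each case, to (i) unfold the relevant clause in the recursive definition of $\aabbortsymb$; (ii) observe that the head of the resulting term matches the pattern of the optimized eliminator, so that the eliminator contracts it, leaving a substitution instance of a smaller $\aabbortsymb$-term; and (iii) discharge that substitution with the appropriate strong clause of Lemma~\ref{lem:subst-admissible-optimized-constructions} (items 3(b) and 2(b)), using the bound-variable provisos recorded in Definition~\ref{def:admissible-constructions}.

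\textbf{Items 1 and 2.} For item~1, by definition $\aabbort{M}{A\supset B}=\lb z^A.\aabbort{M}{B}$ with $z\notin M$. Since the left operand of $@$ is a $\lb$-abstraction, $(\lb z^A.\aabbort{M}{B})@N=[N/z]\aabbort{M}{B}$, and by Lemma~\ref{lem:subst-admissible-optimized-constructions}.3(b) (applicable because $z\notin M$, with $N$ arbitrary) this equals $\aabbort{M}{B}$, which is the right-hand side. For item~2, by definition $\aabbort{M}{A_1\wedge A_2}=\pair{\aabbort{M}{A_1}}{\aabbort{M}{A_2}}$; since this is a pair, the definition of $\pproj i{\_}$ gives $\pproj i{\pair{\aabbort{M}{A_1}}{\aabbort{M}{A_2}}}=\aabbort{M}{A_i}$, as required. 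No induction is needed here; a single unfolding suffices.

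\textbf{Item 3 and the (minor) obstacle.} For item~3, working up to $\alpha$-equivalence we may assume the bound $X$ of $\forall X.A$ satisfies $X\notin M$ and $X\neq Y$, so $\aabbort{M}{\forall X.A}=\Lb X.\aabbort{M}{A}$. As the left operand of $@Y$ is a generalization, $(\Lb X.\aabbort{M}{A})@Y=[Y/X]\aabbort{M}{A}$, and by Lemma~\ref{lem:subst-admissible-optimized-constructions}.2(b) this equals $\aabbort{[Y/X]M}{[Y/X]A}$; since $X\notin M$ we have $[Y/X]M=M$, giving $\aabbort{M}{[Y/X]A}$, the right-hand side. The only real care in the whole proof is this bookkeeping of bound variables (ensuring the $\alpha$-renaming that makes the head constructor visible, and that the side-conditions $z\notin M$, $X\notin M$ of Lemma~\ref{lem:subst-admissible-optimized-constructions} are met); there is no genuine mathematical difficulty. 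It is worth noting the contrast with Lemma~\ref{lem:optimized-admissible-commutative-conversions}: there $\accasesymb$ carries $M$ through the recursion and the substitution clause 1(a) only yields $\to_\beta^*$ unless $N$ is not an abstraction, whereas $\aabbortsymb$ never inspects the structure of $M$ before reaching its atomic base case, so every equation here holds as a plain syntactic identity — the same ``miraculous'' behaviour already visible in Lemma~\ref{lem:miracle_abort_FV}.
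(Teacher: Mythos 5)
Your proof is correct and follows essentially the same route as the paper's: unfold the relevant clause of $\aabbortsymb$, let the optimized eliminator contract the exposed introduction form, and discharge the resulting substitution via items 3(b) and 2(b) of Lemma~\ref{lem:subst-admissible-optimized-constructions} together with the freshness provisos $z\notin M$, $X\notin M$. The paper's proof is exactly this direct computation, with no induction, so nothing further is needed.
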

\begin{proof}
	Proof of 1.
	$$
	\begin{array}{rcll}
		LHS&=&(\lb z^A.\aabbort MB)@N&\textrm{(by def. of $\aabbortsymb$)}\\
		&=&[N/z]\aabbort MB&\textrm{(by def of $@$)}\\
		&=&RHS&\textrm{(by item 3(b) of Lem. \ref{lem:subst-admissible-optimized-constructions}, since $z\notin M$)}
	\end{array}
	$$
	
	Proof of 2.
	$$
	\begin{array}{rcll}
		LHS&=&\pproj i{\pair{\aabbort M{A_1}}{\aabbort M{A_2}}}&\textrm{(by def. of $\aabbortsymb$)}\\
		&=&RHS&\textrm{(by def of $@$)}
	\end{array}
	$$
	
	Proof of 3.
	$$
	\begin{array}{rcll}
		LHS&=&\Lambda X.(\aabbort MA)@Y&\textrm{(by def. of $\aabbortsymb$)}\\
		&=&[Y/X]\aabbort MA&\textrm{(by def of $@$)}\\
		&=&\aabbort{[Y/X]M}{[Y/X]A}&\textrm{(by item 2 (b) of Lemma \ref{lem:subst-admissible-optimized-constructions})}\\
		&=&RHS&\textrm{(since $X\notin M$)}
	\end{array}
	$$
\end{proof}

\begin{lem}[Admissible $\abd$]\label{lem:optimized-admissible-absurdity-disjunction} In $\fat$:
	$$
	\accase{\aabbort M{A\dvee B}}xPyQC=\aabbort MC.
	$$
\end{lem}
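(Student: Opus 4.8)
The plan is to first compute $\aabbort M{A\dvee B}$ in closed form and then prove the identity by induction on the type $C$. Unfolding $A\dvee B=\forall X.((A\supset X)\wedge(B\supset X))\supset X$ and applying the clauses of $\aabbortsymb$ for $\forall$, then $\supset$, then the atomic case, one gets
$$\aabbort M{A\dvee B}=\Lb X.\lb w^{(A\supset X)\wedge(B\supset X)}.M@X,$$
where, by the side conditions in the definition of $\aabbortsymb$, the bound variables $X$ and $w$ may be assumed not free in $M$; recall also that $\FV{\aabbort MC}=\FV M$ for every $C$ by Lemma~\ref{lem:miracle_abort_FV}, which is what makes the freshness provisos appearing in the clauses of $\accasesymb$ harmless.

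For the base case $C=Y$ a type variable, the atomic clause of $\accasesymb$ gives
$$\accase{\aabbort M{A\dvee B}}xPyQY=\bigl(\aabbort M{A\dvee B}@Y\bigr)@\pair{\lb x.P}{\lb y.Q}.$$
Since $\aabbort M{A\dvee B}$ has the form $\Lb X.(\cdots)$, the inner $@$ fires, and using Lemma~\ref{lem:type-subst-@} together with $X\notin M$ we get $\aabbort M{A\dvee B}@Y=\lb w.M@Y$. This is a $\lambda$-abstraction, so the remaining $@$ fires too; by the ``moreover'' part of Lemma~\ref{lem:FV_om_@} we have $\FV{M@Y}=\FV M$, hence $w\notin M@Y$ and $[\pair{\lb x.P}{\lb y.Q}/w](M@Y)=M@Y$. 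Therefore the left-hand side equals $M@Y=\aabbort MY$, which is the right-hand side.

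The inductive cases fall out directly from the recursion clauses of $\accasesymb$ and $\aabbortsymb$; the key observation is that the right-hand side $\aabbort MC$ does not mention $x,y,P,Q$, so the induction hypothesis is available for any choice of branch terms. For $C=C_1\wedge C_2$: by definition $\accase{\aabbort M{A\dvee B}}xPyQ{C_1\wedge C_2}=\ipair i{\accase{\aabbort M{A\dvee B}}x{\pproj iP}y{\pproj iQ}{C_i}}$, which by the IH applied to $C_1$ and $C_2$ equals $\ipair i{\aabbort M{C_i}}=\aabbort M{C_1\wedge C_2}$. For $C=C_1\supset C_2$: $\accase{\aabbort M{A\dvee B}}xPyQ{C_1\supset C_2}=\lb z.\accase{\aabbort M{A\dvee B}}x{P@z}y{Q@z}{C_2}$, which by the IH equals $\lb z.\aabbort M{C_2}=\aabbort M{C_1\supset C_2}$, the bound-variable provisos agreeing because $z\notin M$ and $\FV{\aabbort M{C_2}}=\FV M$. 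The case $C=\forall X.C_0$ is entirely analogous, using the $\forall$-clause of both constructors.

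I do not expect a real obstacle: the argument is a routine structural induction. The only points requiring attention are the bookkeeping of the bound variables $X$, $w$, $z$ (handled by $\alpha$-conversion and the side conditions of $\aabbortsymb$, together with Lemmas~\ref{lem:miracle_abort_FV} and~\ref{lem:FV_om_@}), and checking, in the base case, that the two substitutions performed ``on the fly'' by $@$ actually collapse — which is exactly what the freshness of $X$ and $w$ guarantees.
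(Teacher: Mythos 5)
Your proof is correct and follows essentially the same route as the paper's: induction on $C$, with the base case obtained by unfolding $\aabbort M{A\dvee B}$ to $\Lb X.\lb w.M@X$, firing the two occurrences of $@$ via the type-substitution lemma and the freshness of $X$ and $w$, and the inductive cases handled by the recursion clauses of $\accasesymb$ and $\aabbortsymb$ together with the induction hypothesis (correctly noted to hold for arbitrary branch terms). Your explicit appeal to the ``moreover'' part of Lemma~\ref{lem:FV_om_@} to justify that the substitution for $w$ is vacuous is just a slightly more careful rendering of the paper's ``$z\notin M$'' step.
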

\begin{proof}
	By induction on $C$.
	
	Case $C=X$.
	$$
	\begin{array}{cll}
		&LHS&\\
		=&\aabbort M{A\vee B}@X@\pair{\lb x.P}{\lb y.Q}&\textrm{(by def. of $\accasesymb$)}\\
		=&(\Lambda Y\lambda z^{(A\supset Y)\wedge(B\supset Y)}.M@Y)@X@\pair{\lb x.P}{\lb y.Q}&\textrm{(by def. of $\aabbortsymb$)}\\
		=&([X/Y](\lambda z^{(A\supset Y)\wedge(B\supset Y)}.M@Y))@\pair{\lb x.P}{\lb y.Q}&\textrm{(by def. of $@$)}\\
		=&(\lambda z^{(A\supset X)\wedge(B\supset X)}.M@X)@\pair{\lb x.P}{\lb y.Q}&\textrm{(by Lemma \ref{lem:type-subst-@})}\\
		=&M@X&\textrm{(by def of $@$ and $z\notin M$)}\\
		=&RHS&\textrm{(by def. of $\aabbortsymb$)}
	\end{array}
	$$
	
	When applying Lemma \ref{lem:type-subst-@} above, we are also using the fact that $Y\notin M,A,B$.
	
	Case $C=C_1\supset C_2$.
	$$
	\begin{array}{rcll}
		LHS&=&\lb z^{C_1}.\accase{\aabbort M{A\vee B}}x{P@z}y{Q@z}{C_2}&
		\\
		&=&\lb z^{C_1}.\aabbort M{C_2}&\textrm{(by IH)}\\
		&=&RHS&
	\end{array}
	$$
	The first and last equalities above come from the definitions of $\accasesymb$ and $\aabbortsymb$, respectively.
	
	Case $C=C_1\wedge C_2$.
	$$
	\begin{array}{rcll}
		LHS&=&\langle\accase{\aabbort M{A\vee B}}x{\pproj iP}y{\pproj iQ}{C_i}\rangle_{i=1,2}&\textrm{(def. of $\accasesymb$)}\\
		&=&\pair{\aabbort M{C_1}}{\aabbort M{C_2}}&\textrm{(by IH twice)}\\
		&=&RHS&\textrm{(def. of $\aabbortsymb$)}\\
	\end{array}
	$$
	
	Case $C=\forall Y.D$.
	$$
	\begin{array}{rcll}
		LHS&=&\Lambda Y.\accase{\aabbort M{A\vee B}}x{P@Y}y{Q@Y}D&\textrm{(def. of $\accasesymb$)}\\
		&=&\Lambda Y.\aabbort MD&\textrm{(by IH)}\\
		&=&RHS&\textrm{(def. of $\aabbortsymb$)}\\
	\end{array}
	$$

\end{proof}

\begin{lem}[Admissible $\aba$]\label{lem:optimized-admissible-absurdity-absurdity} In $\fat$:
	$$
	\aabbort{\aabbort M{\perp}}A=\aabbort MA.
	$$
\end{lem}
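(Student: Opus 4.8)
The plan is to first unfold the inner $\aabbortsymb$ using $\dperp=\forall X.X$, and then argue by induction on $A$, the only point that needs care being the base case.

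\emph{Unfolding.} Since $\dperp$ is a closed type, by $\alpha$-conversion we may choose its bound variable $X$ so that $X\notin M$, and then
$$\aabbort M{\perp}=\aabbort M{\forall X.X}=\Lb X.\aabbort MX=\Lb X.(M@X).$$
Write $N:=\Lb X.(M@X)$. I would record the following observation, which isolates the whole role played by the inner $\aabbortsymb$: for every type variable $Y$,
$$N@Y\;=\;[Y/X](M@X)\;=\;([Y/X]M)@[Y/X]X\;=\;M@Y,$$
where the first equality is the definition of $@$ (the term $N$ is a generalization), the second is Lemma~\ref{lem:type-subst-@}(3), and the third uses $X\notin M$ and $[Y/X]X=Y$.

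\emph{Induction on $A$.} I would then prove $\aabbort NA=\aabbort MA$ by induction on $A$. If $A=Y$ is a type variable, then $\aabbort NY=N@Y=M@Y=\aabbort MY$ by the observation above. If $A=A_1\wedge A_2$, the definition of $\aabbortsymb$ gives $\aabbort NA=\pair{\aabbort N{A_1}}{\aabbort N{A_2}}$ and likewise for $M$, and two applications of the induction hypothesis close the case. If $A=B\supset C$, both sides unfold to $\lb z^B.\aabbort NC$ and $\lb z^B.\aabbort MC$, and the induction hypothesis finishes; here one may take the same bound variable $z$ on both sides because $\FV N=\FV M$ by Lemma~\ref{lem:miracle_abort_FV}. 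Finally, if $A=\forall W.A_0$, both sides unfold to $\Lb W.\aabbort N{A_0}$ and $\Lb W.\aabbort M{A_0}$, and again the induction hypothesis finishes, the same bound variable $W$ being choosable because $\FTV N\subseteq\FTV M$ (a routine computation analogous to Lemma~\ref{lem:miracle_abort_FV}, using $X\notin M$).

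\emph{Main obstacle.} I do not anticipate any real difficulty: the statement is essentially a bookkeeping identity. The one spot that deserves attention is the base case, where one must remember that the inner $\aabbortsymb$ has turned $M$ into the generalization $\Lb X.(M@X)$, and use the freshness $X\notin M$ (legitimate since $\dperp$ is closed) so that applying $@$ to an atomic type collapses it back to $M@Y$; all the remaining cases then follow immediately by unfolding the recursive definition of $\aabbortsymb$ and invoking the induction hypothesis.
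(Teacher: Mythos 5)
Your proof is correct and follows essentially the same route as the paper's: induction on $A$, with the base case resolved by unfolding $\aabbort M{\dperp}$ to $\Lb X.(M@X)$ and computing $(\Lb X.M@X)@Y=[Y/X](M@X)=M@Y$ via the definition of $@$, Lemma~\ref{lem:type-subst-@} and $X\notin M$, and the remaining cases closed by unfolding $\aabbortsymb$ and the induction hypothesis. Your only addition is to isolate the observation $N@Y=M@Y$ before the induction and to spell out the $\alpha$-renaming/freshness bookkeeping that the paper leaves implicit, which changes nothing of substance.
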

\begin{proof}
	By induction on $A$.
	
	Case $A=Y$.
	$$
	\begin{array}{rcll}
		LHS&=&(\Lambda X.M@X)@Y&\textrm{(by def. of $\aabbortsymb$)}\\
		&=&[Y/X](M@X)&\textrm{(by def. of $@$)}\\
		&=&M@Y&\textrm{(by Lemma \ref{lem:type-subst-@} and $X\notin M$)}\\
		&=&RHS&\textrm{(by def. of $\aabbortsymb$)}\\
	\end{array}
	$$
	
	Case $A=B\supset C$.
	$$
	\begin{array}{rcll}
		LHS&=&\lb z^B.\aabbort{\aabbort M\perp}C&\textrm{(by def. of $\aabbortsymb$)}\\
		&=&\lb z^B.\aabbort MC&\textrm{(by IH)}\\
		&=&RHS&\textrm{(by def. of $\aabbortsymb$)}\\
	\end{array}
	$$
	Cases $A=B_1\wedge B_2$ and $A=\forall Y.B$ follow similarly by IH and definition of $\aabbortsymb$.
\end{proof}

\begin{lem}[Admissible $\pid$-reduction]\label{lem:admissible-pi-disjunction-reduction} In $\fat$:
	$$
	\begin{array}{l}\accase{\accase M{x_1}{P_1}{x_2}{P_2}{B_1\dvee B_2}}{y_1}{Q_1}{y_2}{Q_2}{C}=\\
		\qquad\accase M{x_1}{\accase{P_1}{y_1}{Q_1}{y_2}{Q_2}C}{x_2}{\accase{P_2}{y_1}{Q_1}{y_2}{Q_2}C}C.
	\end{array}$$
\end{lem}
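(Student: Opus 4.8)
The plan is to proceed by induction on $C$, in each case reducing the goal to one of the admissible commuting conversions for $\accasesymb$ already established in Lemma~\ref{lem:optimized-admissible-commutative-conversions}. Throughout, I would write $\mathcal{N}:=\accase M{x_1}{P_1}{x_2}{P_2}{B_1\dvee B_2}$ for the inner case-term, so that the identity to prove reads
$$\accase{\mathcal N}{y_1}{Q_1}{y_2}{Q_2}C=\accase M{x_1}{\accase{P_1}{y_1}{Q_1}{y_2}{Q_2}C}{x_2}{\accase{P_2}{y_1}{Q_1}{y_2}{Q_2}C}C.$$

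For the inductive steps the recipe is uniform: unfold the left-hand side by the defining clause of $\accasesymb$ for the head connective of $C$, apply the induction hypothesis, and then ``re-fold'' the two branches. Concretely, when $C=C_1\supset C_2$ the left-hand side equals $\lb z.\accase{\mathcal N}{y_1}{Q_1@z}{y_2}{Q_2@z}{C_2}$ for $z$ fresh (hence not an abstraction); the induction hypothesis rewrites the body as $\accase M{x_1}{\accase{P_1}{y_1}{Q_1@z}{y_2}{Q_2@z}{C_2}}{x_2}{\accase{P_2}{y_1}{Q_1@z}{y_2}{Q_2@z}{C_2}}{C_2}$; and then part~1 of Lemma~\ref{lem:optimized-admissible-commutative-conversions}, in its syntactic-equality form (applicable precisely because $z$ is not an abstraction), turns each branch $\accase{P_i}{y_1}{Q_1@z}{y_2}{Q_2@z}{C_2}$ back into $(\accase{P_i}{y_1}{Q_1}{y_2}{Q_2}{C_1\supset C_2})@z$, so a last use of the $\supset$-clause of $\accasesymb$ exhibits the right-hand side. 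The cases $C=C_1\wedge C_2$ and $C=\forall X.C_0$ go through identically, using parts~2 and~3 of Lemma~\ref{lem:optimized-admissible-commutative-conversions} in place of part~1; for the quantifier case one picks the bound variable $X$ fresh, as prescribed by the defining clause, so that $[X/X]C_0=C_0$.

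The base case $C=Y$ (a type variable) is where the shape of $\mathcal N$ has to be used, and it is the main obstacle. By the atom-clause of $\accasesymb$, the left-hand side is $(\mathcal N@Y)@\pair{\lb y_1.Q_1}{\lb y_2.Q_2}$. Since, after unfolding the abbreviation, $B_1\dvee B_2$ is the $\forall$-type $\forall X.((B_1\supset X)\wedge(B_2\supset X))\supset X$ with $X$ chosen fresh (in particular $X\ne Y$ and $X\notin B_1,B_2,M,P_1,P_2$), part~3 of Lemma~\ref{lem:optimized-admissible-commutative-conversions} gives $\mathcal N@Y=\accase M{x_1}{P_1@Y}{x_2}{P_2@Y}{((B_1\supset Y)\wedge(B_2\supset Y))\supset Y}$; then part~1 of the same lemma, once again in its syntactic-equality form (the argument $\pair{\lb y_1.Q_1}{\lb y_2.Q_2}$ being a pair, not an abstraction), followed by the atom-clause of $\accasesymb$, yields $(M@Y)@\pair{\lb x_1.(P_1@Y)@\pair{\lb y_1.Q_1}{\lb y_2.Q_2}}{\lb x_2.(P_2@Y)@\pair{\lb y_1.Q_1}{\lb y_2.Q_2}}$. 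This is exactly what the right-hand side becomes when one expands its outer $\accasesymb$ by the atom-clause and observes that $\accase{P_i}{y_1}{Q_1}{y_2}{Q_2}Y=(P_i@Y)@\pair{\lb y_1.Q_1}{\lb y_2.Q_2}$, so the two sides coincide.

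The reason this lemma delivers a genuine syntactic identity, rather than merely a $\beta\eta$-equality, is that the arguments fed to $@$ anywhere in the computation are never $\lambda$-abstractions — they are a type variable, a pair, or a fresh term variable — so the syntactic-equality refinements of parts~1 and~3 of Lemma~\ref{lem:optimized-admissible-commutative-conversions} are always available. This should be contrasted with the subtler situation, analysed separately in Section~\ref{sec:special}, where the argument of $@$ may itself be an abstraction and only a $\beta$-reduction results.
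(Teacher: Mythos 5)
Your proof is correct, and its inductive cases ($C=C_1\supset C_2$, $C_1\wedge C_2$, $\forall X.C_0$) are exactly the paper's: unfold the outer $\accasesymb$, apply the induction hypothesis, and re-fold the two branches via the syntactic-equality forms of items 1--3 of Lemma~\ref{lem:optimized-admissible-commutative-conversions}. Where you genuinely diverge is the base case $C=Y$. The paper handles it by brute-force computation: it unfolds the inner $\accasesymb$ over $B_1\dvee B_2$ all the way down to $\Lb X.\lb w.(M@X)@\pair{\lb x_1.P_1@X@w}{\lb x_2.P_2@X@w}$, performs the two $@$-steps as explicit substitutions $[Y/X]$ and $[\pair{\lb y_1.Q_1}{\lb y_2.Q_2}/w]$, and then pushes these substitutions through $@$ using Lemma~\ref{lem:type-subst-@} and five applications of Lemma~\ref{lem:term-subst-@}, tracking the freshness facts $X\notin M,P_1,P_2$ and $w\notin M,P_1,P_2$ so that every step is a syntactic identity. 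You instead observe that, once $B_1\dvee B_2$ is unfolded, the inner term is a case over a $\forall$-type whose body is an implication, so items 3 and 1 of Lemma~\ref{lem:optimized-admissible-commutative-conversions} (the latter in its equality form, the argument being a pair rather than an abstraction) already push $@Y$ and then $@\pair{\lb y_1.Q_1}{\lb y_2.Q_2}$ into the branches, after which the atom-clause of $\accasesymb$ identifies both sides. This is more modular and slightly slicker: it reuses the admissible $\pi_\forall$ and $\pi_\supset$ identities instead of redoing the substitution bookkeeping, at the cost of hiding where the freshness conditions and the ``$N$ not an abstraction'' proviso are actually discharged (they sit inside item 3(a) of Lemma~\ref{lem:subst-admissible-optimized-constructions}, on which item 1 of Lemma~\ref{lem:optimized-admissible-commutative-conversions} relies); the paper's direct computation keeps those conditions visible and mirrors the structure of the corresponding proof in the earlier work it cites. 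Your closing remark correctly isolates why only syntactic identities arise here, in contrast with the $\pii$ case treated in Section~\ref{sec:special}.
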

\begin{proof} By induction on $C$. We follow closely the proof of lemma 11 in paper \cite{EspiritoSantoFerreira2020}. The calls to lemma 7 of that paper, which generated there the $\beta$-reduction steps in the ``wrong'' direction, are replaced here by equalities justified by Lemma \ref{lem:optimized-admissible-commutative-conversions}.
	
	Case $C=Y$.The LHS term is, by definition of $\accasesymb$,
	$$
	(\Lb X.\lb w.(M@X)@\pair{\lb x_1.P_1 @X@w}{\lb x_2.P_2 @X@w})@Y@\pair{\lb y_1.Q_1}{\lb y_2.Q_2},
	$$
	where the variable $w$ has type $(B_1\supset X)\wedge(B_2\supset X)$. By definition of $@$, it is equal to
	$$
	(\lb w.[Y/X]((M@X)@\pair{\lb x_1.P_1 @X@w}{\lb x_2.P_2 @X@w}))@\pair{\lb y_1.Q_1}{\lb y_2.Q_2}.
	$$
	Due to Lemma \ref{lem:type-subst-@} and $X\notin M,P_1,P_2$, this term is equal to
	$$
	(\lb w^{(B_1\supset Y)\wedge(B_2\supset Y)}.(M@Y)@\pair{\lb x_1.P_1 @Y@w}{\lb x_2.P_2 @Y@w})\pair{\lb y_1.Q_1}{\lb y_2.Q_2},
	$$
	which, in turn, again by definition of $@$, is equal to
	$$
	[P/w]((M@Y)@\pair{\lb x_1.P_1 @Y@w}{\lb x_2.P_2 @Y@w}).
	$$
	where $P=\pair{\lb y_1.Q_1}{\lb y_2.Q_2}$. Due to Lemma \ref{lem:term-subst-@} and $M@Y\neq w$ (recall $w\notin M$), this term is equal to
	$$
	([P/w](M@Y)@\pair{\lb x_1.[P/w](P_1 @Y@w)}{\lb x_2.[P/w](P_2 @Y@w)}).
	$$
	Applying Lemma \ref{lem:term-subst-@} five times, having in mind that none of the terms $M$, $P_1$, $P_1@Y$, $P_2$, $P_2@Y$ is $w$ (recall $w\notin M,P_1,P_2$), we conclude that the term is equal to
	$$
	(M@Y)@\pair{\lb x_1.(P_1 @Y@\pair{\lb y_1.Q_1}{\lb y_2.Q_2})}{\lb x_2.(P_2 @Y@\pair{\lb y_1.Q_1}{\lb y_2.Q_2})}.
	$$
	Now, by definition of $\accasesymb$, this term is
	$$
	(M@Y)@\pair{\lb x_1.\accase{P_1}{y_1}{Q_1}{y_2}{Q_2}{Y}}{\lb x_2.\accase{P_2}{y_1}{Q_1}{y_2}{Q_2}{Y}},
	$$
	which is the RHS term, again by definition of $\accasesymb$.
	
	Case $C=C_1\supset C_2$. The LHS term is, by definition of $\accasesymb$,
	$$\lb z^{C_1}.\accase{\accase M{x_1}{P_1}{x_2}{P_2}{B_1\vee B_2}}{y_1}{Q_1@z}{y_2}{Q_2@z}{C_2}, $$
	which, by IH, is equal to
	{\footnotesize{$$
	\lb z^{C_1}.\accase M{x_1}{\accase{P_1}{y_1}{Q_1@z}{y_2}{Q_2@z}{C_2}}{x_2}{\accase{P_2}{y_1}{Q_1@z}{y_2}{Q_2@z}{C_2}}{C_2}.
	$$}}
	By item 1 of Lemma \ref{lem:optimized-admissible-commutative-conversions} and since $z$ is not an abstraction, this term is
		{\footnotesize{$$
	\lb z^{C_1}.\accase M{x_1}{(\accase{P_1}{y_1}{Q_1}{y_2}{Q_2}{C})@z}{x_2}{(\accase{P_2}{y_1}{Q_1}{y_2}{Q_2}{C})@z}{C_2},
	$$}}
	which is the RHS term, by definition of $\accasesymb$.
	
	Case $C=C_1\wedge C_2$. The LHS terms is, by definition of $\accasesymb$,
	$$
	\ipair i{\accase{\accase M{x_1}{P_1}{x_2}{P_2}{B_1\vee B_2}}{y_1}{\pproj i{Q_1}}{y_2}{\pproj i{Q_2}}{C_i}},
	$$
	which, by application of IH twice, is equal to
		{\footnotesize{$$
	\ipair i{\accase M{x_1}{\accase{P_1}{y_1}{\pproj i{Q_1}}{y_2}{\pproj i{Q_2}}{C_i}}{x_2}{\accase{P_2}{y_1}{\pproj i{Q_1}}{y_2}{\pproj i{Q_2}}{C_i}}{C_i}}.
	$$}}
	By item 2 of Lemma \ref{lem:optimized-admissible-commutative-conversions}, this term is equal to
		{\footnotesize{$$
	\ipair i{\accase M{x_1}{\pproj i{\accase{P_1}{y_1}{Q_1}{y_2}{Q_2}{C_i}}}{x_2}{\pproj i{\accase{P_2}{y_1}{Q_1}{y_2}{Q_2}{C_i}}}{C_i}},
	$$}}
	which is the RHS term, by definition of $\accasesymb$.
	
	Case $C=\forall Y.D$. The LHS term is, by definition of $\accasesymb$,
	$$
	\Lambda Y.\accase{\accase M{x_1}{P_1}{x_2}{P_2}{B_1\vee B_2}}{y_1}{Q_1@Y}{y_2}{Q_2@Y}{D}
	$$
	which, by application of IH, is equal to
		{\footnotesize{$$
	\Lambda Y.\accase M{x_1}{\accase{P_1}{y_1}{Q_1@Y}{y_2}{Q_2@Y}D}{x_2}{\accase{P_2}{y_1}{Q_1@Y}{y_2}{Q_2@Y}D}{D}.
	$$}}
	By item 3 of Lemma \ref{lem:optimized-admissible-commutative-conversions}, this term is equal to
		{\small{$$
	\Lambda Y.\accase M{x_1}{\accase{P_1}{y_1}{Q_1}{y_2}{Q_2}{C} @Y}{x_2}{\accase{P_2}{y_1}{Q_1}{y_2}{Q_2}{C} @Y}{D},
	$$}}
	which is the RHS term, by definition of $\accasesymb$.
\end{proof}

\begin{lem}[Admissible $\pia$-reduction]\label{lem:admissible-pi-absurdity-reduction} In $\fat$:
		{\small{$$
	\aabbort{\accase MxPyQ{\dperp}}C=\accase Mx{\aabbort PC}y{\aabbort QC}C.
	$$}}
\end{lem}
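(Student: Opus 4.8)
The plan is to prove the identity by induction on the type $C$, running exactly parallel to the proof of Lemma~\ref{lem:admissible-pi-disjunction-reduction}, but with the roles of the auxiliary results suitably adjusted. Since here the outermost operator on the left is $\aabbortsymb$ rather than $\accasesymb$, the part played in Lemma~\ref{lem:admissible-pi-disjunction-reduction} by the admissible commuting conversions of Lemma~\ref{lem:optimized-admissible-commutative-conversions} will be played by the admissible absurdity conversions of Lemma~\ref{lem:optimized-admissible-absurdity-conversions}; and, just as for Lemma~\ref{lem:admissible-pi-disjunction-reduction}, all three of those hold as genuine \emph{syntactic identities}, so the induction produces a bare equality with no reduction in either direction.

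For the base case $C=Y$, I would unfold the left-hand side by the atomic clause of $\aabbortsymb$, obtaining $(\accase MxPyQ{\dperp})@Y$; recalling that $\dperp=\forall X.X$, item~3 of Lemma~\ref{lem:optimized-admissible-commutative-conversions} rewrites this as $\accase Mx{P@Y}y{Q@Y}{[Y/X]X}=\accase Mx{P@Y}y{Q@Y}{Y}$, and then the atomic clause of $\aabbortsymb$ read backwards identifies $P@Y$ with $\aabbort PY$ and $Q@Y$ with $\aabbort QY$, yielding the right-hand side. (Alternatively one can expand both sides fully through the atomic clause of $\accasesymb$ and Lemma~\ref{lem:type-subst-@}, which is also routine.)

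The three inductive cases are handled uniformly. In each case I would first expand the outermost $\aabbortsymb$ on the left by its defining clause for $C$ — giving $\lb z.\,\aabbort{(\accase MxPyQ{\dperp})}{C_2}$ when $C=C_1\supset C_2$, the analogous $\ipair i{\,\cdot\,}$ when $C=C_1\wedge C_2$, and $\Lb Y.\,\aabbort{(\accase MxPyQ{\dperp})}{D}$ when $C=\forall Y.D$ — then apply the induction hypothesis to the inner $\aabbortsymb$, and finally use, respectively, items 1, 2 and 3 of Lemma~\ref{lem:optimized-admissible-absurdity-conversions} to rewrite each $\aabbort P{C_i}$ (resp.\ $\aabbort Q{C_i}$) back into the form $(\aabbort P{C})@z$, $\pproj i{(\aabbort P{C})}$, $(\aabbort P{C})@Y$; at that point the expression is exactly the defining clause of $\accasesymb$ for $C$ applied to $M$, $\aabbort PC$, $\aabbort QC$, i.e.\ the right-hand side.

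The main obstacle is essentially bookkeeping rather than conceptual: one must check that the bound variable introduced on the left by the clause of $\aabbortsymb$ and the bound variable introduced on the right by the clause of $\accasesymb$ can be taken to be the same — which follows by $\alpha$-equivalence together with Lemma~\ref{lem:miracle_abort_FV} (so that $\FV{\aabbort PC}=\FV P$, whence the freshness provisos match) — and that, in the $\forall$-case, the type substitution $[Y/X]$ produced by item~3 of Lemma~\ref{lem:optimized-admissible-absurdity-conversions} is trivial after a suitable renaming of the bound type variable. Once these renamings are fixed, every step is a syntactic identity, so no appeal to $\to_\beta$ or $\to_\eta$ is needed, in sharp contrast with the non-optimized translation of \cite{EspiritoSantoFerreira2020}, where the analogous conversion was only admissible up to $\beta$-equality.
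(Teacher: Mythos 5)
Your proposal is correct and follows essentially the same route as the paper's proof: induction on $C$, with the inductive cases unfolding $\aabbortsymb$, applying the induction hypothesis, and folding back via items 1--3 of Lemma~\ref{lem:optimized-admissible-absurdity-conversions} and the defining clauses of $\accasesymb$, all as syntactic identities. The only (harmless) difference is your base-case shortcut through item~3 of Lemma~\ref{lem:optimized-admissible-commutative-conversions}, where the paper instead unfolds $\accasesymb$, $@$ and Lemma~\ref{lem:type-subst-@} directly --- the alternative you yourself mention.
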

\begin{proof} By induction on $C$. We follow closely the proof of lemma 12 in paper \cite{EspiritoSantoFerreira2020}. The calls to lemma 8 of that paper, which generated there the $\beta$-reduction steps in the ``wrong'' direction, are here replaced by equalities justified by Lemma \ref{lem:optimized-admissible-absurdity-conversions}.
	
	Case $C=Y$.
	$$
	\begin{array}{cll}
		&LHS&\\
		=&\accase MxPyQ{\dperp}@Y&\textrm{(by def. of $\aabbortsymb$)}\\
		=&(\Lambda X.M@X@\pair{\lb x.P@X}{\lb y.Q@X})@Y&\textrm{(by def. of $\accasesymb$)}\\
		=&[Y/X](M@X@\pair{\lb x.P@X}{\lb y.Q@X})&\textrm{(by def. of $@$)}\\
		=&M@Y@\pair{\lb x.P@Y}{\lb y.Q@Y}&\textrm{(Lem. \ref{lem:type-subst-@} and $X\notin M,P,Q$)}\\
		=&M@Y@\pair{\lb x.\aabbort PY}{\lb y.\aabbort QY}&\textrm{(by def. of $\aabbortsymb$)}\\
		=&RHS&\textrm{(by def. of $\accasesymb$)}
	\end{array}
	$$
	
	Case $C=C_1\supset C_2$. The LHS term is, by definition of $\aabbortsymb$,
	$$
	\lb z^{C_1}.\aabbort{\accase MxPyQ{\perp}}{C_2},
	$$
	which, by IH, is equal to
	$$
	\lb z^{C_1}.\accase Mx{\aabbort P{C_2}}y{\aabbort Q{C_2}}{C_2}.
	$$
	Due to item 1 of Lemma \ref{lem:optimized-admissible-absurdity-conversions}, this term is equal to
	$$
	\lb z^{C_1}.\accase Mx{\aabbort P{C_1\supset C_2}@z}y{\aabbort Q{C_1\supset C_2}@z}{C_2},
	$$
	which is the RHS term, by definition of $\accasesymb$.
	
	Case $C=C_1\wedge C_2$. The LHS term, is, by definition of $\aabbortsymb$,
	$$
	\ipair i{\aabbort{\accase MxPyQ{\perp}}{C_i}},
	$$
	which, by IH applied twice, is equal to
	$$
	\ipair i{\accase Mx{\aabbort P{C_i}}y{\aabbort Q{C_i}}{C_i}}.
	$$
	Due to item 2 of Lemma \ref{lem:optimized-admissible-absurdity-conversions}, this term is equal to
	$$
	\ipair i{\accase Mx{\pproj i{\aabbort P{C_1\wedge C_2}}}y{\pproj i{\aabbort Q{C_1\wedge C_2}}}{C_i}},
	$$
	which is the RHS term, by definition of $\accasesymb$.
	
	Case $C=\forall Y.D$. The LHS term is, by definition of $\aabbortsymb$,
	$$
	\Lambda Y.\aabbort{\accase MxPyQ{\perp}}D,
	$$
	which, by IH, is equal to
	$$
	\Lambda Y.\accase Mx{\aabbort PD}y{\aabbort QD}{D}.
	$$
	Due to item 3 of Lemma \ref{lem:optimized-admissible-absurdity-conversions}, this term is equal to
	$$
	\Lambda Y.\accase Mx{\aabbort P{\forall Y.D}@Y}y{\aabbort Q{\forall Y.D}@Y}{D},
	$$
	which is the RHS term, by definition of $\accasesymb$.
\end{proof}



\subsection{Proof of the simulation theorem}

We return to the study of the optimized translation.

\begin{lem}\label{lem:optimized-translation-commutes-subst}
	$[\om N/x]\om M\to_{\beta}^*\om{([N/x]M)}$.
\end{lem}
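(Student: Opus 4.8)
The plan is to prove the statement by induction on the structure of the source term $M\in\ipc$. The base cases are immediate: if $M=x$ both sides equal $\om N$, and if $M=y$ with $y\neq x$ both sides equal $y$, so the relation holds by reflexivity of $\to_{\beta}^*$. For each inductive case the strategy is uniform: first push the outer substitution $[\om N/x]$ past the top-level construction produced by $\om{(\cdot)}$, using the relevant substitution lemma of this section and renaming bound term and type variables so that they are fresh for $N$ and distinct from $x$; then apply the induction hypothesis to the translated immediate subterms; and finally re-assemble with the matching compatibility lemma. In every case the term obtained equals $\om{([N/x]M)}$, because term substitution distributes over the source constructor in the expected way (the type annotations being unaffected).

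Concretely, I would proceed as follows. For $M=\lb y^A.M_0$, $M=\pair{M_1}{M_2}$ and $M=\injn i{M_0}AB$, Lemma~\ref{lem:term-subst-expansions} (together with the immediate fact that term substitution commutes with $\aiinjsymb$) moves the substitution inward, and the conclusion follows from the induction hypothesis and the compatibility clauses Lemma~\ref{lem:additional-compatibility-rules-1}.(3)--(5) and Lemma~\ref{lem:additional-compatibility-rules-2}.1 (iterated to $\to_{\beta}^*$). For $M=M_1M_2$ and $M=\proj i{M_0}$, Lemma~\ref{lem:term-subst-@}.(1),(2) gives $[\om N/x](\om{M_1}@\om{M_2})\to_{\betai}^=([\om N/x]\om{M_1})@([\om N/x]\om{M_2})$ and $[\om N/x]\pproj i{\om{M_0}}\to_{\betac}^=\pproj i{[\om N/x]\om{M_0}}$, and the induction hypothesis is then transported under $@$ (and $\pproj i{\cdot}$) via Corollary~\ref{cor-lem:additional-compatibility-rules-1}.(1),(2). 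For $M=\case{M_0}{y^A}P{z^B}QC$ and $M=\abort{M_0}A$, Lemma~\ref{lem:subst-admissible-optimized-constructions}.(1a),(2a) pushes $[\om N/x]$ inside $\accasesymb$ and $\aabbortsymb$ respectively, and after invoking the induction hypothesis on the components one closes up with Lemma~\ref{lem:additional-compatibility-rules-2}.(2)--(5).

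I do not anticipate a genuine obstacle, since the preliminary subsections already carry the technical weight; this proof is essentially a matter of threading those lemmas together with care about bound variables. The only point that really uses the slack left by the $\to_{\beta}^*$ in the statement is the application and projection cases: substituting into a term of the form $P@Q$ (or $\pproj iP$) may fire a single $\betai$- (resp.\ $\betac$-) step exactly when $P$ becomes an abstraction (resp.\ a pair) under the substitution, which is precisely what Lemma~\ref{lem:term-subst-@} records, and such a step is absorbed harmlessly into the reflexive-transitive closure. A secondary bookkeeping remark is that the substitution Lemmas~\ref{lem:subst-admissible-optimized-constructions}.(1a),(2a) themselves yield only $\to_{\beta}^*$ rather than equality, which is again consistent with the form of the conclusion.
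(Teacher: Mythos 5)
Your proposal is correct and follows essentially the same route as the paper: structural induction on $M$, pushing the substitution inward with Lemmas \ref{lem:term-subst-expansions}, \ref{lem:term-subst-@} and \ref{lem:subst-admissible-optimized-constructions}.(1a),(2a), then applying the induction hypothesis and closing up with the compatibility results (Corollary \ref{cor-lem:additional-compatibility-rules-1}, Lemma \ref{lem:additional-compatibility-rules-2}). Your explicit accounting of where the single $\beta$-steps from Lemma \ref{lem:term-subst-@} are absorbed into $\to_{\beta}^*$ matches the paper's argument.
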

\begin{proof}
	The proof is by induction on $M$. The base case ($M=y$) is immediate. Let us analyse the other cases.
	
	Case $M=\lb y^A. M_0$.
	$$
	\begin{array}{rcll}
		
		LHS&=&[\om N/x]\llb y^{\om A}. \om M_0&\textrm{(by def. of $\om{(\cdot)}$)}\\
		&=&\llb y^{\om A}.[\om N/x]\om M_0&\textrm{(by item 1. of Lemma \ref{lem:term-subst-expansions})}\\
		&\to^*_{\beta}&\llb y^{\om A}. \om {([N/x]M_0)}&\textrm{(by IH)}\\
		&=&RHS&\textrm{(by defs. of $\am{(\cdot)}$ and subst.)}
	\end{array}
	$$
	
	The case $M=\pair{M_0}{M_1}$ is entirely similar, using item 2. of Lemma \ref{lem:term-subst-expansions} and the induction hypothesis twice.  
	
	Case $M=M_0M_1$.
	$$
	\begin{array}{rcll}
		LHS&=&[\om N/x](\om M_0 @\om M_1) &\textrm{(by def. of $\om{(\cdot)}$)}\\
		&\to^=_{\beta}&([\om N/x]\om M_0)@([\om N/x]\om M_1) &\textrm{(by item 1. of Lemma \ref{lem:term-subst-@})}\\
		&\to^*_{\beta}&\om {([N/x]M_0)}@\om {([N/x]M_1)}&\textrm{(by IH)}\\
		&=&RHS&\textrm{(by defs. of $\am{(\cdot)}$ and subst.)}
	\end{array}
	$$
	
	The case $M=M_0i$ is entirely similar, using item 2.\ of Lemma \ref{lem:term-subst-@} and the induction hypothesis.  
	
	The case $M=\injn i{M_0}{A}{B}$ follows easily by the definitions of $\om {(\cdot)}$ and $\aiinjn i{\om {M_0}}{\om A}{\om B}$, by substitution in $\fat$ and by the induction hypothesis. 
	
	Case $M=\case {M_0}{y_1}{P_1}{y_2}{P_2}C$, with $P_1,P_2:C$.
	$$
	\begin{array}{cll}
		&LHS&\\
		=&[\om N/x]\accase{\om{M_0}}{y_1}{\om{P_1}}{y_2}{\om{P_2}}{\om C}&\textrm{(by def. of $\om{(\cdot)}$)}\\
		\to^*_{\beta}&\accase{[\om N/x]\om{M_0}}{y_1}{[\om N/x]\om{P_1}}{y_2}{[\om N/x]\om{P_2}}{\om C}&
		\\
		\to^*_{\beta}&\accase{\om{([N/x]M_0)}}{y_1}{\om{([N/x]{P_1})}}{y_2}{\om{([N/x]P_2)}}{\om C}&\textrm{(by IH)}\\
		=&RHS&
	\end{array}
	$$
	
	The second step above follows from item 1.(a) of Lemma \ref{lem:subst-admissible-optimized-constructions}; and the final step is a result of the definitions of $\om{(\cdot)}$ and substitution.
	
	Case $M=\abort {M_0}A$.
	$$
	\begin{array}{rcll}
		LHS&=&[\om N/x]\aabbort{\om{M_0}}{\om A}&\textrm{(by def. of $\om{(\cdot)}$)}\\
		&	\to^*_{\beta}&\aabbort{[\om N/x]\om{M_0}}{\om A}&\textrm{(by item 2.(a) of Lemma \ref{lem:subst-admissible-optimized-constructions})}\\
		&	\to^*_{\beta}&\aabbort{\om{([N/x]M_0)}}{\om{A}}&\textrm{(by IH)}\\
		&=&RHS&\textrm{(by defs. of $\om{(\cdot)}$ and subst.)}
	\end{array}
	$$
\end{proof}

\begin{thm}[Simulation]\label{thm:simulation}Let $R$ be a reduction rule of $\ipc$ given in Fig.\ref{fig:reduction-rules}.
	\begin{enumerate}
		\item Case $R\in\{\betai,\betac\}$. If $M\to_R N$ in $\ipc$, then $\om{M}\to_{\beta}^*\om{N}$ in $\fat$.
		\item Case $R\in\{\betad,\etad, \etai,\etac \}$. If $M\to_R N$ in $\ipc$, then $\om{M}\to_{\beta\eta}^*\om{N}$ in $\fat$.
		\item Case $R\in\{\pii,\pic,\pid,\pia,\abi,\abc,\abd,\aba\}$. If $M\to_R N$ in $\ipc$, then $\om{M}=\om{N}$ in $\fat$.
	\end{enumerate}
\end{thm}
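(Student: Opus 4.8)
The plan is to argue by induction on the derivation of $M\to_R N$, reading $\to_R$ as the compatible closure of the root rule $R$. This splits the proof into a \emph{root case} (where the $R$-redex is all of $M$) and the \emph{congruence cases} (where the redex lies inside one immediate subterm of $M$).

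In the root case I would run through the rules one at a time, each time reading off the shapes of $\om{M}$ and $\om{N}$ from Fig.~\ref{fig:optimized-translation} and citing the matching ``admissible rule'' lemma proved in the preceding subsections of Section~\ref{sec:simulation}. For item~1, the detour rules $\betai$ and $\betac$ use Lemma~\ref{lem:admissible-beta-eta-implication-conj}(1)--(2) to supply the leading $\beta$-step, after which the residual substitution is absorbed by Lemma~\ref{lem:optimized-translation-commutes-subst}; concretely, for $\betai$: $\om{M}=(\llb x.\om{M_0})@\om{N_0}\to_{\betai}[\om{N_0}/x]\om{M_0}\to_\beta^*\om{([N_0/x]M_0)}=\om{N}$, and all steps are $\beta$-steps. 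For item~2: $\betad$ is handled by Lemma~\ref{lem:optimized-admissible-beta-disjunction} followed by Lemma~\ref{lem:optimized-translation-commutes-subst}; $\etai$ and $\etac$ by items~3 and~4 of Lemma~\ref{lem:admissible-beta-eta-implication-conj} (for $\etai$ one uses $x\notin\om{M_0}$, which follows from the side condition $x\notin M_0$ since the translation introduces no new free term variables); and $\etad$ by Lemma~\ref{lem:optimized-admissible-eta-disjunction}. For item~3 the lemma cited in each case already delivers a \emph{syntactic identity} whose two sides unwind to exactly $\om{M}$ and $\om{N}$: $\pic$ from Lemma~\ref{lem:optimized-admissible-commutative-conversions}(2); $\pid$ from Lemma~\ref{lem:admissible-pi-disjunction-reduction}; $\pia$ from Lemma~\ref{lem:admissible-pi-absurdity-reduction} (using $\om{\perp}=\dperp$); $\abi$ and $\abc$ from Lemma~\ref{lem:optimized-admissible-absurdity-conversions}(1)--(2); $\abd$ from Lemma~\ref{lem:optimized-admissible-absurdity-disjunction}; and $\aba$ from Lemma~\ref{lem:optimized-admissible-absurdity-absurdity}.

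In the congruence cases the induction hypothesis provides, for the reducing subterm $M_0\to_R M_0'$, the appropriate relation --- $\om{M_0}\to_\beta^*\om{M_0'}$ for item~1, $\om{M_0}\to_{\beta\eta}^*\om{M_0'}$ for item~2, or $\om{M_0}=\om{M_0'}$ for item~3 --- and one lifts it through the $\fat$-context produced by $\om{(\cdot)}$. For subterms occurring under a $\lb$-abstraction, a pair, an application, or a projection, this lifting is Corollary~\ref{cor-lem:additional-compatibility-rules-1} (applied to the contexts built from $\llb$, from the two argument positions of $@$, from the optimized projection, and from the two argument positions of $\ppair{\cdot}{\cdot}$); for subterms under a disjunction introduction, under a $\casesymb$, or under an absurdity elimination, it is Lemma~\ref{lem:additional-compatibility-rules-2} (for $\aiinjsymb$, the three argument positions of $\accasesymb$, and the argument of $\aabbortsymb$), iterated along the given $\to^*$; in both sources, the ``$\beta$-refined'' clauses are what keep item~1 inside $\to_\beta^*$. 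For item~3 the congruence step is immediate: equal translations of the subterm yield equal translations of the whole, since $\om{(\cdot)}$ is a function of term structure.

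The only case that does not close by merely quoting a lemma from Section~\ref{sec:simulation} is the root case of $\pii$ in item~3. There $M=(\case{M_0}{x}{P}{y}{Q}{C\supset D})\,N_0$ reduces to $N=\case{M_0}{x}{PN_0}{y}{QN_0}{D}$, so $\om{M}=(\accase{\om{M_0}}{x}{\om P}{y}{\om Q}{\om C\supset\om D})@\om{N_0}$ while $\om{N}=\accase{\om{M_0}}{x}{\om P@\om{N_0}}{y}{\om Q@\om{N_0}}{\om D}$; Lemma~\ref{lem:optimized-admissible-commutative-conversions}(1) yields only a $\to_\beta^*$-reduction in general, with syntactic identity guaranteed just when $\om{N_0}$ is not an abstraction --- which may fail, precisely when $N_0$ is itself an abstraction. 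The hard part is therefore to close this gap, i.e.\ to establish $(\accase{\om{M_0}}{x}{\om P}{y}{\om Q}{\om C\supset\om D})@\om{N_0}=\accase{\om{M_0}}{x}{\om P@\om{N_0}}{y}{\om Q@\om{N_0}}{\om D}$ for an \emph{arbitrary} $\ipc$ term $N_0$; this is exactly what Section~\ref{sec:special} is devoted to, and once it is available the $\pii$ root case, and hence the whole theorem, follows.
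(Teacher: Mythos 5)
Your proposal is correct and follows essentially the same route as the paper: induction on the compatible closure, root cases discharged by the admissible-rule lemmas (Lemmas \ref{lem:admissible-beta-eta-implication-conj}--\ref{lem:admissible-pi-absurdity-reduction}) together with Lemma \ref{lem:optimized-translation-commutes-subst}, congruence cases lifted via Corollary \ref{cor-lem:additional-compatibility-rules-1} and Lemma \ref{lem:additional-compatibility-rules-2} with the $\beta$-refined clauses keeping item~1 inside $\to_\beta^*$. You also correctly isolate the root case of $\pii$ as the only gap, deferred to the var-special/pair-special analysis of Section \ref{sec:special} (Theorem \ref{theo:pii}), exactly as the paper does.
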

\begin{proof} In fact, in the last item, concerning commuting conversion rules, for the case $R=\pii$, our analysis so far only allow us to prove $\om{M}\to_{\beta}^*\om{N}$, and this is what we prove here. The stronger result, that indeed no $\beta$-reduction step exists between $\om M$ and $\om N$ in that case as well, requires a deeper analysis, to be carried out afterwards, in the next section.
	
	For each $R$, we do the proof by induction on $M\to_R N$. Let us check the base cases. 
	
	Case $\betai$:
	$$
	\begin{array}{rcll}
		\om{((\lb x.M)N)}&=&(\llb x. \om M)@\om N &\textrm{(by def. of $\om{(\cdot)}$)}\\
		&\to_{\betai}& [\om N/x]\om M& \textrm{(by Lemma \ref{lem:admissible-beta-eta-implication-conj})}\\
		&\to^*_{\beta}&  \om {([N/x]M)}&\textrm{(by Lemma \ref{lem:optimized-translation-commutes-subst})}
	\end{array}
	$$
	
	Case $\betac$:
	$$
	\begin{array}{rcll}
		\proj{i}{\pair{M_1}{M_2}}&=&\ppair{\om M_1}{\om M_2}@i &\textrm{(by def. of $\om{(\cdot)}$)}\\
		&\to_{\betac}& \om M_i& \textrm{(by Lemma \ref{lem:admissible-beta-eta-implication-conj})}
	\end{array}
	$$
	Case $\betad$:
	$$
	\begin{array}{cll}
		&\om{(\cse{\injn iM{A_1}{A_2}}{x_1^{A_1}}{P_1}{x_2^{A_2}}{P_2})}\\
		=& \accase{\aiinjn i{\om{M}}AB}{x_1}{\om{(P_1)}}{x_2}{\om{(P_2)}}C & \text{(by def.~of $\om{(\cdot)}$)}\\
		\to^+_{\beta\eta}&[\om{M}/x_i]\om{(P_i)} & \text{(by Lemma \ref{lem:optimized-admissible-beta-disjunction})}\\
		\to_{\beta}^*&\om{([M/x_i]P_i)}&\text{(by Lemma \ref{lem:optimized-translation-commutes-subst})}
	\end{array}
	$$
	
	Cases $\etai$ and $\etac$: By Lemma \ref{lem:admissible-beta-eta-implication-conj}.
	
	Case $\etad$: By Lemma \ref{lem:optimized-admissible-eta-disjunction}.
	
	Case $\pii$: Lemma \ref{lem:optimized-admissible-commutative-conversions}, item 1, just gives $\to_{\beta}^*$, not syntactic identity. For syntactic identity, see the next section of the paper.
	
	Case $\pic$: By Lemma \ref{lem:optimized-admissible-commutative-conversions}, item 2.
	
	
	Case $\abs$, $\connective=\supset,\wedge$: By Lemma \ref{lem:optimized-admissible-absurdity-conversions}, items 1. and 2. respectively.
	
	Case $\abs$, $\connective=\vee,\perp$: By Lemmas \ref{lem:optimized-admissible-absurdity-disjunction} and \ref{lem:optimized-admissible-absurdity-absurdity} respectively.
	
	Case $\pis$, $\connective=\vee,\perp$: By Lemmas \ref{lem:admissible-pi-disjunction-reduction} and \ref{lem:admissible-pi-absurdity-reduction} respectively.
	
	Let us analyze the inductive cases. In what follows $\too\in \{\to_{\beta}^*, \to_{\beta\eta}^*, =\}$.
	
	Case $\lb x.M_1\to \lb x.M_2$ with $M_1\to M_2$. By IH, $\om M_1\too \om M_2$. By Corollary \ref{cor-lem:additional-compatibility-rules-1}.5, we have
	
	$$\om {(\lb x. M_1)}=\llb x. \om M_1\too \llb x. \om M_2=\om {(\lb x. M_2)}$$. 
	
	Case $M_1N\to M_2N$ with $M_1\to M_2$. By IH, $\om M_1\too \om M_2$. By Corollary \ref{cor-lem:additional-compatibility-rules-1}.1, we have
	
	$$\om {(M_1N)}=\om M_1@\om N\too \om M_2@\om N=\om {(M_2N)}$$. 
	
	Case $MN_1\to MN_2$ with $N_1\to N_2$. By IH, $\om N_1\too \om N_2$. By Corollary \ref{cor-lem:additional-compatibility-rules-1}.2, we have
	
	$$\om {(MN_1)}=\om M@\om N_1\too \om M@\om N_2=\om{(MN_2)}.$$ 
	
	Case $\pair{M_1}{N}\to \pair{M_2}{N}$ with $M_1\to M_2$. By IH, $\om M_1\too \om M_2$. By Corollary \ref{cor-lem:additional-compatibility-rules-1}.3, we have
	
	$$\om {\pair{M_1}{N}}=\ppair{\om M_1}{\om N}\too \ppair{\om M_2}{\om N}=\om{\pair{M_2}{N}}.$$
	
	Case $\pair{M}{N_1}\to \pair{M}{N_2}$ with $N_1\to N_2$. Analogous applying Corollary \ref{cor-lem:additional-compatibility-rules-1}.4.
	
	Case $M_1i\to M_2i$ with $M_1\to M_2$. By IH, $\om M_1\too \om M_2$. By Corollary \ref{cor-lem:additional-compatibility-rules-1}.1, we have
	
	$$\om {(M_1i)}=\om M_1@i\too \om M_2@i=\om{(M_2i)}.$$ 
	
	Case $\injn i{M_1}AB\to \injn i{M_2}AB$ with $M_1\to M_2$. By IH, $\om M_1\too \om M_2$. By Lemma  \ref{lem:additional-compatibility-rules-2}.1, we have
	
	$$\om {\injn i{M_1}AB}= \aiinjn i{\om M_1}{A\om }{\om B}\too \aiinjn i{\om M_2}{A\om }{\om B}=\om {\injn i{M_2}AB}.$$ 
	
	Case $\case {M_1}{x}P{y}QC\to \case {M_2}{x}P{y}QC$ with $M_1\to M_2$. By IH, $\om M_1\too \om M_2$. By Lemma  \ref{lem:additional-compatibility-rules-2}.2, we have
	
	$$\om {LSH}= \accase {\om{M_1}}{x}{\om P}{y}{\om Q}{\om C}\too \accase {\om M_2}{x}{\om P}{y}{\om Q}{\om C}=\om {RHS}.$$ 
	
	Case $\case {M}{x}{P_1}{y}QC\to \case {M}{x}{P_2}{y}QC$ with $P_1\to P_2$. By IH, $\om P_1\too \om P_2$. By Lemma  \ref{lem:additional-compatibility-rules-2}.3, we have
	
	$$\om {LSH}= \accase {\om{M}}{x}{\om P_1}{y}{\om Q}{\om C}\too \accase {\om M}{x}{\om P_2}{y}{\om Q}{\om C}=\om {RHS}.$$ 
	
	Case $\case {M}{x}{P}{y}{Q_1}C\to \case {M}{x}{P}{y}{Q_2}C$ with $Q_1\to Q_2$ is analogous to the previous one.
	
	Case $\abort {M_1}C\to \abort {M_2}C $ with $M_1\to M_2$. By IH, $\om M_1\too \om M_2$. By Lemma  \ref{lem:additional-compatibility-rules-2}.5, we have
	
	$$\om {LSH}= \aabbort {\om M_1}{\om C}\too  \aabbort {\om M_2}{\om C}=\om {RHS}.$$ 
\end{proof}

\section{Special treatment of $\pii$}\label{sec:special}

We give 
in this separate section the proof of the base case $R=\pii$ of Theorem \ref{thm:simulation}. With the exception of $\pii$ (recall item 1 of Lemma \ref{lem:optimized-admissible-commutative-conversions}), all the commuting conversion rules of $\ipc$ hold in $\fat$ as admissible syntactic equalities - thus involving arbitrary terms and formulas of $\fat$. But, in fact, for the identification of commuting conversions in the image of the translation $\om{(\cdot)}$, it is sufficient that such admissible equalities hold for terms and formulas in that image.

\begin{thm}\label{theo:pii}
Let 
$$
\begin{array}{rcl}
M_1 & := & \case{M}{x_1}{P_1}{x_2}{P_2}{C\supset D}N\\
M_2 & := & \case{M}{x_1}{P_1 N}{x_2}{P_2 N}{D}
\end{array}
$$
Then
$$
\begin{array}{rcl}
	\om{M_1}&=&\accase{\om M}{x_1}{\om {P_1}}{x_2}{\om{P_2}}{\om C\supset \om D}@\om N\\
	&=&\accase{\om M}{x_1}{\om {P_1}@\om N}{x_2}{\om{P_2}@\om N}{\om D}\\
	&=&\om {M_2}
\end{array}
$$
In particular, item 1 of Lemma \ref{lem:optimized-admissible-commutative-conversions} holds as equality, when all the terms and formulas involved are in the image $\om{(\_)}$, and the base case $R=\pii$ of Theorem \ref{thm:simulation} holds.
\end{thm}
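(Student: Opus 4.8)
The plan is to reduce the three displayed equalities to a single substitution identity, prove it by induction on $D$, and isolate the one genuine difficulty: a structural fact about the image of $\om{(\cdot)}$ that prevents the $\beta$-step occurring in the general statement (item~1 of Lemma~\ref{lem:optimized-admissible-commutative-conversions}) from ever firing here. Unfolding the definitions of $\om{(\cdot)}$, of $\accasesymb$ (the clause for $\supset$) and of $@$, we get
$$\om{M_1}=\accase{\om M}{x_1}{\om{P_1}}{x_2}{\om{P_2}}{\om C\supset\om D}@\om N=[\om N/z]\accase{\om M}{x_1}{\om{P_1}@z}{x_2}{\om{P_2}@z}{\om D},$$
where $z$ is the fresh variable of that clause, so $z\notin\om M,\om{P_1},\om{P_2}$; and, since $\om{(P_iN)}=\om{P_i}@\om N$, also $\om{M_2}=\accase{\om M}{x_1}{\om{P_1}@\om N}{x_2}{\om{P_2}@\om N}{\om D}$. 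So everything reduces to
$$(\star)\qquad[\om N/z]\accase{\om M}{x_1}{\om{P_1}@z}{x_2}{\om{P_2}@z}{\om D}=\accase{\om M}{x_1}{\om{P_1}@\om N}{x_2}{\om{P_2}@\om N}{\om D},$$
and by item~3(a) of Lemma~\ref{lem:subst-admissible-optimized-constructions} the identity $(\star)$ already holds whenever $\om N$ is not an abstraction; what remains is to treat an arbitrary $\om N$, using that $\om N,\om{P_1},\om{P_2}$ are translations.

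To prove $(\star)$ I would proceed by induction on $D$, strengthening the statement to one about all \emph{elimination contexts} $\mathcal{E}$ — those made only of the optimized eliminations $\cdot@Y$, $\cdot@z'$, $\pproj j\cdot$ applied to \emph{fresh} variables — namely $[\om N/z]\accase{\om M}{x_1}{\mathcal{E}[\om{P_1}@z]}{x_2}{\mathcal{E}[\om{P_2}@z]}{\om D}=\accase{\om M}{x_1}{\mathcal{E}[\om{P_1}@\om N]}{x_2}{\mathcal{E}[\om{P_2}@\om N]}{\om D}$, since the recursion of $\accasesymb$ on $\om D$ makes the ``$\om{P_i}$-slots'' grow by exactly such an $\mathcal{E}$ while leaving the $\om M$-slot fixed. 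As $z\notin\mathcal{E}$ and $z\notin\om M$, the substitution $[\om N/z]$ passes cleanly through every $\om M@Y$-headed subterm and every surrounding $\lb$/$\Lb$/pairing context (by Lemmas~\ref{lem:FV_om_@}, \ref{lem:term-subst-@}, \ref{lem:type-subst-@}), so the whole induction reduces to the \emph{commutation claim} $[\om N/z]\big(\mathcal{E}[\om{P_i}@z]\big)=\mathcal{E}[\om{P_i}@\om N]$. If $\om{P_i}$ is not a $\lambda$-abstraction this is immediate: $\om{P_i}@z=\om{P_i}z$ and $\om{P_i}@\om N=\om{P_i}\om N$ are then plain applications with the substituted term in argument position, so no smart constructor of $\mathcal{E}$ ever fires on either side. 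The claim can fail only if, following the path of $\mathcal{E}$, some descendant of $\om{P_i}@z$ \emph{collapses to the bare variable $z$} before a further elimination is applied: then $[\om N/z]$ maps $z\,U$ to $\om N\,U$ whereas the right-hand side carries $\om N@U$, and these differ by one $\beta$-step precisely when $\om N$ is an introduction form. But such a collapse would force $\om{P_i}$ to be a \emph{generalized identity combinator} — a $\lambda$-abstraction whose body is built, using only $\lb{}$, $\Lb{}$ and pairing, around a single occurrence of the bound variable, in a shape mirroring $\om D$ (e.g.\ $\lb v.v$, $\lb v.\lb w.v$, $\lb v.\pair{v}{N}$, and iterates and $\Lb$-variants thereof).

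The remaining, and main, step is the structural lemma ruling this out: \emph{no term in the image of $\om{(\cdot)}$, and more generally no term produced by the clauses of $\accasesymb$ and $\aabbortsymb$ from such images, is a generalized identity combinator.} I would prove it by induction on $\ipc$-terms, with an inner induction over the clauses of $\accasesymb$ and $\aabbortsymb$. The decisive case is the source abstraction — which is precisely why $\om{(\cdot)}$ translates $\lb x.M_0$ by $\llb$: the body of $\om{(\lb x.M_0)}=\lb w.(\lb x.\om{M_0})w$ is a $\betai$-redex, an elimination form, not an introduction form headed by $w$, hence not of the forbidden shape. Pairs, $\aiinjsymb$-terms, applications and projections are handled directly or by the induction hypothesis (with Lemmas~\ref{lem:term-subst-@}, \ref{lem:subst-admissible-optimized-constructions}); the $\accasesymb$ and $\aabbortsymb$ expressions are handled because the leaves of their recursions are $\om{M'}@Y$-headed (applied to a pair, resp.\ alone), and $\om{M'}@Y$ cannot be the relevant bound variable since that variable does not occur in $\om{M'}$. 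Granting the structural lemma, the commutation claim, hence $(\star)$, hence $\om{M_1}=\om{M_2}$, follow, together with the sharpened case $R=\pii$ of Theorem~\ref{thm:simulation} and the stated instance of Lemma~\ref{lem:optimized-admissible-commutative-conversions}. The chief obstacle is to formulate ``generalized identity combinator'' precisely, relative to a type, and to verify that this property is avoided through every clause of the three auxiliary constructions — in particular that it is preserved under the substitutions carried out inside the $\accasesymb$ and $\aabbortsymb$ recursions.
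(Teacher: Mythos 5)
Your overall strategy coincides with the paper's: the reduction of all three equalities to the substitution identity $(\star)$, and the diagnosis that $(\star)$ can only fail if some term $\om{P_i}@z@\vec U$ along the elimination spine created by the recursion of $\accasesymb$ collapses to the bare variable $z$, is exactly the paper's Lemma \ref{lem:z-special} (whose hypothesis, $z$-specialness of the branches, is precisely your no-collapse condition), and your structural lemma ``no image of $\om{(\cdot)}$ is a generalized identity combinator'' is meant to play the role of the paper's Lemma \ref{lem:om_var_pair_special} (every $\om P$ is var-special). The gap is in how you propose to prove that structural lemma. ``Not a generalized identity combinator'' is a statement about the head shape of one term and about elimination paths fed only with fresh variables and indices; but the induction you need must be closed under $@$ with \emph{arbitrary} arguments: already in your case $P=P_0Q_0$ the argument is the arbitrary image term $\om{Q_0}$, and in the atomic clause of $\accasesymb$ the argument is the pair $\pair{\lb x.P}{\lb y.Q}$. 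So you are forced to an invariant of the form ``whenever $T@\vec U$ is a variable, that variable is free in $T$'' for arbitrary $\vec U$ --- i.e.\ the paper's var-specialness --- rather than the prohibition of one syntactic shape.

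More concretely, your justification of the $\accasesymb$/$\aabbortsymb$ cases (``the leaves of their recursions are $\om{M'}@Y$-headed, and $\om{M'}@Y$ cannot be the relevant bound variable'') breaks precisely in the corner case the paper has to treat separately: when $\om{M'}$ is a $\Lb$-abstraction whose body is a $\lb$-abstraction, which genuinely occurs in the image, e.g.\ $\om{(\injn i{M_0}AB)}=\Lb X.\lb w.(\proj iw)\om{M_0}$, and any $\casesymb$ or $\abort{}{}$ whose type is a disjunction likewise translates to a $\Lb$ over a $\lb$. Then $\om{M'}@X@\pair{\lb x.P}{\lb y.Q}$ is not an application headed by $\om{M'}@X$: both $@$'s compute, the pair is substituted into the body, and the elimination path may continue into the branches $P,Q$ (which contain the bound variable of the clause under analysis) or discard them altogether. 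To rule out a collapse there one needs, besides the hereditary var-special invariant, a companion invariant about pairs --- the paper's pair-special: every pair reachable by eliminations has components with equal free variables, needed to preserve var-specialness under $@i$ and to handle the pairs created by the $\wedge$-clause of $\accasesymb$, for which your outline has no analogue --- together with an exact computation of $\FV{\accase MxPyQC}$ including the case $M=\Lb Y.\lb x.M'$ with $x\notin\FV{M'}$ (the paper's ``specific'' terms); these are proved simultaneously in Lemma \ref{lem:miracle_case}, with Lemmas \ref{lem:miracle_abort_var} and \ref{lem:miracle_abort_pair} for $\aabbortsymb$. So the plan points in the right direction, but as sketched the inductive cases of your key lemma do not go through; repairing them essentially amounts to reconstructing the paper's var-special/pair-special development.
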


First we need some definitions and auxiliary results.

\begin{con}
	In what follows:
	\begin{itemize}
	\item[-] $U$ stands for a term, a type variable or a projection symbol.
	\item[-] $\vec{U}$ denotes a list $U_1,\ldots, U_m$, with $m\geq 0$.
	\item[-] If~ $\vec{U}=U_1,\ldots, U_m$ and $M$ is a term in $\fat$ then $M@\vec{U}$ denotes the term $M@U_1@\ldots @U_m$.	
	\end{itemize}
\end{con}

\begin{defn} Let $P$ be a term in $\fat$ and $z$ a term variable: $P$ is \emph{$z$-special} if, for all $\vec{U}$, $P@U\neq z$.
\end{defn}

\begin{defn} Let $P$ be a term in $\fat$:
	\begin{enumerate}
		\item $P$ is \emph{var-special} if, for all $z\notin \FV P$, $P$ is $z$-special. 
		Or equivalently: $P$ is var-special if, for all term variable $z$, and all $\vec{U}$,
		$$P@\vec{U}=z\implies z\in \FV P \enspace.$$
		
		\item $P$ is \emph{pair-special} if, for all terms $M_1$, $M_2$, and all $\vec{U}$, 
		
		$$P@\vec{U}=\pair{M_1}{M_2}\implies \FV{M_1}=\FV{M_2}.$$

	\end{enumerate}
\end{defn}


\begin{lem} \label{lem:special properties}
	\begin{enumerate} In $\fat$, let $P$, $N$ be terms and $Y$ a type variable:
		\item If $P$ is $z$-special then $P@U$ is $z$-special.
		\item If $P$ is var-special then $P@N$ and $P@Y$ are var-special.
		\item If $P$ is var-special and $P$ is pair-special then $P@i$ is var-special.
		\item If $P$ is pair-special then $P@U$ is pair-special.
	\end{enumerate}
\end{lem}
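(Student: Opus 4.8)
The plan is to derive all four items directly from the definitions, using two facts repeatedly: the right-bracketing convention for iterated $@$, which gives $(P@U)@\vec V = P@\vec W$ for the concatenated list $\vec W = U,\vec V$; and the containment $\FV P\subseteq\FV{P@U}$ from Lemma~\ref{lem:FV_om_@}, together with the one case (projection) where that containment can fail.

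Items 1 and 4 are then immediate. For item 1, any term $(P@U)@\vec V$ equals $P@\vec W$ with $\vec W=U,\vec V$, so $z$-speciality of $P$ gives $P@\vec W\neq z$; hence $P@U$ is $z$-special. For item 4, if $(P@U)@\vec V=\pair{M_1}{M_2}$ then this is $P@\vec W=\pair{M_1}{M_2}$, so pair-speciality of $P$ yields $\FV{M_1}=\FV{M_2}$; hence $P@U$ is pair-special. Item 2 reduces to item 1: by Lemma~\ref{lem:FV_om_@} we have $\FV P\subseteq\FV{P@N}$ and $\FV P\subseteq\FV{P@Y}$, so any $z$ outside $\FV{P@N}$ (resp.\ $\FV{P@Y}$) is outside $\FV P$, making $P$ $z$-special by var-speciality, and then $P@N$ (resp.\ $P@Y$) $z$-special by item 1; since such $z$ is arbitrary, $P@N$ and $P@Y$ are var-special.

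Item 3 is the only one needing a small case analysis, precisely because $\FV P\subseteq\FV{P@i}$ can fail. I would use the equivalent formulation of var-speciality and show that if $(P@i)@\vec V=z$ for a term variable $z$ and some $\vec V$, then $z\in\FV{P@i}$. Such an equality reads $P@\vec W=z$ with $\vec W=i,\vec V$, so var-speciality of $P$ gives $z\in\FV P$. If $P$ is not a pair, then $P@i=\proj iP$ and $\FV{P@i}=\FV P\ni z$. If instead $P=\pair{M_1}{M_2}$, then $P@i=M_i$, and pair-speciality of $P$ applied with the empty list gives $\FV{M_1}=\FV{M_2}$, whence $\FV{P@i}=\FV{M_i}=\FV{M_1}\cup\FV{M_2}=\FV P\ni z$. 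Either way $z\in\FV{P@i}$, so $P@i$ is var-special.

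The only point where care is needed is item 3: the projection $@i$ discards free variables exactly when $P$ is an explicit pair, and this is rendered harmless precisely by pair-speciality, which forces the two components of such a pair to have the same free variables. Everything else is routine bookkeeping with the definition of $@$ and Lemma~\ref{lem:FV_om_@}, so I expect no real obstacle.
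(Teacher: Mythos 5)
Your proof is correct and follows essentially the same route as the paper's: items 1, 2 and 4 are immediate from the definitions and Lemma \ref{lem:FV_om_@} (the paper likewise dispatches them in one line each), and your item 3 is exactly the paper's case analysis on whether $P$ is a pair, using pair-speciality (with the empty list) to ensure the projection loses no free variables.
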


\begin{proof}
	Item 1.\ is immediate by definition of $z$-special.
	
	Item 2.\ is immediate since $\FV{P}\subseteq \FV{P@N}$ and $\FV{P}\subseteq \FV{P@Y}$.
	
	For item 3., take $P@i@\vec{U}=w$. Since $P$ is var-special, we know that $w\in \FV{P}$. If $P$ is not a pair then $P@i=Pi$ and $w\in \FV{Pi}=\FV{P}$. If $P$ is a pair, say $P=\pair{M_1}{M_2}$ then $P@i=M_i$. Since $P$ is pair-special we know that $\FV{M_1}=\FV{M_2}=\FV{P}$, thus $w\in \FV{M_i}$.
	
	Item 4.\ is immediate by definition of pair-special.  
\end{proof}

An important remark: If $P$ is pair special, then $\FV{P@1}=\FV{P}=\FV{P@2}$.

\begin{lem}\label{lem:z-special}
	Suppose $P,Q$ are $z$-special and $z\notin M$. Then
	$$[N/z]\accase{M}{x}{P}{y}{Q}{D}=\accase{M}{x}{[N/z]P}{y}{[N/z]Q}{D}$$	
\end{lem}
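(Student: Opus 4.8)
The plan is to prove the equality by induction on the type $D$, following the recursive clauses defining $\accasesymb$. The statement is a strengthening, valid under the extra hypotheses "$z\notin M$" and "$P,Q$ are $z$-special", of item~1(a) of Lemma~\ref{lem:subst-admissible-optimized-constructions}, which in general only yields a $\beta$-reduction; the whole point is that those hypotheses upgrade every step of that reduction to a syntactic identity. The key mechanical observation is that being $z$-special forces $P\neq z$ (take the empty list $\vec U$ in the definition), and likewise $Q\neq z$; hence in every relevant instance of Lemma~\ref{lem:term-subst-@} we land in the equality branch, i.e.\ $[N/z](P@U)=([N/z]P)@([N/z]U)$, $[N/z](\pproj iP)=\pproj i{[N/z]P}$ and $[N/z](P@Z)=([N/z]P)@Z$, and similarly for $Q$.

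For the base case $D=X$ we unfold $\accase MxPyQX=(M@X)@\pair{\lb x.P}{\lb y.Q}$. First we note $M@X\neq z$: if $M$ is not a $\Lb$-abstraction this is clear, and if $M=\Lb Y.M_0$ then $M@X=[X/Y]M_0\neq z$ since $z\notin M_0$. Thus two applications of Lemma~\ref{lem:term-subst-@} — to the outer application, legitimate because $M@X\neq z$, and to $M@X$ itself, because $M\neq z$ — together with $[N/z]M=M$ and ordinary substitution inside a pair, give $[N/z]\bigl((M@X)@\pair{\lb x.P}{\lb y.Q}\bigr)=(M@X)@\pair{\lb x.[N/z]P}{\lb y.[N/z]Q}$ after renaming $x,y$ away from $N$; by definition of $\accasesymb$ the right-hand side is $\accase Mx{[N/z]P}y{[N/z]Q}X$, as required. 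Note the $z$-specialness is not needed here; only $z\notin M$ is used.

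For the inductive clauses $D=C\supset D_0$, $D=D_1\wedge D_2$ and $D=\forall X.D_0$ the pattern is uniform: unfold the clause, commute $[N/z]$ past the outer $\lb$, $\pair{\cdot}{\cdot}$ or $\Lb$ (choosing the bound variable fresh, in particular outside $N$), invoke the induction hypothesis on the subterm $\accase Mx{P@U}y{Q@U}{D_0}$ — with $U$ a fresh term variable, a projection symbol, or a fresh type variable according to the clause — and fold the definition of $\accasesymb$ back. The induction hypothesis applies because "$z\notin M$" is unchanged and, by item~1 of Lemma~\ref{lem:special properties}, $P@U$ and $Q@U$ are again $z$-special; and the rewriting $[N/z](P@U)=([N/z]P)@U$ (and its analogues for $\pproj i$ and $@Z$) is the equality branch of Lemma~\ref{lem:term-subst-@}, using $U\neq z$ for the freshly chosen bound variable. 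I expect the only real care to be bookkeeping: one must keep the smart-constructor semantics of $@$ and of the optimized projections in mind throughout (so that, e.g., "$[N/z](P@Q)$" denotes $[N/z]$ applied to the already-evaluated term), and one must verify that both invariants of the induction — $z\notin M$ and "$z$-specialness of the two branches" — really survive the descent into the recursive call, which is precisely what item~1 of Lemma~\ref{lem:special properties} provides. Beyond this there is no genuine obstacle.
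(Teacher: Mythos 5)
Your proposal is correct and follows essentially the same route as the paper's proof: induction on $D$, using $z$-specialness (hence $P,Q\neq z$) to land in the equality branch of Lemma~\ref{lem:term-subst-@}, item~1 of Lemma~\ref{lem:special properties} to propagate $z$-specialness into the recursive calls, and $z\notin M$ (so $M@X\neq z$) in the base case. The bookkeeping points you flag (fresh bound variables, $z$-specialness not needed when $D=X$) match the paper's treatment.
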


\begin{proof}
	The proof is by induction on $D$.
	
	Case $D=X$.
$$
\begin{array}{rcll}
	LHS&=&[N/z]((M@X)@\pair{\lb x.P}{\lb y.Q})&
	\\
	&=&([N/z](M@X))@[N/z]\pair{\lb x.P}{\lb y.Q}&
	\\
		&=&(([N/z]M)@X)@\pair{\lb x.[N/z]P}{\lb y.[N/z]Q}&
		\\
	&=&RHS&
\end{array}
$$
	
The first equality above follows from the definition of $\accasesymb$; the second uses item 1. of Lemma \ref{lem:term-subst-@} since $M@X\neq z$; the third uses the same item noticing that $M\neq z$; the fourth equality uses the fact that $z\notin M$.	

	Case $D=D_1\supset D_2$.
$$
\begin{array}{rcll}
	LHS&=&[N/z](\lb w.\accase{M}{x}{P@w}{y}{Q@w}{D_2})&\textrm{(def. of $\accasesymb$)}\\
	&=&\lb w.[N/z]\accase{M}{x}{P@w}{y}{Q@w}{D_2}&\\
	&=&\lb w.\accase{M}{x}{[N/z](P@w)}{y}{[N/z](Q@w)}{D_2}&
	\\
	&=&\lb w.\accase{M}{x}{([N/z]P)@w}{y}{([N/z]Q)@w}{D_2}&
	\\
	&=&RHS&
\end{array}
$$

The third equality above follows from IH using item 1.\ of Lemma \ref{lem:special properties} and the fourth equality uses item 1.\ of Lemma \ref{lem:term-subst-@} since $P$ and $Q$ are $z$-special.

Case $D=D_1\wedge D_2$.
$$
\begin{array}{rcll}
	LHS&=&[N/z]\ipair i{\accase Mx{P@i}y{Q@i}{D_i}}&\textrm{(def. of $\accasesymb$)}\\
	&=&\ipair i{[N/z]\accase Mx{P@i}y{Q@i}{D_i}}&\\
	&=&\ipair i{\accase Mx{[N/z](P@i)}y{[N/z](Q@i)}{D_i}}&
	\\
	&=&\ipair i{\accase Mx{([N/z]P)@i}y{([N/z]Q)@i}{D_i}}&
	\\
	&=&RHS&
\end{array}
$$

The third equality above follows from IH using item 1.\ of Lemma \ref{lem:special properties} and the fourth equality uses item 1.\ of Lemma \ref{lem:term-subst-@} since $P$ and $Q$ are $z$-special.

Case $D=\forall X. D_0$.
$$
\begin{array}{rcll}
	LHS&=&[N/z](\Lb X.\accase{M}{x}{P@X}{y}{Q@X}{D_0})&\textrm{(def. of $\accasesymb$)}\\
&=&\Lb X.[N/z]\accase{M}{x}{P@X}{y}{Q@X}{D_0}&\\
&=&\lb X.\accase{M}{x}{[N/z](P@X)}{y}{[N/z](Q@X)}{D_0}&
\\
&=&\lb X.\accase{M}{x}{([N/z]P)@X}{y}{([N/z]Q)@X}{D_0}&
\\
&=&RHS&
\end{array}
$$

The third equality above follows from IH using item 1.\ of Lemma \ref{lem:special properties} and the fourth equality uses item 1.\ of Lemma \ref{lem:term-subst-@} since $P$ and $Q$ are $z$-special. 
\end{proof}

The previous lemma should be compared with items 1(a) and 3(a) of Lemma \ref{lem:subst-admissible-optimized-constructions}\footnote{For instance, the counter-example in footnote \ref{fn:counter-example} is not a counter-example to the previous lemma because $\lb w.w$ is not $z$-special.}. The third item of the next lemma should be compared with Lemma \ref{lem:FV_case}. In its statement, we say $M$ is \emph{specific} if $M$ has the form $\Lambda Y.\lb x.M'$ with $x\notin\FV{M'}$.
\begin{lem}\label{lem:miracle_case}
	For all terms $M,P,Q$ and types $C$ in $\fat$, if $M,P,Q$ are var-special and pair-special then
	\begin{enumerate}
		\item $\accase MxPyQC$ is var-special
		\item $\accase MxPyQC$ is pair-special
		\item If $M$ is not specific then 
		$$\FV{\accase MxPyQC}=\FV M\cup (\FV P\setminus x)\cup (\FV Q\setminus y),$$
		else $\FV{\accase MxPyQC}=\FV{M}$.
	\end{enumerate}
\end{lem}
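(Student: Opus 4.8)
The plan is to establish all three items together by induction on $C$ (say, on the number of connectives in $C$, a measure which is preserved under type-variable renaming), strengthening item~1 along the way: I prove that every term of the form $\accase MxPyQC@U_1@\cdots@U_m$ which happens to be a term variable is already a free variable of $M$; item~1 as stated then follows from Lemma~\ref{lem:FV_case}. The engine of the induction is Lemma~\ref{lem:special properties}: whenever I descend to a proper subformula of $C$ and correspondingly replace $P,Q$ by $P@N,Q@N$, by $P@i,Q@i$, or by $P@X,Q@X$, the new side-terms are again var-special and pair-special, so the induction hypothesis keeps applying to the smaller $\accasesymb$-terms, while $M$ is carried unchanged and keeps both properties.

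In the base case $C=X$ we have $\accase MxPyQX=(M@X)@\pair{\lb x.P}{\lb y.Q}$, which is just $M$ eliminated by the list $X,\pair{\lb x.P}{\lb y.Q}$; hence any variable (resp.\ pair) produced from it by further eliminations is produced from $M$, and items~1 (in the strengthened form) and~2 follow directly from $M$ being var-special and pair-special. For item~3 I distinguish whether $M$ is specific: if $M=\Lambda Y.\lb w.M'$ with $w\notin\FV{M'}$, then $M@X$ is a vacuously-binding abstraction, the pair argument is discarded, $\accase MxPyQX=[X/Y]M'$, and $\FV{\accase MxPyQX}=\FV{M'}=\FV M$; otherwise $M@X$ is either not an abstraction or a $\lb$-abstraction genuinely using its bound variable, the pair argument survives (either as an argument of an application or as a substituend of a substitution), and $\FV{\accase MxPyQX}=\FV{M@X}\cup(\FV P\setminus x)\cup(\FV Q\setminus y)=\FV M\cup(\FV P\setminus x)\cup(\FV Q\setminus y)$, using $\FV{M@X}=\FV M$ (Lemma~\ref{lem:FV_om_@}).

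The cases $C=C_1\wedge C_2$ and $C=\forall X.C_0$ are routine. For $\wedge$, $\accase MxPyQ{C_1\wedge C_2}=\pair{S_1}{S_2}$ with $S_i=\accase Mx{\pproj iP}y{\pproj iQ}{C_i}$; a variable or a pair produced by eliminations must come, after exactly one projection, from some $S_j$, so items~1 and~2 reduce to the induction hypothesis, except for the empty-list instance of item~2, which asks $\FV{S_1}=\FV{S_2}$: this follows from item~3 for $S_1,S_2$ and the equalities $\FV{\pproj1P}=\FV{\pproj2P}$, $\FV{\pproj1Q}=\FV{\pproj2Q}$ --- exactly where pair-specialness of $P,Q$ is used, via the remark after Lemma~\ref{lem:special properties}; item~3 then follows from item~3 for $S_1,S_2$ together with $\FV{\pproj1P}\cup\FV{\pproj2P}=\FV P$. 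For $\forall$, instantiation by a type variable commutes with $\accasesymb$ (Lemma~\ref{lem:subst-admissible-optimized-constructions}) and leaves term free variables untouched (Lemma~\ref{lem:FV_om_@}), so this case reduces at once to the induction hypothesis.

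The main obstacle is $C=C_1\supset C_2$, where $\accase MxPyQ{C_1\supset C_2}=\lb z.\accase Mx{P@z}y{Q@z}{C_2}$ with $z$ fresh: eliminating this abstraction by a term $N$ fires a substitution $[N/z](\cdots)$, and one must make sure that no free variable of $N$ foreign to the original term gets exposed. The crucial point is that, because $P$ and $Q$ are var-special and $z\notin\FV P,\FV Q$, the terms $P@z$ and $Q@z$ are $z$-special; hence Lemma~\ref{lem:z-special} applies and yields $[N/z]\accase Mx{P@z}y{Q@z}{C_2}=\accase Mx{P@N}y{Q@N}{C_2}$ (rewriting $[N/z](P@z)=P@N$ by Lemma~\ref{lem:term-subst-@} and $z\notin\FV P$), a term to which the induction hypothesis applies since $P@N,Q@N$ are var-special and pair-special (Lemma~\ref{lem:special properties}). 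Consequently, any variable or pair obtained from $\lb z.\accase Mx{P@z}y{Q@z}{C_2}$ by eliminations is in fact obtained from $\accase Mx{P@N}y{Q@N}{C_2}$ for a suitable $N$, hence --- by the strengthened item~1, resp.\ item~2 --- lies in $\FV M$, resp.\ has equal free-variable sets in its two components; and since $\FV M\subseteq\FV{\accase Mx{P@z}y{Q@z}{C_2}}$ (Lemma~\ref{lem:FV_case}) and $z\notin M$, such a variable belongs to $\FV{\lb z.\accase Mx{P@z}y{Q@z}{C_2}}$, giving items~1--2. Finally, item~3 for this case follows from item~3 for $\accase Mx{P@z}y{Q@z}{C_2}$ together with the observation that $\FV{P@z}$ exceeds $\FV P$ at most by $z$, which the outer $\lb$ then binds away.
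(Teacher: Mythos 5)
Your proposal is correct and follows essentially the same route as the paper's proof: the same strengthening of item~1 (reducing it via Lemma~\ref{lem:FV_case} to ``$\accase MxPyQC@\vec U=w\Rightarrow w\in\FV M$''), the same simultaneous induction on $C$ with Lemma~\ref{lem:special properties} propagating var-/pair-specialness to $P@N$, $P@i$, $P@X$, the same use of Lemma~\ref{lem:z-special} together with Lemma~\ref{lem:term-subst-@} in the critical $C_1\supset C_2$ case, and the same shape analysis of $M$ (specific or not) in the base case. Your explicit remark that the induction measure is invariant under type-variable renaming (needed when the IH is applied to $[Y/X]C_0$) is a welcome precision, but it does not change the argument.
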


\begin{proof}
	Because of Lemma \ref{lem:FV_case}, instead of proving item 1, it is enough to prove that for all $w$, for all $\vec{U}$,
	$$\accase MxPyQC@\vec{U}=w\Rightarrow w\in \FV M.$$ In the remainder of this proof, we refer to this variant as item 1. 
	
	The conjunction of the three items is proved by induction on $C$.
	
	Case $C=X$. Then $\accase MxPyQX=M@X@\pair{\lb x. P}{\lb y.Q}$.
	
	Item 1. Let $\accase MxPyQX@\vec{U}=M@X@\pair{\lb x. P} {\lb y.Q}@\vec{U}=w$. Since $M$ is var-special, $w\in \FV{M}$.
	
	Item 2. Let $\accase MxPyQX@\vec{U}=M@X@\pair{\lb x. P} {\lb y.Q}@\vec{U}=\pair{M_1}{M_2}$. Since $M$ is pair-special, $\FV{M_1}=\FV{M_2}$.
	
	Item 3. Let us analyze the free variables of $\accase MxPyQX$, i.e., the free variables of $M@X@\pair{\lb x. P} {\lb y.Q}$.
	
	If $M\neq \Lb Y.M_0$ then $\accase MxPyQX=MX\pair{\lb x.P}{\lb y.Q}$. Thus $\FV{\accase MxPyQC}=\FV M\cup (\FV P\setminus x)\cup (\FV Q\setminus y)$.
	
	If $M=\Lb Y.M_0$ and $M_0\neq \lb z.M_1$ then  $\accase MxPyQX$ is\\ $([X/Y]M_0)\pair{\lb x.P}{\lb y.Q}$. Thus $\FV{\accase MxPyQC}=\FV {[X/Y]M_0}\cup (\FV P\setminus x)\cup (\FV Q\setminus y)= \FV {M}\cup (\FV P\setminus x)\cup (\FV Q\setminus y)$. The latter equality is justified by the fact that $\FV{[X/Y]M_0}=\FV{M_0}=\FV{M}$.
	
	If $M=\Lb Y.\lb z.M_1$, with $z\in\FV{M_1}$,  then 
	$$
	\begin{array}{rcll}
	\FV{\accase MxPyQX}&=&\FV{[\pair{\lb x.P}{\lb y.Q}/z][X/Y]M_1}&\\
	&=&\left(\FV{[X/Y]M_1}\setminus z\right)\cup\FV{\pair{\lb x.P}{\lb y.Q}}&\\
	&=&\left(\FV{M_1}\setminus z\right)\cup(\FV{P}\setminus x)\cup (\FV{Q}\setminus y)&\\
	\end{array}
	$$
	In the second equation we used the fact that $z\in\FV{M_1}$.
	
	If $M=\Lambda Y.\lb z.M_1$, with $z\notin\FV{M_1}$, that is, $M$ is specific. Then 
	$$
	\begin{array}{rcl}
	\FV{\accase MxPyQX}&=&\FV{[X/Y]M_1}\\
	&=&\FV{M_1}\\
	&=&\FV{M}
	\end{array}
	$$ 
	In this calculation, the fact $z\notin\FV{M_1}$ is used in the first and last equations.
	
	Case $C=C_1\wedge C_2$. Then $\accase MxPyQC@\vec{U}=\pair{N_1}{N_2}$ with $N_i=\accase{M}{x}{P@i}{y}{Q@i}{C_i}$, for $i=1,2$. Before proving the items, we want to prove that $\FV{N_1}=\FV{N_2}$. If $M$ is not specific, then:
	
	$$
	\begin{array}{rcll}
		\FV{N_1}&=&\FV{\accase Mx{P@1}y{Q@1}{C_1}} 	&\\
		&=&\FV{M}\cup (\FV{P@1}\setminus x) \cup (\FV{Q@1}\setminus y)&\text{(by IH)}\\
		&=&\FV{M}\cup (\FV{P@2}\setminus x) \cup (\FV{Q@2}\setminus y)&\text{(*)}\\
		&=&\FV{\accase Mx{P@2}y{Q@2}{C_2}}&\textrm{(by IH)}\\	
		&=&\FV{N_2}&
	\end{array}
	$$
	Justification $(*)$: Since $P$, $Q$ are pair-special, then $\FV{P@1}=\FV{P@2}$ and $\FV{Q@1}=\FV{Q@2}$. If $M$ is specific, then $\FV{N_1}=\FV{M}=\FV{N_2}$, with the two equalities justified by IH.
	
		Item 1. Let $\accase MxPyQC@\vec{U}=
		\ipair i{N_i}@U_1@\ldots @U_n=w$. The cases $n=0$, or $n\geq 1$ and $U_1\neq i$, are impossible. So assume $n\geq 1$ and $U_1=i$.  Then 	$\accase MxPyQC@\vec{U}=\accase M {x}{P@i}{y}{Q@i}{C_i}@U_2@\ldots@U_n=w.$	By Lemma \ref{lem:special properties}, since $P$ and $Q$ are var-special and pair-special we know that $P@i$ and $Q@i$ are var-special and pair-special. So IH applies, and we obtain $w\in \FV{M}$.
		
	Item 2. Let $\accase MxPyQC@\vec{U}=\ipair i{N_i}@U_1@\ldots @U_n=\pair{M_1}{M_2}$. The case $n\geq 1$ and $U_1\neq i$ is impossible. 
		
		If $n\geq 1$ and $U_1=i$ we have $\accase Mx{P@i}y{Q@i}{C_i}@U_2@\ldots@U_n=\pair{M_1}{M_2}$. Again, since $P@i$ and $Q@i$ are var-special and pair-special, by IH we have $\FV{M_1}=\FV{M_2}$.
		
		If $n=0$ we have $M_i=N_i$, for each $i=1,2$. We already saw that $\FV{N_1}=\FV{N_2}$. 

	Item 3. Suppose $M$ is not specific. Choose $j\in\{1,2\}$. Then:
		$$
	\begin{array}{cll}
		&\FV{\accase MxPyQC}\\
		=&\FV{\ipair i{N_i}}&\textrm{(by def. of $\accasesymb$)}\\
		=&\bigcup_{i=1,2}\FV{N_i}&\\
		=&\FV{\accase{M}{x}{P@j}{y}{Q@j}{C_j}}&\textrm{(since $\FV{N_1}=\FV{N_2}$)}\\
		=&\FV M\cup (\FV {P@j}\setminus x)\cup (\FV {Q@j}\setminus y)&\text{(by IH)}\\
		=&\FV M\cup (\FV {P}\setminus x)\cup (\FV {Q}\setminus y)&\textrm{(*)}
	\end{array}
	$$
		Justification $(*)$: Since $P$, $Q$ are pair-special, then $\FV{P@j}=\FV{P}$ and $\FV{Q@j}=\FV{Q}$. Finally, suppose $M$ is specific. Again, for some $j$, \\$\FV{\accase MxPyQC}=\FV{N_j}$. But, in this case, IH gives $\FV{N_j}=\FV{M}$, as required.
	
		Case $C=C_1\supset C_2$. \\Then $\accase MxPyQC=(\lb z^{C_1}.\accase{M}{x}{P@z}{y}{Q@z}{C_2})$.
	
	Item 1. Let 
	$$
	\begin{array}{cl}
		 &\accase MxPyQC@\vec{U}\\
		=&(\lb z^{C_1}.\accase{M}{x}{P@z}{y}{Q@z}{C_2})@U_1@\ldots @U_n\\
		=&w
	\end{array}
	$$
	 The cases $n=0$, or $n\geq 1$ and $U_1\neq N$, are impossible. So suppose $n\geq 1$ and $U_1=N$. We have $([N/z]\accase{M}{x}{P@z}{y}{Q@z}{C_2})@U_2@\ldots@U_n=w$. Note that $z\notin M,P,Q$, and by being var-special, $P$, $Q$ are $z$-special. Thus by Lemma \ref{lem:z-special} we have 
	$$
	\begin{array}{cl}
		&([N/z]\accase{M}{x}{P@z}{y}{Q@z}{C_2})@U_2@\ldots@U_n\\
		=&\accase{M}{x}{P@N}{y}{Q@N}{C_2}@U_2@\ldots@U_n\\
		=&w
	\end{array}
    $$
   Therefore, by IH, using Lemma \ref{lem:special properties}, we obtain $w\in \FV{M}$.
	 
	 	Item 2. Let 
	 	$$
	 	\begin{array}{cl}
	 	&\accase MxPyQC@\vec{U}\\
	 	=&(\lb z^{C_1}.\accase{M}{x}{P@z}{y}{Q@z}{C_2})@U_1@\ldots @U_n\\
	 	=&\pair{M_1}{M_2}
	 	\end{array}
	 	$$
	 	The cases $n=0$, or $n\geq 1$ and $U_1\neq N$, are impossible. So suppose $n\geq 1$ and $U_1=N$. We have $([N/z]\accase{M}{x}{P@z}{y}{Q@z}{C_2})@U_2@\ldots@U_n=\pair{M_1}{M_2}$. As before, 	$$
	 	\begin{array}{cl}
	 	&([N/z]\accase{M}{x}{P@z}{y}{Q@z}{C_2})@U_2@\ldots@U_n\\
	 	=&\accase{M}{x}{P@N}{y}{Q@N}{C_2})@U_2@\ldots@U_n\\
	 	=&\pair {M_1}{M_2}
	 	\end{array}
	 	$$  
	 	Thus, by IH, we have $\FV{M_1}=\FV{M_2}$.
	 	
	 	Item 3. Suppose $M$ is not specific. Then
		$$
	\begin{array}{cll}
		&\FV{\accase MxPyQC}\\
		=&\FV{\lb z. \accase{M}{x}{P@z}{y}{Q@z}{C_2}}&\textrm{(def. of $\accasesymb$)}\\
		=&\FV{\accase{M}{x}{P@z}{y}{Q@z}{C_2}}\setminus z&\\
		=&(\FV M \cup (\FV{P@z}\setminus x)\cup (\FV{Q@z}\setminus y))\setminus z&\textrm{(IH using Lemma \ref{lem:special properties})}\\
		=&\FV M \cup (\FV{P@z}\setminus x\setminus z)\cup (\FV{Q@z}\setminus y\setminus z)&\textrm{($z\notin \FV M$)}\\
		=&\FV M \cup (\FV{P@z}\setminus z\setminus x)\cup (\FV{Q@z}\setminus z\setminus y)&\\
		=&\FV M \cup (\FV{P}\setminus x)\cup (\FV{Q}\setminus y)&\textrm{($z\notin \FV  P,\FV{Q}$)}
	\end{array}
	$$ 
	Finally suppose $M$ is specific. Again, 
	$$\FV{\accase MxPyQC}=\FV{\accase{M}{x}{P@z}{y}{Q@z}{C_2}}\setminus z,$$
	and IH in this case says this is $\FV{M}\setminus z$. We may assume we have chosen $z\notin \FV{M}$, so we obtain $\FV{M}$, as required.
	 
	 	Case $C=\forall X. C_0$. Then $\accase MxPyQC=(\Lb X.\accase{M}{x}{P@X}{y}{Q@X}{C_0})$.
	 
	 	Item 1. Let 
	 	$$
	 	\begin{array}{cl}
	 	&\accase MxPyQC@\vec{U}\\
	 	=&(\Lb X.\accase{M}{x}{P@X}{y}{Q@X}{C_0})@U_1@\ldots @U_n\\
	 	=&w
	 	\end{array}
	 	$$	
	 	The cases $n=0$, or $n\geq 1$ and $U_1\neq Y$, are impossible. So let $n\geq 1$ and $U_1=Y$. We have $([Y/X]\accase{M}{x}{P@X}{y}{Q@X}{C_0})@U_2@\ldots@U_n=w$. But, by item 1.b of Lemma \ref{lem:subst-admissible-optimized-constructions} and Lemma \ref{lem:type-subst-@} (since $X\notin M,P,Q$), we know that 
	 	$$
	 	\begin{array}{cl}
	 		&([Y/X]\accase{M}{x}{P@X}{y}{Q@X}{C_0})@U_2@\ldots@U_n\\
	 		=&\accase{M}{x}{P@Y}{y}{Q@Y}{[Y/X]C_0}@U_2@\ldots@U_n\\
	 		=&w
	 	\end{array}	
	 	$$ 
	 	Thus, by IH using Lemma \ref{lem:special properties}, we have $w\in \FV M$.
	 	
	 	Item 2. Let	
	 	$$
	 	\begin{array}{cl}
	 	&\accase MxPyQC@\vec{U}\\
	 	=&(\Lb X.\accase{M}{x}{P@X}{y}{Q@X}{C_0})@U_1@\ldots @U_n\\
	 	=&\pair{M_1}{M_2}
	 	\end{array}
	 	$$	
	 	The cases $n=0$, or $n\geq 1$ and $U_1\neq Y$, are impossible. So suppose $n\geq 1$ and $U_1=Y$. We have $([Y/X]\accase{M}{x}{P@X}{y}{Q@X}{C_0})@U_2@\ldots@U_n=\pair{M_1}{M_2}$. Again, by item 1.b of Lemma \ref{lem:subst-admissible-optimized-constructions} and Lemma \ref{lem:type-subst-@} (since $X\notin M,P,Q$), we know that 
	 
	  $$
	  \begin{array}{cl}
	  &([Y/X]\accase{M}{x}{P@X}{y}{Q@X}{C_0})@U_2@\ldots@U_n\\
	  =&(\accase{M}{x}{P@Y}{y}{Q@Y}{[Y/X]C_0})@U_2@\ldots@U_n\\
	  =&\pair{M_1}{M_2}
	  \end{array}
	  $$
	  Thus, by IH, and using Lemma \ref{lem:special properties}, we have $\FV{M_1}=\FV{M_2}$.
	  
	  Item 3. Suppose $M$ is not specific. Then
	 
	 	$$
	 \begin{array}{cll}
	 	&\FV{\accase MxPyQC}\\
	 	=&\FV{\Lb X. \accase{M}{x}{P@X}{y}{Q@X}{C_0}}&\textrm{(by def. of $\accasesymb$)}\\
	 	=&\FV{\accase{M}{x}{P@X}{y}{Q@X}{C_0}}&\\
	 	=&\FV M \cup (\FV{P@X}\setminus x)\cup (\FV{Q@X}\setminus y)&\textrm{(by IH using Lemma \ref{lem:special properties})}\\
	 	=&\FV M \cup (\FV{P}\setminus x)\cup (\FV{Q}\setminus y)&\textrm{(by Lemma \ref{lem:FV_om_@})}
	 \end{array}
	 $$ 
	 Finally, suppose $M$ is specific. Then 
	 $$\FV{\accase MxPyQC}=\FV{\accase{M}{x}{P@X}{y}{Q@X}{C_0}}=\FV{M},$$
	 with the last equation given by IH.
\end{proof}

Now we move to $\aabbortsymb$. While for $\accasesymb$ the three statements of Lemma \ref{lem:miracle_case} had to be proved together, the similar statements for $\aabbortsymb$ can be proved separately. One of them was already given as Lemma \ref{lem:miracle_abort_FV}. The other two are the next two lemmas.

\begin{lem}\label{lem:miracle_abort_var}
	For all terms $M$ and types $A$ in $\fat$
	
	$$M \textup{~ var-special~} \implies \aabbort{M}{A} \textup{~ var-special}. $$
\end{lem}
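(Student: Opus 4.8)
The plan is to prove, by induction on the size of $A$ (say, the total number of connectives and quantifiers occurring in $A$), the statement: for every term variable $w$ and every list $\vec{U}$, $\aabbort{M}{A}@\vec{U}=w$ implies $w\in\FV{M}$. Since $\FV{\aabbort{M}{A}}=\FV{M}$ by Lemma~\ref{lem:miracle_abort_FV}, this statement is exactly the equivalent formulation of ``$\aabbort{M}{A}$ is var-special''; the hypothesis that $M$ is var-special will be used only in the base case. One preliminary observation disposes of all the ``degenerate'' subcases: call a term \emph{neutral} if it is an application, a type application, or a projection; a neutral term is never a term variable, and by the definition of $@$ the term $M'@U$ is again neutral whenever $M'$ is (if $M'$ is an application, type application, or projection it is not a $\lambda$-abstraction, not a generalisation, and not a pair, so $M'@U$ reduces to $M'U$ or $M'i$, again a neutral term). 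Hence, as soon as some prefix $\aabbort{M}{A}@U_1@\cdots@U_k$ is neutral, no further applications can make the whole term equal to $w$.

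Base case $A=X$: here $\aabbort{M}{X}@\vec{U}=(M@X)@\vec{U}$, which is an application of $M$ to a non-empty list of arguments, so if it equals $w$ then $w\in\FV{M}$ directly, since $M$ is var-special. Case $A=A_1\wedge A_2$: now $\aabbort{M}{A}=\pair{\aabbort{M}{A_1}}{\aabbort{M}{A_2}}$ is a pair, so if $\vec{U}$ is empty the term is not a variable, and if $\vec{U}$ begins with a term or a type variable the first application already yields a neutral term; the only remaining subcase is $\vec{U}=i,U_2,\dots,U_n$ with $i$ a projection symbol, and then $\aabbort{M}{A}@i=\aabbort{M}{A_i}$ by item~2 of Lemma~\ref{lem:optimized-admissible-absurdity-conversions}, so that $\aabbort{M}{A_i}@U_2@\cdots@U_n=w$ and the induction hypothesis applied to the smaller type $A_i$ (together with Lemma~\ref{lem:miracle_abort_FV}) gives $w\in\FV{M}$.

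Cases $A=B\supset C$ and $A=\forall X.A_0$ are treated uniformly. Here $\aabbort{M}{B\supset C}$ is a $\lambda$-abstraction, so unless $\vec{U}$ begins with a term $N'$ the first application produces a neutral term; otherwise $\aabbort{M}{B\supset C}@N'=\aabbort{M}{C}$ by item~1 of Lemma~\ref{lem:optimized-admissible-absurdity-conversions}, and the induction hypothesis applies to the smaller type $C$. Likewise $\aabbort{M}{\forall X.A_0}$ is a generalisation, so unless $\vec{U}$ begins with a type variable $Y$ the first application is neutral; otherwise $\aabbort{M}{\forall X.A_0}@Y=\aabbort{M}{[Y/X]A_0}$ by item~3 of Lemma~\ref{lem:optimized-admissible-absurdity-conversions}, and since $[Y/X]A_0$ has the same size as $A_0$, hence strictly smaller than $\forall X.A_0$, the induction hypothesis applies to it. This is precisely why the induction is carried out on the size of $A$ rather than on its structure, as $[Y/X]A_0$ is in general not a subformula of $\forall X.A_0$. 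Apart from that point, and from getting the elementary ``neutral term'' observation right, the argument is routine bookkeeping; I do not foresee a genuine obstacle.
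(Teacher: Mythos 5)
Your proof is correct and follows essentially the same route as the paper's: reduce, via Lemma~\ref{lem:miracle_abort_FV}, to showing $\aabbort{M}{A}@\vec{U}=w\implies w\in\FV{M}$, then induct on $A$ with the same case analysis, the same dismissal of the impossible head shapes, and the same key identities (the paper unfolds the definition and invokes items 3(b)/2(b) of Lemma~\ref{lem:subst-admissible-optimized-constructions}, which is exactly what your appeal to Lemma~\ref{lem:optimized-admissible-absurdity-conversions} packages). Your two refinements --- the explicit ``neutral term'' observation and inducting on the size of $A$ so that the induction hypothesis legitimately applies to $[Y/X]A_0$ in the $\forall$ case --- are harmless sharpenings of steps the paper treats implicitly.
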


\begin{proof}
	Suppose $M$ var-special. Due to Lemma \ref{lem:miracle_abort_FV}, it suffices to prove that for all term variable $w$, and all $\vec{U}=U_1,\cdots,U_n$,
	
	$$\aabbort M A @\vec{U}=w \Rightarrow w\in \FV M.$$
	
	The proof is by induction on $A$.
	
	Case $A=X$. Let $\aabbort M X @\vec{U}=M@X@\vec{U}=w$. Since $M$ is var-special, $w\in \FV{M}$.
	
	Case $A=A_1\wedge A_2$. Let $\aabbort MA @\vec{U}=\ipair i{\aabbort{M}{A_i}}@\vec{U}=w$. The case $n=0$ is impossible, since a pair is not a variable. The case $n\geq 1$ and $U_1\neq i$ is impossible, since an application is not a variable. So let $n\geq 1$ and $U_1=i$. We have $\aabbort M{A_i}@U_2@\ldots@U_n=w$. By IH, $w\in \FV{M}$.	
	
		Case $A=A_1\supset A_2$. Let $\aabbort MA @\vec{U}=(\lb z^{A_1}.\aabbort M {A_2})@\vec{U}=w$. Again, the cases $n=0$, or $n\geq 1$ and $U_1\neq N$ are impossible. So let $n\geq 1$ and $U_1=N$. We have $([N/z]\aabbort M {A_2})@U_2@\ldots@U_n=w$. But, by item 3.b of Lemma \ref{lem:subst-admissible-optimized-constructions} (since $z\notin M$), we know that $[N/z]\aabbort M {A_2}=\aabbort M{A_2}$. Hence $(\aabbort {M} {A_2})@U_2@\ldots@U_n=w$. By IH, we have $w\in \FV M$.
	
	Case $A=\forall X.A_0$. Let $\aabbort MA @\vec{U}=(\Lb X.\aabbort M {A_0})@\vec{U}=w$. Again, the cases $n=0$, or $n\geq 1$ and $U_1\neq Y$ are impossible. So let $n\geq 1$ and $U_1=Y$. We have $([Y/X]\aabbort M {A_0})@U_2@\ldots@U_n=w$. But, by item 2.b of Lemma \ref{lem:subst-admissible-optimized-constructions} (since $X\notin M$), we know that $[Y/X]\aabbort M {A_0}=\aabbort M{A_0}$. Hence $(\aabbort {M} {[Y/X]A_0})@U_2@\ldots@U_n=w$. By IH, we have $w\in \FV M$.
\end{proof}

\begin{lem}\label{lem:miracle_abort_pair}
	For all terms $M$ and types $A$ in $\fat$,

$$M \textup{~ pair-special~} \implies \aabbort{M}{A} \textup{~ pair-special}. $$
\end{lem}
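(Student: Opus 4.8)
The plan is to argue by induction on the type $A$ (more precisely, on the number of connectives occurring in $A$, so that in the $\forall$-case the induction hypothesis may be invoked for $[Y/X]A_0$, which has exactly the same shape as $A_0$). Fix a pair-special term $M$. Unfolding the definition of pair-special, we must show that whenever $\aabbort{M}{A}@\vec{U}=\pair{M_1}{M_2}$ one has $\FV{M_1}=\FV{M_2}$. The base case $A=X$ is immediate: here $\aabbort{M}{X}=M@X$, so $\aabbort{M}{X}@\vec{U}=M@X@\vec{U}$ is just an instance of $M@\vec{V}$, and pair-specialness of $M$ delivers the conclusion directly (equivalently, one may quote item 4 of Lemma \ref{lem:special properties}).

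For the three inductive cases the recurring observation is that the head constructor of $\aabbort{M}{A}$ forces the first entry $U_1$ of a nonempty $\vec{U}$ to be of one specific syntactic category if $\aabbort{M}{A}@\vec{U}$ is ever to be a pair: otherwise $@$ yields an application, which stays an application under any further $@$'s and so is never syntactically a pair. Concretely: for $A=A_1\wedge A_2$, if $\vec{U}$ is empty then $M_i=\aabbort{M}{A_i}$ and Lemma \ref{lem:miracle_abort_FV} gives $\FV{M_1}=\FV{M}=\FV{M_2}$, while if $\vec{U}$ is nonempty then $U_1$ must be a projection $i$, whence $\aabbort{M}{A_i}@U_2@\cdots@U_n=\pair{M_1}{M_2}$ and the IH for $A_i$ applies; for $A=A_1\supset A_2$, an empty $\vec{U}$ is impossible (an abstraction is not a pair), and otherwise $U_1$ must be a term $N$, so by item 3.(b) of Lemma \ref{lem:subst-admissible-optimized-constructions} (using $z\notin M$) we reduce to $\aabbort{M}{A_2}@U_2@\cdots@U_n=\pair{M_1}{M_2}$ and the IH for $A_2$ applies; for $A=\forall X.A_0$, again an empty $\vec{U}$ is impossible, and otherwise $U_1$ must be a type variable $Y$, so by item 2.(b) of Lemma \ref{lem:subst-admissible-optimized-constructions} together with $X\notin M$ we obtain $\aabbort{M}{[Y/X]A_0}@U_2@\cdots@U_n=\pair{M_1}{M_2}$ and the IH for $[Y/X]A_0$ applies.

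I do not anticipate a genuine obstacle: this lemma is much gentler than its $\accasesymb$-counterpart (Lemma \ref{lem:miracle_case}), precisely because the three ``miracle'' properties of $\aabbortsymb$ decouple and can be handled one at a time, with Lemma \ref{lem:miracle_abort_FV} already secured. The only point requiring a little care is the syntactic bookkeeping that $U_1$ must fall in the right category in each inductive case — i.e. that once $@$ has produced an application it can never later be a pair — but this is exactly the reasoning already carried out in the proof of Lemma \ref{lem:miracle_abort_var}, so it amounts to routine case analysis rather than a real difficulty.
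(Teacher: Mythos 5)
Your proof is correct and follows essentially the same route as the paper's: induction on $A$, the base case by pair-specialness of $M$, the $\wedge$-case splitting on $\vec{U}$ empty (via Lemma \ref{lem:miracle_abort_FV}) or headed by a projection, and the $\supset$- and $\forall$-cases using items 3.(b) and 2.(b) of Lemma \ref{lem:subst-admissible-optimized-constructions} to discharge the substitution before invoking the IH. Your explicit choice of induction measure (number of connectives, so the IH applies to $[Y/X]A_0$) and your remark that an application can never later become a pair make precise two points the paper leaves implicit, but the argument is the same.
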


\begin{proof}
	Suppose $M$ is pair-special. We want to show that, for all $M_1,M_2$, for all $\vec{U}=U_1,\cdots,U_n$, 
	$$\aabbort{M}{A}@\vec{U}=\pair{M_1}{M_2} \implies FV(M_1)=FV(M_2).$$
	
	The proof is by induction on $A$.
	
	Case $A=X$. Let $\aabbort M X @\vec{U}=M@X@\vec{U}=\pair{M_1}{M_2}$. Since $M$ is pair-special, $\FV{M_1}=\FV{M_2}$.
	
	Case $A=A_1\wedge A_2$. Let $\aabbort MA @\vec{U}=\ipair i{\aabbort{M}{A_i}}@\vec{U}=\pair{M_1}{M_2}$. If $n=0$ we have $M_1=\aabbort M {A_1}$ and $M_2=\aabbort M {A_2}$. By Lemma \ref{lem:miracle_abort_FV} we have $\FV{M_1}=\FV{M}=\FV{M_2}$. If $n\geq 1$, the case $U_1\neq i$ is impossible, because an application is not a pair. So let $U_1=i$. We have $\aabbort M{A_i}@U_2@\ldots@U_n=\pair{M_1}{M_2}$. By IH, $\FV{M_1}=\FV{M_2}$.
	
	Case $A=A_1\supset A_2$. Let $\aabbort MA @\vec{U}=(\lb z^{A_1}.\aabbort M {A_2})@\vec{U}=\pair {M_1}{M_2}$. The case $n=0$ is impossible, because an abstraction is not a pair. The case $n\geq 1$ and $U_1\neq N$ is impossible, because an application is not a pair. So let $n\geq 1$ and $U_1=N$. We have $([N/z]\aabbort M {A_2})@U_2@\ldots@U_n=\pair{M_1}{M_2}$. But, by item 3.b of Lemma \ref{lem:subst-admissible-optimized-constructions} (since $z\notin M$), we have $([N/z]\aabbort M {A_2})=\aabbort{M}{A_2}$. Hence $(\aabbort {M} {A_2})@U_2@\ldots@U_n=\pair{M_1}{M_2}$. Thus, by IH, we have $\FV{M_1}=\FV{M_2}$.
	
	Case $A=\forall X.A_0$. Let $\aabbort MA @\vec{U}=(\Lb X.\aabbort M {A_0})@\vec{U}=\pair{M_1}{M_2}$. Again, the cases $n=0$, or $n\geq 1$ and $U_1\neq Y$ are impossible. So let $n\geq 1$ and $U_1=Y$. We have $([Y/X]\aabbort M {A_0})@U_2@\ldots@U_n=\pair{M_1}{M_2}$. But, by item 2.b of Lemma \ref{lem:subst-admissible-optimized-constructions} (since $X\notin M$), we have $[Y/X]\aabbort M {A_0}=\aabbort{M}{[Y/X]A_0}$. Hence $(\aabbort {M} {[Y/X]A_0})@U_2@\ldots@U_n=\pair{M_1}{M_2}$. Thus, by IH, we have $\FV{M_1}=\FV{M_2}$.
\end{proof}

\begin{lem}\label{lem:om_var_pair_special}
For all terms $P$ in $\ipc$, $\om P$ is both var-special and pair-special.
\end{lem}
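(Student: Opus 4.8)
The plan is to prove the statement by induction on the structure of the $\ipc$-term $P$, splitting the constructors of $\ipc$ into two groups according to the outermost shape of $\om P$. In the first group are $P = MN$, $P = \proj iM$, $P = \case M{x^A}{P_1}{y^B}{P_2}C$ and $P = \abort MA$, where $\om P$ is obtained by feeding translated subterms into one of the combinators $@$, $\accasesymb$ or $\aabbortsymb$; here the two properties follow at once from the induction hypothesis together with the closure lemmas already established. In the second group are $P = x$, $P = \lb x^A.M$, $P = \pair MN$ and $P = \injn iMAB$, where $\om P$ is, respectively, $x$, $\llb x^{\om A}.\om M = \lb w.(\lb x.\om M)w$, $\ppair{\om M}{\om N} = \pair{\proj1{\pair{\om M}{\om N}}}{\proj2{\pair{\om M}{\om N}}}$, and $\aiinjn i{\om M}{\om A}{\om B} = \Lb X.\lb w.(\proj iw)\om M$; here no induction hypothesis is needed, since the two properties can be read off directly from the rigid shape of these terms (they even hold with arbitrary $\fat$-terms in the subterm positions).

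For the first group I would argue as follows. If $P = MN$ then $\om P = \om M@\om N$ and, $\om M$ being var-special and pair-special by the induction hypothesis, items 2 and 4 of Lemma \ref{lem:special properties} give that $\om M@\om N$ is var-special and pair-special. If $P = \proj iM$ then $\om P = \om M@i$ and, again by the induction hypothesis, items 3 and 4 of Lemma \ref{lem:special properties} apply (item 3 uses that $\om M$ is simultaneously var-special and pair-special). If $P = \case M{x^A}{P_1}{y^B}{P_2}C$ then $\om P = \accase{\om M}x{\om{P_1}}y{\om{P_2}}{\om C}$ and, $\om M,\om{P_1},\om{P_2}$ being var-special and pair-special by the induction hypothesis, items 1 and 2 of Lemma \ref{lem:miracle_case} conclude. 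If $P = \abort MA$ then $\om P = \aabbort{\om M}{\om A}$, and Lemmas \ref{lem:miracle_abort_var} and \ref{lem:miracle_abort_pair} conclude.

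For the second group I would, in each case, compute $\om P@\vec U$ for an arbitrary list $\vec U$ and check that it is never a term variable, and that whenever it happens to be a pair its two components have the same set of free variables. For $\om x = x$, the term $x@\vec U$ is $x$ itself (empty $\vec U$) or an application/projection spine headed by $x$, so it is a variable only for the empty $\vec U$ — and then it is $x \in \FV x$ — and never a pair. For $\llb x.\om M = \lb w.(\lb x.\om M)w$, applying $@$ with a projection symbol or a type variable freezes it into an application/projection, whereas applying $@N$ with a term $N$ yields $(\lb x.\om M)N$ (using $w \notin \FV{\om M}$), after which every further $@$ only extends an application spine; so $\llb x.\om M@\vec U$ is never a variable and never a pair. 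For $\ppair{\om M}{\om N}$, which is literally the pair $\pair{R_1}{R_2}$ with $R_i = \proj i{\pair{\om M}{\om N}}$, applying $@i$ yields the projection term $R_i$ (followed by a spine) and applying anything else freezes it, so the result is never a variable and is a pair only for the empty $\vec U$, in which case $\FV{R_1} = \FV{\pair{\om M}{\om N}} = \FV{R_2}$. For $\aiinjn i{\om M}{\om A}{\om B} = \Lb X.\lb w.(\proj iw)\om M$, and using the freshness conditions $X \notin \FTV{\om M}$ and $w \notin \FV{\om M}$, the term either stays a $\Lambda$- or $\lambda$-abstraction, is frozen by a ``wrong kind'' of $@$, or — after the steps $@Y$ and then $@N$ — becomes $(\proj iN)\om M$ followed by a spine; in no case is it a variable or a pair. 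This exhausts the induction.

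I do not anticipate a genuine obstacle: all the delicate work is already done, in particular the simultaneous induction of Lemma \ref{lem:miracle_case} and the substitution facts it relies on (Lemmas \ref{lem:z-special} and \ref{lem:subst-admissible-optimized-constructions}). The only point that demands some care is the exhaustive bookkeeping in the second group — being careful to unfold a doubly guarded term such as $\Lb X.\lb w.(\proj iw)\om M$ through every possible prefix of a list $\vec U$, and recognising that the sole pair reachable from $\ppair{\om M}{\om N}$ by $@$-applications is the term itself, whose two components have equal free-variable sets.
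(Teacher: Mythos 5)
Your proposal is correct and follows essentially the same route as the paper: induction on $P$, with the elimination-like cases ($MN$, $\proj iM$, $\casesymb$, $\abrt{\cdot}$) handled by the induction hypothesis plus the closure lemmas (Lemma \ref{lem:special properties}, Lemma \ref{lem:miracle_case}, Lemmas \ref{lem:miracle_abort_var} and \ref{lem:miracle_abort_pair}), and the rigid cases ($x$, $\llb$, $\ppair{\cdot}{\cdot}$, $\aiinjsymb$) settled by direct case analysis of $\om P@\vec U$. The only cosmetic difference is that for $MN$ and $\proj iM$ you cite items 2--4 of Lemma \ref{lem:special properties} where the paper unfolds the same small arguments inline (via Lemma \ref{lem:FV_om_@} and pair-specialness), so the content is identical.
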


\begin{proof}
	The proof is by induction on $P$. Let $w$ be an assumption variable, and $M_1, M_2$ be terms in $\fat$. Let $\vec{U}=U_1,\ldots ,U_n$, with $n\geq 0$.
	
	Case $P=x$. 
	
	Assume that $\om P@\vec{U}= w$. That is, $x@U_1@\ldots @U_n= w$. But then $n=0$ and $x=w$. Therefore $w\in \FV x$. Therefore, $\om P$ is var-special.
	
	Assume that $\om P@\vec{U}= \pair{M_1}{M_2}$. Impossible since, $x@U_1@\ldots @U_n$ is never a pair. Therefore , $\om P$ is pair-special.
	
	Case $P=\lb x. P_0$. Then $\om P@\vec{U}=
	 (\lb w.(\lb x.\om {P_0})w)@U_1@\ldots @U_n$.
	
	If $n=0$ then $\om P@\vec{U}$ is neither a variable, nor a pair.
	
	If $n\geq 1$ and $U_1\neq N$ then $\om P@\vec{U}$ is neither a variable, nor a pair.
	
	If $n\geq 1$ and $U_1=N$, then $\om P@\vec{U}=((\lb x. \om {P_0})N)@U_2@\ldots @U_n=$\\$=(\lb x. \om {P_0})NU_2\ldots U_n$ which is neither a variable, nor a pair.
	
		Case $P=\pair {P_1}{P_2}$. Then $\om P@\vec{U}=
		 \pair{M 1}{M 2}@U_1@\ldots @U_n$, with $M=\pair{\om {P_1}}{\om {P_2}}$.
	
	If $n=0$ then $\om P@\vec{U}$ is not a variable, it is a pair the pair $\pair{M 1}{M 2}$, and $FV(M 1)=FV(M)=FV(M 2)$.
	
	If $n\geq 1$ and $U_1\neq i$ then $\om P@\vec{U}$ is neither a variable, nor a pair.
	
	If $n\geq 1$ and $U_1=i$, then $\om P@\vec{U}=(Mi)@U_2@\ldots@ U_n=MiU_2\ldots U_n$ which is neither a variable, nor a pair.
	
		Case $P=\injn i{P_0}{A}{B}$. Then $\om P@\vec{U}=
		 (\Lb X. \lb w. wi\om{P_0})@U_1@\ldots @U_n$, with $X,w\notin\om{P_0}$.
	
	If $n=0$ or $n= 1$ then $\om P@\vec{U}$ is neither a variable, nor a pair. 
	
	If $n\geq 2$ but not $U_1= Y$ and $U_2=N$ simultaneously then $\om P@\vec{U}$ is neither a variable, nor a pair.
	
	If $n\geq 2$ and $U_1=Y$ and $U_2=N$, then $\om P@\vec{U}=(Ni\om{P_0})@U_3@\ldots@ U_n=Ni\om {P_0}U_3\ldots U_n$ which is neither a variable, nor a pair.
	
		Case $P=P_0Q_0$. Then $\om P@\vec{U}= \om {P_0}@\om {Q_0}@U_1@\ldots @U_n$. 
	
	Assume $\om P@\vec{U}=w$. That is, $\om {P_0}@\om {Q_0}@U_1@\ldots @U_n=w$. By IH, $\om {P_0}$ is var-special, thus $w\in \FV{\om {P_0}}$. Therefore, by Lemma \ref{lem:FV_om_@}, $w\in \FV{\om {P_0}@\om {Q_0}}$. 
	 
	Assume $\om P@\vec{U}=\pair{M_1}{M_2}$. That is, $\om {P_0}@\om {Q_0}@U_1@\ldots @U_n=\pair{M_1}{M_2}$. By IH,$\om{P_0}$ is pair-special, hence $\FV{M_1}=\FV{M_2}$. 
	
		Case $P=P_0i$. Then $\om P@\vec{U}= \om {P_0}@i@U_1@\ldots @U_n$.
	
	Assume $\om P@\vec{U}=w$. That is, $\om {P_0}@i@U_1@\ldots @U_n=w$. By IH, $\om {P_0}$ is var-special, thus $w\in \FV{\om {P_0}}$. Let us prove that $w\in \FV{\om {P_0}@i}$. If $\om {P_0}$ is not a pair, then $\om {P_0}@i=\om {P_0}i$ and the result is immediate. If it is a pair, say $\om {P_0}=\pair{M_1}{M_2}$, then $\om {P_0}@i=M_i$. Since by IH $\om {P_0}$ is pair-special, we know that $\FV{M_1}=\FV{M_2}$, hence $\FV{\om{P_0}}=FV(M_i)$, thus $w\in \FV{\om {P_0}@i}$. 
	
	Assume $\om P@\vec{U}=\pair{M_1}{M_2}$. That is, $\om {P_0}@i@U_1@\ldots @U_n=\pair{M_1}{M_2}$. By IH, $\om{P_0}$ is pair-special, hence $\FV{M_1}=\FV{M_2}$. 
	
	Case $P=\case{M_0}{x}{P_0}{y}{Q_0}{C}$. By IH, $\om {M_0}$, $\om {P_0}$ and $\om {Q_0}$ are var-special and pair-special. Thus, by Lemma \ref{lem:miracle_case}, we know that $\accase{\om {M_0}}{x}{\om {P_0}}{y}{\om {Q_0}}{\om C}$ is var-special and pair-special. Hence $\om P$ is var-special and pair-special.
	
	Case $P=\abort{P_0}{A}$. By IH, we know that $\om {P_0}$ is var-special and pair-special. Thus, by Lemmas \ref{lem:miracle_abort_var} and \ref{lem:miracle_abort_pair}, we know that $\om P=\aabbort{\om {P_0}}{\om A}$ is var-special and pair-special.
\end{proof}

Finally, we are able to conclude:

\begin{proof}[Proof of Theorem \ref{theo:pii}]
	

$$
\begin{array}{rcll}
	\om{M_1}&=&\accase{\om M}{x_1}{\om {P_1}}{x_2}{\om{P_2}}{\om C\supset \om D}@\om N&\textrm{(def. of $\om{(\cdot)}$)}\\
	&=&(\lb z^{\om C}.\accase{\om M}{x_1}{\om {P_1}@z}{x_2}{\om{P_2}@z}{\om D}@\om N&\textrm{(def. of $\accasesymb$)}\\
	&=&[\om N/z]\accase{\om M}{x_1}{\om {P_1}@z}{x_2}{\om{P_2}@z}{\om D}&\textrm{(def. of $@$)}\\
	&=&\accase{\om M}{x_1}{[\om N/z](\om {P_1}@z)}{x_2}{[\om N/z](\om{P_2}@z)}{\om D}&\textrm{(*)}\\
	&=&\accase{\om M}{x_1}{\om {P_1}@\om N}{x_2}{\om{P_2}@\om N}{\om D}&\textrm{(**)}\\
	&=&\om {M_2}& \textrm{(def. of $\om{(\cdot)}$)}
\end{array}
$$

Justification $(*)$. The equality is by Lemma \ref{lem:z-special}. Note that, by definition of $\accasesymb$, $z\notin \om M, \om {P_1}, \om {P_2}$. By Lemma \ref{lem:om_var_pair_special}, we know that $\om{P_1}, \om {P_2}$ are var-special, hence $\om{P_1}, \om {P_2}$ are $z$-special. Hence, by item 1.\ of Lemma \ref{lem:special properties}, $\om{P_1}@z, \om {P_2}@z$ are $z$-special. So the conditions for the application of Lemma \ref{lem:z-special} are satisfied.

Justification $(**)$. From $z\notin \om {P_i}$, it follows $\om{P_i}\neq z$ and $[\om N/z]\om {P_i}=\om{P_i}$, for each $i=1,2$. The equality follows by item 1.\ of Lemma \ref{lem:term-subst-@}.
\end{proof}
	
\section{Discussion}\label{sec:discussion}

The simulation theorem (Theorem \ref{thm:simulation}) is stated with $\om{M}\too\om{N}$, where $\too$ is either $\to_{\beta}^*$, $\to_{\beta\eta}^*$, or $=$. From the statement alone, the possibility exists that the translation collapsed all the reduction steps of the source. But the inspection of the proof quickly shows this is not the case. Already the base cases of $\beta$-reduction show strict preservation of reduction at root position. The inspection of some of the inductive cases, together with the items of Corollary \ref{cor-lem:additional-compatibility-rules-1} and Lemma \ref{lem:additional-compatibility-rules-2} which speak about $\beta$-reduction, guarantees that other cases of $\beta$-reduction are also strictly preserved (i.e.\ preserved without collapse). The following corollary makes these observations more precise (the terminology ``head reduction'' used in it is coherent with the use of that designation in the theory of the untyped $\lb$-calculus).

\begin{cor}[Head $\beta$-reduction is strictly preserved] Recall in $\ipc$ $\beta=\beta_{\supset}\cup\beta_{\wedge}\cup\beta_{\vee}$. Let \emph{head} $\beta$-reduction, denoted $\to_{\beta}^{head}$, be the closure of $\beta$ under the rules in Fig.~\ref{fig:head-reduction}. If $M\to_{\beta}^{head}M'$ in $\ipc$ then $\om M \to_{\beta}^+\om N$ in $\fat$.
\end{cor}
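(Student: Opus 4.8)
The plan is to argue by induction on the derivation of $M\to_\beta^{head}M'$. The base cases are the root $\beta$-redexes — one of $\betai$, $\betac$, $\betad$ fired at the root of $M$ — and these I would treat exactly as the matching base cases in the proof of Theorem~\ref{thm:simulation}; the only extra observation needed is that the reduction from $\om M$ down to $\om{(M')}$ exhibited there begins with at least one genuine $\beta$-step at the root. This is immediate from Lemma~\ref{lem:admissible-beta-eta-implication-conj} for $\betai$ and $\betac$, and from Lemma~\ref{lem:optimized-admissible-beta-disjunction} followed by Lemma~\ref{lem:optimized-translation-commutes-subst} for $\betad$. Hence $\om M\to_\beta^+\om{(M')}$ in every base case.

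For the inductive step I would run through the closure rules of Fig.~\ref{fig:head-reduction}, in each of which $M$ places the reduced subterm $M_0$ — with $M_0\to_\beta^{head}N_0$ — in a head position, so that the induction hypothesis supplies $\om{M_0}\to_\beta^+\om{N_0}$. When $M_0$ sits in the body of a $\lambda$-abstraction, in the function position of an application $M_0N$, inside an injection $\injn i{M_0}{A}{B}$, or as the principal argument of a $\case{M_0}{x}{P}{y}{Q}{C}$ or of an $\abort{M_0}{A}$, the step is straightforward: one applies, respectively, item~5(b) of Corollary~\ref{cor-lem:additional-compatibility-rules-1}, the $M@N$-clause of its item~1(b), item~1 of Lemma~\ref{lem:additional-compatibility-rules-2}, and items~2 and~5 of that same lemma (for the last two invoking the ``Moreover \ldots $\to_\beta^+$'' strengthening), and iterates along the $\beta$-path from the induction hypothesis. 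Since each of these $\beta$-refined clauses carries $\to_\beta^+$ to $\to_\beta^+$, the step count stays positive and $\om M\to_\beta^+\om{(M')}$ follows.

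The delicate case is the principal position of a projection, $\proj i{M_0}$: here the only applicable refined statement, item~1(b) of Corollary~\ref{cor-lem:additional-compatibility-rules-1}, delivers merely $\pproj i{\om{M_0}}\to_\beta^*\pproj i{\om{N_0}}$, so a priori the image step could collapse to a syntactic identity. I would rule this out by analysing the shape of $\om{M_0}$. If $\om{M_0}$ is not a pair, then $\pproj i{\om{M_0}}=\proj i{\om{M_0}}$, ordinary compatibility of $\to_\beta$ gives $\proj i{\om{M_0}}\to_\beta^+\proj i{\om{N_0}}$, and at most one further $\betac$-step turns $\proj i{\om{N_0}}$ into $\pproj i{\om{N_0}}$. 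If $\om{M_0}$ is a pair $\pair{B_1}{B_2}$, then $M_0$ cannot be a pairing term (which is head-normal at the top, so admits no $M_0\to_\beta^{head}N_0$); a structural analysis of the translation of the kind carried out in Section~\ref{sec:special} shows that in that case the head redex of $M_0$ lives in a subterm $M_1$ that the translation copies into both $B_1$ and $B_2$, so the strict reduction $\om{M_1}\to_\beta^+\om{N_1}$ from the induction hypothesis propagates strictly into $B_i=\pproj i{\om{M_0}}$ via the refined clauses of Lemma~\ref{lem:additional-compatibility-rules-2}, again yielding $\to_\beta^+$; pair-specialness of images (Lemma~\ref{lem:om_var_pair_special}) is what guarantees that $B_1$ and $B_2$ genuinely share that material.

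So the overarching obstacle is exactly the matter of \emph{non-collapse}: the plain compatibility statements of Corollary~\ref{cor-lem:additional-compatibility-rules-1} and Lemma~\ref{lem:additional-compatibility-rules-2} only certify $\to_\beta^*$ — which a priori would allow the image of a single source step to be one and the same $\fat$-term at both ends — so the whole argument has to be routed through their strict, $\beta$-refined versions, and in the one head context (the projection) where even the refined version falls short of strictness one must exploit the structural fact, made precise by the case analysis above and ultimately by pair-specialness, that a head occurrence of a subterm is never, in the image, at once duplicated and erased.
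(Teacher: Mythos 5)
Your core argument is the paper's own (the paper only sketches it in Section~\ref{sec:discussion}): base cases read off from the base cases of Theorem~\ref{thm:simulation}, observing that at least one genuine root step is performed, and inductive cases routed through the $\beta$-refined items of Corollary~\ref{cor-lem:additional-compatibility-rules-1} and Lemma~\ref{lem:additional-compatibility-rules-2}, which carry $\to_{\beta}^+$ to $\to_{\beta}^+$. Two remarks on the cases you list: you omit the two pair-component rules of Fig.~\ref{fig:head-reduction} (immediate from items 3(b) and 4(b) of Corollary~\ref{cor-lem:additional-compatibility-rules-1}); and for the $\betad$ base case, Lemma~\ref{lem:optimized-admissible-beta-disjunction} gives $\to^+_{\beta\eta}$, not pure $\to^+_{\beta}$, once the type of the branches is non-atomic, so what you actually get there is strictness of the simulation ($\beta\eta$ with at least one step) rather than the literal $\to_{\beta}^+$ you claim -- a looseness the corollary's own wording shares.

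The more substantive issue is your ``delicate case'' $\proj i{M_0}$: it is not one of the closure rules of Fig.~\ref{fig:head-reduction}. Head reduction is defined precisely so as to exclude the positions where only $\to_{\beta}^*$ is available -- projection subjects, argument positions of applications, and case branches -- so the entire third paragraph of your proposal addresses a case the statement does not contain. That is fortunate, because the argument you sketch for it would not go through as stated: pair-specialness (Lemmas~\ref{lem:miracle_case} and~\ref{lem:om_var_pair_special}) only guarantees that the two components of any pair arising from an image term have the \emph{same free variables}; it does not say that a redex contracted in one component has a residual in the other, so it cannot by itself rule out that the contracted material is erased by $\pproj i{(\cdot)}$. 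If one wanted to extend the corollary to projection contexts, a genuinely finer structural analysis of which image terms are pairs (and how their components are built) would be needed; for the corollary as stated, simply drop that case.
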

%
%
\begin{figure}\caption{Head reduction in $\ipc$}\label{fig:head-reduction}
	$$
	\begin{array}{c}
		\infer{\lb x. M\,\to\,\lb x. M'}{M\,\to\,M'}\qquad\infer{MN\,\to\,M'N}{M\,\to\,M'}\\ \\		\infer{\pair{M}{N}\,\to\,\pair{M'}{N}}{M\,\to\,M'}\qquad\infer{\pair{M}{N}\,\to\,\pair{M}{N'}}{N\,\to\,N'}\\ \\
		\infer{\injn i{M}AB\,\to\,\injn i{M'}AB}{M\,\to\,M'}\qquad\infer{\case {M}{x}P{y}QC\,\to\,\case {M'}{x}P{y}QC}{M\,\to\,M'}\\ \\
		\infer{\abort {M}C\,\to\,\abort {M'}C}{M\,\to\,M'}
	\end{array}
	$$
\end{figure}


Notice that the fact that $\om{(\cdot)}$ collapses commuting conversions prevents larger subsets of $\ipc$'s $\beta$-reduction from being strictly simulated. Suppose $\twoheadrightarrow$ is a subset of $\to_{\beta}^*$ closed under the rule $N\twoheadrightarrow N' \implies MN\twoheadrightarrow MN'$. Suppose $N\twoheadrightarrow N'$. Then $\abort{M}{C\supset D}N\twoheadrightarrow\abort{M}{C\supset D}N'$ in $\ipc$ but $\om{(\abort{M}{C\supset D}N)}=\om{(\abort{M}{C\supset D}N')}$, because both of these terms are equal to $\om{\abort{M}{D}}$, due to the collapse of $\abi$-reduction.

The simulation theorem states that the new translation preserves reduction. Does it preserve normal forms? For trivial reasons, the answer is ``no'': just consider the way $\lb$-abstraction or pairs are translated. But there is a less trivial and more interesting example.

Consider in $\ipc$ the $\beta$-normal form $P:=\case{M}{x_1}{P_1}{x_2}{P_2}{C\supset D}$. Then 
$\om{P}=
\lb z^{\om{C}}.\accase{\om{M}}{x_1}{\om{P_1}@z}{x_2}{\om{P_1}@z}{\om{D}}$. If $P_i$ is $\lb w.Q_i$, then $\om{P_i}@z$ is a redex, namely $(\lb w.\om{Q_i})z$, occurring in $\om{P}$.

This example is not a defect specific of translation $\om{(\cdot)}$. In the translation $\am{(\cdot)}$ from \cite{EspiritoSantoFerreira2020}, the redex $(\lb w.\am{Q_i})z$ occurs in $\am{P}$ as well; and if we consider the translation $\cm{(\cdot)}$ from \cite{FerreiraFerreira2009,FerreiraFerreira2013}, already $\cm{P}$ itself is a redex, whether $P_i$ is an abstraction or not.

The defect, if we may say so, is in the concept of normal form in $\ipc$. We suggest $\ipc$ proofs should also be normalized w.~r.~t.~the rules
$$
\begin{array}{rcl}
\case{M}{x_1}{P_1}{x_2}{P_2}{C\supset D} & \to & \lb z^C.\case{M}{x_1}{P_1@z}{x_2}{P_2@z}{D}\\
\case{M}{x_1}{P_1}{x_2}{P_2}{C_1\wedge C_2} & \to & \ipair{i}{\case{M}{x_1}{{P_1}@{i}}{x_2}{{P_2}@{i}}{C_i}}
\end{array}
$$
Notice here the operator $@$ is defined in $\ipc$. In the resulting notion of normal form, the conclusion of an elimination of disjunction can only be a disjunction, absurdity, or a variable. This is a restriction known not to break the completeness of the calculus \cite{FerrariFiorentiniJAR2019}.

\section{Final remarks}\label{sec:final}


The embedding of $\ipc$ into second-order logic is stable at the level of formulas, but has many variants at the level of proofs (and proof terms). While formulas are always translated using the second-order definitions of disjunction and absurdity, proofs can be translated either as implicitly done in \cite{Prawitz65} (see \cite{EspiritoSantoFerreira2021}), making full use of the elimination rule of the second-order quantifier; or as successive translation into $\fat$: as in \cite{FerreiraFerreira2009,FerreiraFerreira2013}, making use of instantiation overflow; as in \cite{PistoneTranchiniPetrolo2021}, optimizing the previous idea; as in our \cite{EspiritoSantoFerreira2020}, making use of the admissibility in $\fat$ of the elimination rules for disjunction and absurdity; or, finally, as in the present paper, optimizing the previous idea. In this spectrum of translations into $\fat$, increasingly better simulations of the commuting conversions are achieved, ending in their complete elimination obtained here.

Besides the elimination of commuting conversions, how good is the representation of $\ipc$ into $\fat$ induced by the new translation proposed here? The first question to answer in this direction is that of faithfulness of the translation, but that may require a separate paper as \cite{FerreiraFerreiraSL2015}. The question of preservation of normal forms, briefly touched in Section \ref{sec:discussion}, deserves a second look, maybe not necessarily in connection with the translation introduced here.

A final word about methodology. As in our previous papers \cite{EspiritoSantoFerreira2020,EspiritoSantoFerreira2021}, our development relies on the use of proof terms. It seems to us such choice had two decisive advantages. The first is that the highly bureaucratic argument needed in Section \ref{sec:special} to deal with the commuting conversion $\pii$ would be practically impossible with a different choice of notation. The second is that the study of the embeddings of $\ipc$ into $\fat$ becomes a study of translations between two different $\lb$-calculi, and this suggests the import of techniques from computer science. Here we imported the technique of ``compile-time optimization''  from programming language theory, specifically the reduction ``on the fly'' of the administrative redexes \cite{PlotkinTCS1975}.



\bibliographystyle{elsarticle-num} 
\bibliography{bibrefs}





\end{document}
\endinput